\documentclass[10pt]{article}

\usepackage{url}
\usepackage{mathtools}
\usepackage{amssymb}
\usepackage{amsthm}
\usepackage{latexsym}
\usepackage[shortlabels]{enumitem}
\usepackage{dsfont}
\usepackage{appendix}
\usepackage{color}
\usepackage[utf8]{inputenc}
\usepackage[T1]{fontenc}
\usepackage{geometry}
\usepackage{todonotes}
\usepackage{lmodern}
\usepackage{anyfontsize}
\usepackage{stmaryrd}
\usepackage{comment}
\usepackage{bm}
\usepackage{mathrsfs}
\usepackage{mdframed}
\usepackage{natbib}
\usepackage[english]{babel}
\usepackage{cases}
\usepackage{graphicx,subcaption}
\graphicspath{{fig/}}
\usepackage{booktabs,tabularx}
\usepackage{aliascnt}
\newcolumntype{Y}{>{\raggedright\arraybackslash}X}

\usepackage{booktabs,siunitx}
\mdfsetup{middlelinecolor=blue,middlelinewidth=2pt,linewidth=0pt,backgroundcolor=blue!15,,roundcorner=10pt}

\allowdisplaybreaks[1]
\setcitestyle{numbers,open={[},close={]}}

\definecolor{red}{rgb}{0.7,0.15,0.15}
\definecolor{green}{rgb}{0,0.5,0}
\definecolor{blue}{rgb}{0,0,0.7}

\usepackage{hyperref}
\hypersetup{colorlinks, linkcolor={red}, citecolor={green}, urlcolor={blue}}
\numberwithin{equation}{section}
\usepackage[english]{cleveref}

\makeatletter \@addtoreset{equation}{section}

\newtheorem{theorem}{Theorem}[section]

\newaliascnt{assumption}{theorem}

\aliascntresetthe{assumption}

\newaliascnt{proposition}{theorem}
\newtheorem{proposition}[proposition]{Proposition}
\aliascntresetthe{proposition}

\newaliascnt{definition}{theorem}

\aliascntresetthe{definition}

\newaliascnt{lemma}{theorem}
\newtheorem{lemma}[lemma]{Lemma}
\aliascntresetthe{lemma}

\newaliascnt{example}{theorem}

\aliascntresetthe{example}

\newaliascnt{corollary}{theorem}

\aliascntresetthe{corollary}

\newaliascnt{remark}{theorem}
\newtheorem{remark}[remark]{Remark}
\aliascntresetthe{remark}

\newaliascnt{condition}{theorem}

\aliascntresetthe{condition}

\crefname{theorem}{theorem}{theorems}
\Crefname{theorem}{Theorem}{Theorems}

\crefname{assumption}{assumption}{assumptions}
\Crefname{assumption}{Assumption}{Assumptions}

\crefname{proposition}{proposition}{propositions}
\Crefname{proposition}{Proposition}{Propositions}

\crefname{definition}{definition}{definitions}
\Crefname{definition}{Definition}{Definitions}

\crefname{lemma}{lemma}{lemmas}
\Crefname{lemma}{Lemma}{Lemmas}

\crefname{example}{example}{examples}
\Crefname{example}{Example}{Examples}

\crefname{corollary}{corollary}{corollaries}
\Crefname{corollary}{corollary}{Corollaries}

\crefname{remark}{remark}{remarks}
\Crefname{remark}{remark}{Remarks}

\crefname{condition}{condition}{conditions}
\Crefname{condition}{Condition}{Conditions}

\newcommand\frF{F}
\newcommand\frG{G}

\newcommand{\smallertext}[1]{\text{\fontsize{5}{5}\selectfont$#1$}}
\newcommand{\smalltext}[1]{\text{\fontsize{4}{4}\selectfont$#1$}}
\newcommand{\tinytext}[1]{\text{\fontsize{3}{3}\selectfont$#1$}}

\newcommand{\vertiii}[1]{{\left\vert\kern-0.25ex\left\vert\kern-0.25ex\left\vert#1 \right\vert\kern-0.25ex\right\vert\kern-0.25ex\right\vert}}

\def\E{\mathbb{E}}
\def\F{\mathbb{F}}

\def\N{\mathbb{N}}
\def\P{\mathbb{P}}

\def\R{\mathbb{R}}

\def\Ac{\mathcal{A}}

\def\Ec{\mathcal{E}}
\def\Fc{\mathcal{F}}

\def\Oc{\mathcal{O}}

\def\Sc{\mathcal{S}}

\def\Zc{\mathcal{Z}}

\def\d{\mathrm{d}}

\def\Sum{\displaystyle\sum}%%%%%%%%%
 \newcommand{\Diag}{\mathrm{Diag}}
 
 \newcommand{\tr}{\mathrm{Tr}}

\DeclareMathOperator{\spr}{spr}

\makeatletter
\newcommand*\bigcdot{\mathpalette\bigcdot@{.85}}
\newcommand*\bigcdot@[2]{\mathbin{\vcenter{\hbox{\scalebox{#2}{$\m@th#1\bullet$}}}}}

\begin{document}

	\title{Optimal incentive scheme for ESG disclosure}

	\author{Imen {\sc{Ben Tahar}}\footnote{CEREMADE, Université Paris Dauphine--PSL and Finance for Energy Market Research Centre
(FIME), Paris, France, imen@ceremade.dauphine.fr.}
			\and Dylan {\sc{Possama{\"{i}}}}\footnote{ETH~Z{\"{u}}rich, Department of Mathematics, Switzerland, dylan.possamai@math.ethz.ch. This author gratefully acknowledges partial support by the SNF project MINT 205121-21981.}
			\and Xiaolu {\sc{Tan}}\footnote{Department of Mathematics, The Chinese University of Hong Kong. xiaolu.tan@cuhk.edu.hk. This author gratefully acknowledges support by Hong Kong RGC General Research Fund (projects 14302622) and the faculty direct grant.}}

	\date{\today}

	\maketitle

	\begin{abstract}
This paper characterises optimal incentive schemes for ESG disclosure in a continuous-time principal--agent setting. We model a risk-averse principal (\emph{e.g.}, a platform or standard-setter) contracting with a team of heterogeneous agents whose disclosure signals are each correlated with a traded climate risk factor. The optimal contract balances incentive provision against the variance of aggregate payouts by leveraging three instruments: own-signal loading, cross-signal loadings across agents, and hedging tilts on the traded asset. We derive closed-form linear optimal controls in a tractable linear--quadratic--Gaussian framework. When the principal is nearly risk-neutral, the contract uses the traded asset purely to hedge the specific `enforcement risk' generated by high-powered incentives. As the principal’s risk aversion increases, the optimal scheme converges to a `market-neutral' regime where aggregate asset exposure is eliminated and the cross-signal structure tightens to an `identity pooling' constraint. We characterise this limit analytically as a constrained quadratic program governed by an M-matrix. In the high-risk-aversion regime, heterogeneity creates genuinely new effects absent under symmetry: the cross-section of $S$-tilts must change sign (unless degenerate), and an agent's own-signal diagonal $z^{\smallertext{Q},i,i,\star}$ can turn negative when that row is too strongly exposed to the common traded factor relative to the rest of the group. The results provide a theoretical foundation for `mixed' compensation structures in Regenerative Finance (ReFi), rationalising the use of both stable payments and volatile governance tokens to optimise risk-sharing.

		\medskip
		\noindent{\bf Key words:} contract theory; principal–agent problem; ESG disclosure; continuous-time models; moral hazard; sustainable finance.\vspace{5mm}
	\end{abstract}

\section{Introduction}
\label{sec:intro}

Sustainability disclosure has become important for both financial markets and corporate decisions. Institutional investors use climate and ESG information in portfolio allocation and stewardship, see \citeauthor*{krueger2020importance} \cite{krueger2020importance}. Carbon exposures are priced in the cross-section of returns, see \citeauthor*{bolton2021do} \cite{bolton2021do}, and investors demand compensation for carbon tail risk, see \citeauthor*{ilhan2021carbon} \cite{ilhan2021carbon}. In equilibrium, sustainability preferences affect asset prices, expected returns, and firms' costs of capital, as in \citeauthor*{pastor2021sustainable} \cite{pastor2021sustainable}. Firms respond through investment choices and financing instruments, including green bonds, see \citeauthor*{flammer2021corporate} \cite{flammer2021corporate} and \citeauthor*{baldacci2022governmental} \cite{baldacci2022governmental}, and through greener investment policies under investor screening, see \citeauthor*{heinkel2001effect} \cite{heinkel2001effect}. These developments take place alongside new sustainability reporting rules, including the ISSB standards IFRS~S1 and IFRS~S2 and the European Union's CSRD/ESRS framework \cite{ifrsS1_2023,ifrsS2_2023,ec_csrd_2023,eu_esrs_2023}. They also connect to the broader debate on the objective of the firm, see \citeauthor*{hart2017companies} \cite{hart2017companies}.

\medskip
The difficulty is that ESG information is not a single clean variable. Environmental, social, and governance measures differ in observability, auditability, and timing. Some are close to physical quantities, such as emissions or energy use; others depend on internal processes, supply chains, or long-run transition plans. These signals are noisy, and they may be interpreted differently by investors, firms, rating agencies, and regulators. This helps explain why ESG ratings often disagree, as documented by \citeauthor*{berg2022aggregate} \cite{berg2022aggregate}, and why the same disclosure can be read differently depending on prior beliefs and salience, as in \citeauthor*{christensen2022why} \cite{christensen2022why}. These measurement frictions make incentive design difficult: stronger incentives can improve disclosure, but they also expose agents and principals to more risk.

\medskip
Our model focuses on one specific source of risk. In practice, ESG signals may be directly correlated with each other. Here, however, we deliberately abstract from direct cross-correlation among the signal processes $(Q^i)_{i\in\{1,\dots,n\}}$. Instead, each signal is correlated with a common traded factor $S$. This traded factor can be interpreted as a marketable climate-risk factor, a carbon-credit price, or a token whose value is exposed to climate-related risk. Thus agents are linked through their heterogeneous correlations with $S$, through the principal's objective over average compensation, and through the fact that the contract may load on several agents' signals at once.

\medskip
The paper studies how a principal should design disclosure incentives in this environment. The principal contracts with $n$ agents. Agent $i$ controls the drift of a signal $Q^i$, which represents the quality or quantity of ESG information produced by that agent. The principal can write contracts that depend on the whole signal vector $Q$ and on the traded factor $S$. The contract of agent $i$ may therefore include three types of exposure: a loading on his own signal $Q^i$, off-diagonal loadings on other signals $Q^j$, and a loading on the traded factor $S$. Agents are risk-averse and heterogeneous in effort costs, risk aversion, signal scale, and correlation with $S$. The principal is also risk-averse and evaluates average payoff across the team.

\medskip
The main trade-off is simple. Loading on $Q^i$ gives incentives to agent $i$, but it also makes compensation risky. Loading on $S$ can hedge part of this risk because each $Q^i$ is correlated with $S$. However, aggregate exposure to $S$ creates risk for the principal. The off-diagonal signal loadings help redistribute risk across contracts while preserving incentives. The problem is therefore to choose the own-signal loadings, the off-diagonal signal loadings, and the $S$-tilts jointly.

\medskip
We solve this problem in a linear--quadratic--Gaussian setting. The optimal contract is linear in the terminal signal increments and in the log-return of $S$. The optimal coefficients are obtained from an explicit quadratic maximisation problem. In the homogeneous case, we derive closed-form formulas for all coefficients and obtain their signs and limits. In the heterogeneous case, we give two transparent limiting regimes: the nearly risk-neutral principal and the highly risk-averse principal.

\medskip
The first regime is the limit $\gamma_{\smallertext{\rm P}}\longrightarrow0$. In this case, the principal does not penalise aggregate risk strongly, and the contracts decouple across rows. The $S$-tilt of agent $i$ hedges the risk created by his incentive exposure, so it has the opposite sign to $\rho_i$. The diagonal loading on $Q^i$ is positive. The off-diagonal loading of contract $i$ on signal $Q^j$ has the sign of $\rho_i\rho_j$. Thus, in the nearly risk-neutral case, the contract uses $S$ mainly as a hedge for the risk created by strong incentives.

\medskip
The second regime is the limit $\gamma_{\smallertext{\rm P}}\longrightarrow\infty$. In this case, the principal strongly penalises aggregate exposure. The solution converges to a constrained problem in which
\[
\sum_{i=1}^n z^{\smallertext S,i}=0,
\;
\sum_{i=1}^n z^{\smallertext Q,i,j}=1,\; j\in\{1,\dots,n\}.
\]
The first condition is market neutrality: aggregate exposure to the traded factor is eliminated. The second condition is an identity-pooling restriction: the total loading on each signal is fixed at one. We solve this constrained problem explicitly. After eliminating the signal-load multipliers, the limiting vector of $S$-tilts is characterised by an $n\times n$ linear system whose matrix is a nonsingular $M$-matrix. This gives both existence and sign information.

\medskip
A key implication is that heterogeneity matters. Under symmetry, the limiting aggregate $S$-tilt is zero because every individual $S$-tilt is zero. Under heterogeneity, this is no longer true. If the limiting vector of $S$-tilts is not identically zero, it must contain both positive and negative entries. Thus some agents are used to offset the common-factor exposure of others. This rebalancing can also change the sign of a diagonal signal loading. In the unconstrained signed-action model, an agent whose exposure to the traded factor is large relative to the rest of the group may receive a negative own-signal loading in the high-risk-aversion regime. This should not be read as a standard bonus reduction. Since the induced action is proportional to the diagonal loading, a negative diagonal means that the model prescribes a sign reversal in the action-loading. Economically, this is a malus-type effect generated by aggregate risk control.

\medskip
The paper relates to several literatures. On disclosure, it builds on the idea that information release depends on incentives, risk, and proprietary costs, as in \citeauthor*{dye1985disclosure} \cite{dye1985disclosure} and \citeauthor*{verrecchia2001essays} \cite{verrecchia2001essays}, and on the link between information quality and the cost of capital, see \citeauthor*{healy2001information} \cite{healy2001information}, \citeauthor*{lambert2007accounting} \cite{lambert2007accounting}, and \citeauthor*{leuzi2016economics} \cite{leuzi2016economics}. On incentives, it connects to moral hazard, teams, and relative performance evaluation, starting with \citeauthor*{holmstrom1979moral} \cite{holmstrom1979moral,holmstrom1982moral}, \citeauthor*{mookherjee1984optimal} \cite{mookherjee1984optimal}, and \citeauthor*{gibbons1990relative} \cite{gibbons1990relative}. It also relates to the linear-contract benchmark of \citeauthor*{holmstrom1987aggregation} \cite{holmstrom1987aggregation}. In continuous time, the paper is close to \citeauthor*{sannikov2008continuous} \cite{sannikov2008continuous}, \citeauthor*{williams2008dynamic} \cite{williams2008dynamic}, and to Brownian principal--agent models with several agents, including \citeauthor*{koo2008optimal} \cite{koo2008optimal}, \citeauthor*{cvitanic2017moral} \cite{cvitanic2017moral,cvitanic2018dynamic}, \citeauthor*{elie2019contracting} \cite{elie2019contracting}, \citeauthor*{elie2019tale} \cite{elie2019tale}, \citeauthor*{elie2021mean} \cite{elie2021mean}, \citeauthor*{baldacci2021optimal} \cite{baldacci2021optimal}, and \citeauthor*{hernandez2024principal} \cite{hernandez2024principal}.

\medskip
One distinction is worth stressing. The off-diagonal signal loadings in our model are not classical relative-performance evaluation terms. Since the processes $Q^1,\dots,Q^n$ are independent, these loadings do not remove a common output-noise component from the signals themselves. Instead, they are used to manage the common traded-factor risk created by the correlations between each $Q^i$ and $S$. In this sense, they are cross-signal risk-sharing terms rather than standard peer-benchmarking terms.

\medskip
The model also speaks to blockchain-based disclosure and token-based reward systems, such as \citeauthor*{niu2024blockchain} \cite{niu2024blockchain}. In those settings, tokens and on-chain records can help make disclosure rewards verifiable and enforceable. Our model adds a risk-sharing layer: if token values are correlated with climate or carbon-market risk, then paying only in the token may impose too much systematic risk on agents. The optimal contract instead combines signal-based compensation with a carefully chosen exposure to the traded factor. This interpretation is relevant for Regenerative Finance (ReFi), where tokenomics is often used to connect environmental data with market value, see \citeauthor*{sorensen2023tokenized} \cite{sorensen2023tokenized}. It also relates to work on blockchain-based ESG disclosure and tokenised carbon credits, see \citeauthor*{harish2023blockchain} \cite{harish2023blockchain} and \citeauthor*{ballesteros2024tokenized} \cite{ballesteros2024tokenized}. In the carbon-sequestration example, $Q$ can represent oracle-verified project data, whose verification is technically difficult and delayed, see \citeauthor*{caldarelli2020understanding} \cite{caldarelli2020understanding}, while $S$ can represent a liquid carbon-credit or governance-token price.

\medskip
The contract should be interpreted as a net transfer claim indexed to $Q$ and $S$. The model allows signed exposures and does not impose limited liability or collateral constraints. Thus the results should not be read literally as long-only token grants. Rather, they describe the optimal risk-sharing benchmark in a setting where the principal can use both signal-based payments and traded-factor exposures.

\medskip
The main contributions are as follows. First, we derive closed-form optimal disclosure sensitivities in homogeneous teams, including signs, limits, and comparative statics. Second, we solve the heterogeneous high-risk-aversion limit as an explicit constrained quadratic program and show that the reduced system for the limiting $S$-tilts is governed by a nonsingular $M$-matrix. Third, we show that heterogeneity can generate mixed-sign $S$-tilts and, in the signed-action model, negative own-signal loadings in the high-risk-aversion regime.

\medskip
The rest of the paper is organised as follows. \Cref{sec:model} introduces the contracting model and reduces the principal's problem to a quadratic optimisation problem. \Cref{sec:homogeneous} studies the maximiser of this quadratic problem. It first treats the homogeneous case, then analyses the heterogeneous limits as $\gamma_\smallertext{\rm P}\longrightarrow 0$ and $\gamma_\smallertext{\rm P}\longrightarrow\infty$, and finally discusses the economic interpretation and numerical illustrations.
\medskip

{\small\paragraph{Notation and conventions.}
Throughout the paper, all vectors are understood as \emph{column vectors} unless explicitly stated otherwise. We write
\[
\N\coloneqq \{0,1,2,\dots\},
\;
\N^\star\coloneqq \N\setminus\{0\},
\;
\R^n\equiv \R^{n\times 1},
\]
and, for $(n,p)\in(\N^\star)^2$, $\R^{n\times p}$ for the space of real matrices with $n$ rows and $p$ columns. For $x\in\R^n$, we write $x_i$ for its $i$th component, $i\in\{1,\dots,n\}$. and for $A\in\R^{n\times p}$ we write $A^{i,j}$ or $A_{i,j}$ for its $(i,j)$ entry, $A^{i,:}$ for its $i$th row, and $A^{:,j}$ for its $j$th column, $(i,j)\in\{1,\dots,n\}\times\{1,\dots,p\}$. The transpose of a matrix $A$ is denoted by $A^\top$.

\medskip
For $n\in\N^\star$, we denote by $0_n$ and $\mathrm{1}_n$ the vectors of zeros and ones in $\R^n$, by $\mathrm{I}_n$ the identity matrix in $\R^{n\times n}$, and by
\[
J_n\coloneqq \mathrm{1}_n\mathrm{1}_n^\top\in\R^{n\times n},
\]
the all-ones matrix. We write $(e_i)_{i\in\{1,\dots,n\}}$ for the canonical basis of $\R^n$, and $(E_{i,j})_{(i,j)\in\{1,\dots,n\}\times\{1,\dots,p\}}$ for the canonical basis of $\R^{n\times p}$, where $E_{i,j}$ has a single $1$ in position $(i,j)$ and $0$ elsewhere.

\medskip
For vectors $(x,y)\in\R^n\times\R^n$, we use the Euclidean inner product and norm
\[
x\cdot y\coloneqq x^\top y,
\;
\|x\|\coloneqq \sqrt{x^\top x}.
\]
For a matrix $A\in\R^{n\times p}$, we use the Frobenius norm
\[
\|A\|\coloneqq \sqrt{\tr[AA^\top]}=\sqrt{\sum_{i=1}^n\sum_{j=1}^p |A_{i,j}|^2}.
\]
If $W\in\R^{m\times m}$ is symmetric positive semidefinite and $u\in\R^m$, we write
\[
\|u\|_\smallertext{W}^2\coloneqq u^\top W u.
\]
Whenever we write inequalities such as $x\ge y$ for vectors or $A\ge B$ for matrices, they are understood \emph{componentwise}. By contrast, for symmetric matrices, $A\succeq 0$ (resp.\ $A\succ 0$) means that $A$ is positive semidefinite (resp.\ positive definite) in the quadratic-form sense. For a matrix $A$, we write $\ker(A)$ and $\mathrm{range}(A)$ for its kernel and range. If $A$ is symmetric, $\lambda_{\min}(A)$ denotes its smallest eigenvalue.

\medskip
For $x=(x_1,\dots,x_n)^\top\in\R^n$, we denote by
\[
\Diag[x]\coloneqq \Diag[x_1,\dots,x_n],
\]
the diagonal matrix with diagonal entries $x_1,\dots,x_n$. We use $\otimes$ for the Kronecker product and $\odot$ for the Hadamard (entrywise) product. The operator $\mathrm{vec}$ stacks the columns of a matrix: if $A\in\R^{n\times p}$, then $\mathrm{vec}(A)\in\R^{np}$ is obtained by concatenating the columns of $A$ from left to right.

\medskip
For stochastic processes, $\Ec(X)$ denotes the Dol\'eans--Dade stochastic exponential of a semimartingale $X$. 

\medskip
A certain number of constants will appear throughout the paper. We systematically assume
\[
T>0,
\;
s_0>0,
\;
\mu\in\R,
\;
\sigma>0,
\;
q_0\in\R^n,
\]
and, for each $i\in\{1,\dots,n\}$
\[
\nu_i>0,
\;
c_i>0,
\;
\gamma_i>0,
\;
\rho_i\in(-1,1).
\]
We also assume throughout that $\gamma_\smallertext{\rm P}>0$, with the limiting case $\gamma_\smallertext{\rm P} \longrightarrow 0$ corresponding to a risk-neutral principal.

\medskip
For derivatives, fix some $m\in\N^\star$. if $\phi:\R^m\longrightarrow\R$ is differentiable, then $D\phi(x)\in\R^m$ denotes its gradient at $x$, written as a column vector, and $D^2\phi(x)\in\R^{m\times m}$ denotes its Hessian matrix. We also use coordinate-wise notation $\partial_{x_\smalltext{i}}\phi$ and $\partial^2_{x_\smalltext{i} x_\smalltext{j}}\phi$. When the argument is a matrix variable, derivatives are taken entry-wise. For instance, if $\phi$ depends on a matrix $z\in\R^{n\times n}$, then
\[
\partial_{z}\phi
\coloneqq
\big(\partial_{z^{\smalltext{i}\smalltext{,}\smalltext{j}}}\phi\big)_{(i,j)\in\{1,\dots,n\}^\smalltext{2}},
\]
and mixed second derivatives are understood coordinate-wise in the same way. In particular, in the block first-order conditions of \Cref{prop:FOC-block}, rows of $z^{\smallertext Q}$ index the recipient contract and columns index the underlying signal: thus $z^{\smallertext{Q},i,j}$ is the loading of contract $i$ on signal $Q^j$, and the diagonal $z^{\smallertext{Q},i,i}$ is the only coefficient that enters agent $i$'s action. Finally, in the large-$\gamma_\smallertext{\rm P}$ analysis, we write $\spr(L_n)$ for the spectral radius of a matrix $L_n$, so as not to confuse it with the correlation coefficients $\rho$.}

\section{Principal--multiagent model}\label{sec:model}

We fix a positive integer $n$ and a probability space $(\Omega,\Fc,\P)$ carrying $2n$ independent and real-valued $\P$--Brownian motions $(W^i)_{i\in\{1,\dots,n\}}$ and $(B^i)_{i\in\{1,\dots,n\}}$. We let $\F^n\coloneqq (\Fc^n_t)_{t\in[0,T]}$ be their natural, $\P$-completed filtration, where $T>0$ is a fixed time horizon. $\mathcal T_{0,T}$ denotes the set of $\F^n$--stopping times valued in $[0,T]$.

\medskip

	We define, for $i\in\{1,\dots,n\}$, the processes
	\[
		Q^i_t\coloneqq q^i_0+\nu_i B^i_t,\; t\in[0,T],\;
		S_t=s_0+\int_0^t\mu S_s\mathrm{d}s+\int_0^t\sigma S_s \sum_{i=1}^n \frac{1}{\sqrt{n}} \Big( \rho_i \d B^i_s + \sqrt{1- \rho_i^2} \d W^i_s \Big),\; t\in[0,T].
	\]
and we denote 
\[
  q_0\coloneqq (q_0^1,\dots,q_0^n)^\top\in\R^n,
  \;
   Q\coloneqq (Q^1,\dots,Q^n)^\top.
   \] 
   Notice that the processes $Q^1,\dots,Q^n$ are driven by independent Brownian motions and therefore are not mutually correlated in the present model. What couples the agents is that each $Q^i$ is correlated with the traded factor $S$, through the coefficient $\rho_i$, and that the principal evaluates average compensation at the team level.

	\medskip
	
We consider a contracting problem where a principal designs a rewarding scheme for $n$ agents who produce \emph{distinct} ESG signals, each correlated with the traded factor $S$. Each agent $i\in\{1,\dots,n\}$ makes effort to disclose ESG data in continuous time, and $Q^i$ denotes the corresponding data-quality signal for agent $i$. The rewarding scheme designed by the principal will be a random vector $\xi = (\xi^1,\ldots,\xi^n)$, where each $\xi^i$ is $\Fc_T^n$-measurable and can be interpreted as a \emph{net token position} (equivalently, a net transfer claim) indexed to $S$.
	
	\medskip
	Now of course, the efforts or actions  of the agents impact the distribution of the data quality. 
	These actions are modelled as $\F^n$-adapted and measurable processes $(\alpha^i)_{i\in\{1,\dots,n\}}$, valued in $\R$ such that
	\[
	\E^\P\Bigg[\Ec\Bigg(\sum_{i=1}^n\int_0^\cdot\frac{\alpha^i_s}{\nu_i}\mathrm{d}B^i_s-\sum_{i=1}^n\int_0^{\cdot} \frac{\rho_i\alpha^i_s}{\nu_i \sqrt{1- \rho_i^2}}\mathrm{d}W^i_s\Bigg)_T\Bigg]=1.
	\]
	We denote this set by $\Ac_n$. Then, for any $\alpha\in\Ac_n$, the probability measure $\P^\alpha$ on $(\Omega, \Fc^n_T)$, whose Radon--Nikod\'ym density with respect to $\P$ is given by
	\[
	\frac{\mathrm{d}\P^\alpha}{\mathrm{d}\P}\coloneqq \Ec\Bigg(\sum_{i=1}^n\int_0^\cdot\frac{\alpha^i_s}{\nu_i}\mathrm{d}B^i_s-\sum_{i=1}^n\int_0^{\cdot}\frac{\rho_i\alpha^i_s}{\nu_i \sqrt{1- \rho_i^2}}\mathrm{d}W^i_s\Bigg)_T,
	\]
	is well-defined, and we have for $i\in\{1,\dots,n\}$
	\begin{equation} \label{eq:dynamicQS}
		Q^i_t= q^i_0+\int_0^t\alpha^i_s\mathrm{d}s+\nu_i B^{i,\alpha}_t,\; t\in[0,T],\;
		S_t=s_0+\int_0^t\mu S_s\mathrm{d}s+\int_0^t\sigma S_s
			\sum_{i=1}^n \frac{1}{\sqrt{n}} \Big( \rho_i \d B^{i,\alpha}_s + \sqrt{1- \rho_i^2} \d W^{i,\alpha}_s \Big),\; t\in[0,T],
	\end{equation}
where, by Girsanov's theorem, for any $i\in\{1,\dots,n\}$, $B^{i,\alpha}$ and $W^{i,\alpha}$ are two $\P^\alpha$-independent $(\F^n,\P^\alpha)$--Brownian motions given by
\[
B^{i,\alpha}_t\coloneqq B^i_t-\int_0^t\frac{\alpha^i_s}{\nu_i}\mathrm{d}s,\; t\in[0,T],\; W^{i,\alpha}_t\coloneqq W^i_t+\int_0^t\frac{\rho_i\alpha^i_s}{ \nu_i\sqrt{1- \rho_i^2}}\mathrm{d}s,\; t\in[0,T].
\]

	We notice from \eqref{eq:dynamicQS} that, under the probability measure $\P^{\alpha}$ with different effort process $\alpha \in \Ac$, the data quality process $Q$ follows different dynamics, 
	but the distribution of $S$ stays the same.
	
	\medskip
	Notice also the fact that the controls are $\R$-valued is deliberate: throughout the paper, $\alpha^i$ should be interpreted as a \emph{signed} disclosure / reporting-drift control. Positive values raise the drift of $Q^i$, while negative values capture quality shading or under-disclosure. This convention is important below when discussing possible negative diagonal loadings in the optimal contract.

\subsection{Problem of the agents}
Fix $i\in\{1,\dots,n\}$ and let $\alpha^{-i}=(\alpha^j)_{j\neq i}$ be an $\R^{n-1}$-valued,
$\F^n$-adapted and measurable process. We denote by $\Ac^i(\alpha^{-i})$ the set of $\R$-valued,
$\F^n$-adapted and measurable processes $\alpha$ such that $\alpha\oplus_i\alpha^{-i}\in\Ac_n$,
where
\[
\alpha\oplus_i\alpha^{-i}
\coloneqq 
\big(\alpha^1, \dots, \alpha^{i-1}, \alpha, \alpha^{i+1}, \dots, \alpha^n\big)^\top.
\]

Consider an arbitrary agent $i \in \{ 1, \dots,n\}$. Given the rewards $\xi\coloneqq(\xi^1,\dots,\xi^n)^\top$ chosen by the principal and given the actions  of the other agents represented by the $\R^{n-1}$-valued, $\F^n$-adapted and measurable process  $\alpha^{-i}$, agent $i$ aims to maximise his utility
	\begin{equation} \label{eq:pb_Agenti}
		V^i_0(\xi,\alpha^{-i}) \coloneqq \sup_{\alpha\in\Ac^\smalltext{i}(\alpha^{\smalltext{-}\smalltext{i}})} \E^{\P^{\smalltext{\alpha}\smalltext{\oplus}_\tinytext{i}\smalltext{\alpha}^{\tinytext{-}\tinytext{i}}}} \bigg[ U_i \bigg( \xi^i S_T - \frac{c_i}{2} \int_0^T \alpha^2_t \d t \bigg) \bigg],
	\end{equation}
	where
	\[
		U_i(x) \coloneqq - \mathrm{e}^{- \gamma_\smalltext{i} x},\; x\in\R,\; \text{for some given risk-aversion $\gamma_i>0$.}
	\]
We write $\alpha^\star \in \mathrm{NE}(\xi)$ and say that the joint action $\alpha^\star \in \Ac_n$ is a Nash equilibrium for the rewarding scheme $\xi$ if, for every $i\in\{1,\dots,n\}$, with
\[
\alpha^{\star,-i}= \big(\alpha^{\star,1}, \dots, \alpha^{\star,i-1}, \alpha^{\star,i+1}, \dots, \alpha^{\star,n}\big)^\top,
\]
one has
\[
V^i_0(\xi,\alpha^{\star,-i})
=
\E^{\P^{\alpha^\star}} \bigg[ U_i \bigg( \xi^i S_T - \frac{c_i}{2} \int_0^T (\alpha^{\star,i}_t)^2  \d t \bigg) \bigg]
=
\sup_{\alpha\in\Ac^i(\alpha^{\star,-i})}
\E^{\P^{\alpha\oplus_i\alpha^{\star,-i}}} \bigg[ U_i \bigg( \xi^i S_T - \frac{c_i}{2} \int_0^T \alpha_t^2  \d t \bigg) \bigg].
\]
Finally, we assume that each agent has a fixed exogenous reservation utility $R_i<0$ with corresponding certainty equivalent $r_i\coloneqq  - \frac{1}{\gamma_i} \ln( - R_i)$: the agents enter a contractual relationship with the principal only if $V^i_0(\xi,\alpha^{\star,-i} ) \ge R_i$ for every $i\in\{1,\dots,n\}$, for some $\alpha^\star \in \mathrm{NE}(\xi)\neq \emptyset$.

\subsection{Problem of the principal}
Following \citeauthor*{hernandez2024principal} \cite{hernandez2024principal}, we let $\Xi$ be the set of admissible contracts: in words, those are $\Fc_T^n$-measurable, $\R^n$-valued random vectors $\xi = (\xi^1,\dots,\xi^n)^\top$, with appropriate integrability (see \cite{hernandez2024principal} for details and below) and such that there exists an associated Nash equilibrium $\alpha^\star(\xi)\in\mathrm{NE}(\xi)\neq\emptyset$ with $V_0^i(\xi,\alpha^{\star,-i})\ge R_i$ for every $i\in\{1,\dots,n\}$. The principal then solves the optimisation problem
\[
V^P\coloneqq \sup_{\xi\in\Xi} \sup_{\alpha \in \mathrm{NE}(\xi)}
\E^{\P^{\alpha}}
\Big[ U_\smallertext{\rm P} \big( (Q_T\cdot\mathds{1}_n - \xi\cdot\mathds{1}_n S_T) /n \big ) \Big],
\]
where 
\[
U_\smallertext{\rm P}(x)\coloneqq - \mathrm{e}^{- \gamma_\smalltext{\rm P} x},\; x\in\R,\; \text{for some given risk-aversion $\gamma_\smallertext{\rm P}>0$,}
\]
while the case $\gamma_\smallertext{\rm P} \longrightarrow 0$ is understood throughout as the risk-neutral limiting regime of this exponential criterion.

\subsubsection{Reformulation of the problem of the principal}
\citeauthor*{hernandez2024principal} \cite{hernandez2024principal} provide a reformulation of the principal's problem as a standard stochastic control problem. In our linear--quadratic environment, this reformulation yields a closed-form solution. To state it, we introduce the following notation.

\medskip
First, define the spaces
\begin{gather*}
\mathbb V^Q\coloneqq \bigg\{ Z^\smallertext{Q}:\R^{n\times n}\text{-valued, $\F^n$-predictable processes},\;
\int_0^T \|Z_u^Q\|^2\mathrm{d}u < \infty,\; \P\text{--a.s.} \bigg\},\\
\mathbb V^S\coloneqq \bigg\{ Z^\smallertext{S}:\R^n\text{-valued, $\F^n$-predictable processes},\;
\int_0^T \|S_u Z_u^\smallertext{S}\|^2\mathrm{d}u < \infty,\; \P\text{--a.s.} \bigg\}.
\end{gather*}
Then, for each $i\in\{1,\dots,n\}$, let $F^i:\R^{n\times n}\times\R^n\longrightarrow\R$ be the function
\begin{align*}
F^i(z^\smallertext{Q},z^\smallertext{S})
&=
\frac{(z^{\smallertext{Q},i,i})^2}{2c_i}
+\sum_{j\in\{1,\dots,n\}\setminus\{i\}}\frac{z^{\smallertext{Q},j,j}}{c_j}z^{\smallertext{Q},i,j}
-\frac{\gamma_i}{2}\sum_{j=1}^n \nu_j^2 (z^{\smallertext{Q},i,j})^2
-\frac{\gamma_i}{2}\sigma^2 (z^{\smallertext{S},i})^2\\
&\quad
+\mu z^{\smallertext{S},i}
-\gamma_i\frac{\sigma}{\sqrt n}z^{\smallertext{S},i}\sum_{j=1}^n \rho_j\nu_j z^{\smallertext{Q},i,j}.
\end{align*}
For each $y\in\R^n$ and $Z\coloneqq(Z^\smallertext{Q},Z^\smallertext{S})\in\mathbb V^Q\times\mathbb V^S$, denote by $Y^{y,Z}$ the $\R^n$-valued process defined by
\begin{align*}
Y_t^{y,Z,i}
=
y^i
-\int_0^t F^i\big(Z_u^\smallertext{Q},S_u Z_u^\smallertext{S}\big)\mathrm{d}u
+\int_0^t Z_u^{\smallertext{Q},i,:}\cdot \mathrm{d}Q_u
+\int_0^t Z_u^{\smallertext{S},i}\mathrm{d}S_u,
\; i\in\{1,\dots,n\}.
\end{align*}
We let $\mathcal Z$ be the set of processes $Z\coloneqq(Z^\smallertext{Q},Z^\smallertext{S})\in \mathbb V^Q\times \mathbb V^S$ such that, in addition
\begin{itemize}
\item[$(i)$] $a^\star(Z)\in\Ac_n$, where $a^\star(Z)$ is the $\R^n$-valued, $\F^n$-adapted process defined by
\[
a^\star(Z)_t^i\coloneqq \frac{Z_t^{\smallertext{Q},i,i}}{c_i},
\; \mathrm{d}t\otimes \mathrm{d}\P\text{--a.e.},\; i\in\{1,\dots,n\};
\]
\item[$(ii)$] for all $i\in\{1,\dots,n\}$ and all $\alpha^i\in\Ac^i(a^\star(Z)^{-i})$, there exists $q_{\alpha^i}>1$ such that
\[
\sup_{\tau\in\mathcal T_{\smalltext{0}\smalltext{,}\smalltext{T}}}
\E^{\P^{\smalltext{\alpha}^\tinytext{i}\smalltext{\oplus}_\tinytext{i} \smalltext{a}^\tinytext{\star}\smalltext{(}\smalltext{Z}\smalltext{)}^{\tinytext{-}\tinytext{i}}}}
\Big[
\big|U_i(Y_\tau^{0,Z,i})\big|^{q_{\smalltext{\alpha}^\tinytext{i}}}
\Big]
<\infty
\]
\item[$(iii)$] the stochastic exponential $M^\smallertext{Z}$ defined below in \emph{\Cref{eq:MZ}} is an $(\F^n,\P^{a^\smalltext{\star}(Z)})$-martingale.
\end{itemize}

If $\xi\in\Xi$ is an admissible contract, then a Nash equilibrium $\alpha^\star\in\mathrm{NE}(\xi)$ can be characterised as
\[
\alpha_t^{\star,i}=a^\star(Z)_t^i=\frac{Z_t^{\smallertext{Q},i,i}}{c_i},
\; \mathrm{d}t\otimes\mathrm{d}\P\text{--a.e.},\; i\in\{1,\dots,n\},
\]
where $(Y,Z^\smallertext{Q},Z^\smallertext{S})$ is a solution to the $n$-dimensional BSDE
\begin{align*}
Y_t^i
=
\xi^i S_T
+\int_t^T F^i\big(Z_u^\smallertext{Q},S_uZ_u^\smallertext{S}\big)\mathrm{d}u
-\int_t^T Z_u^{\smallertext{Q},i,:}\cdot \mathrm{d}Q_u
-\int_t^T Z_u^{\smallertext{S},i}\mathrm{d}S_u,
\; i\in\{1,\dots,n\},
\end{align*}
such that $(Z^\smallertext{Q},Z^\smallertext{S})\in\mathcal Z$. Moreover
\[
U_i(Y_0^i)=V_0^i\big(\xi,a^\star(Z)^{-i}\big),
\; i\in\{1,\dots,n\}.
\]
We refer to \cite[Proposition 3.11]{hernandez2024principal} for proofs and details. Then the principal's optimal control problem becomes
	\begin{align*}
	V^P&= \sup_{y \ge r} \sup_{Z \in \mathcal Z} 
	\E^{\P^{a^\star(Z)}} \Big[
	 U_{\smallertext{\rm P}}\Big(  \mathds{1}_n \cdot \big( Q_T - Y^{y,\smallertext{Z}}_T\big) /n \Big)
	\Big],
	\end{align*}	
where $r \coloneqq (r_1, \dots,r_n)^\top$ and the inequality $y \ge r$ is component-wise.

\subsubsection{General optimal contract}	
Let $y \ge r$ and $Z=(Z^\smallertext{Q},Z^\smallertext{S})\in\mathcal Z$. Recall that $a^\star(Z)  \in \Ac_n$ and that the processes $B^{a^\smalltext{\star}(\smallertext{Z})}$ and $W^{a^\smalltext{\star}(\smallertext{Z})}$ defined by
		\[
B^{i,a^\smalltext{\star}(\smallertext{Z})}_t\coloneqq B^i_t-\int_0^t\frac{a^\star(Z_s)^i}{\nu_i}\mathrm{d}s,\; W^{i,a^\smalltext{\star}(\smallertext{Z})}_t\coloneqq W^i_t+\int_0^t\frac{\rho_ia^\star(Z_s)^i}{ \nu_i\sqrt{1- \rho_i^2}}\mathrm{d}s,\; t\in[0,T],\; i \in\{ 1, \dots, n\},
\]
 are two $\P^{a^\smalltext{\star}(\smallertext{Z})}$-independent $(\F^n,\P^{a^\smalltext{\star}(\smallertext{Z})})$--Brownian motions.  The dynamics of the processes $Q$ and $Y^{y,Z}$ satisfy
 	\begin{align*}
	Q^i_t &= q^i_0 + \int_0^t a^\star(Z_u)^i \d u + \int_0^t \nu_i \d B^{a^\smalltext{\star}(\smallertext{Z})}_u=
	q^i_0 + \int_0^t \dfrac{Z^{\smallertext{Q},i,i}_u}{c_i} \d u + \int_0^t \nu_i \d B^{a^\smalltext{\star}(\smallertext{Z})}_u\\
	Y^{y,\smallertext{Z},i}_t
	 &= y^i - \int_0^t \Big(F^i\big(Z^\smallertext{Q}_u,\overline{Z}^\smallertext{S}_u\big)  - Z^{\smallertext{Q},i,:} \cdot a^\star(Z)_u  - \mu \overline Z^{\smallertext{S},i}\Big)\d u + \int_0^t   \Sum_{j=1}^n \bigg( \nu_j Z^{\smallertext{Q},i,j}_u 
	+ \overline Z^{\smallertext{S},i}_u \dfrac{\sigma}{\sqrt{n} } \rho_j \bigg )
	\d B^{a^\smalltext{\star}(\smallertext{Z}),j}_u\\
	&\quad 
	+ \int_0^t \dfrac{\sigma}{\sqrt{n}} \overline Z^{\smallertext{S},i}_u \Sum_{j=1}^n  \sqrt{1- \rho_j^2} \d W^{a^\smalltext{\star}(\smallertext{Z}),j}_u,
	\end{align*}
where $\overline Z^\smallertext{S} \coloneqq  S Z^{\smallertext{S}}$. Then, we can write
	\begin{align*}
	v_{\smallertext{\rm P}}(y,Z)
	&\coloneqq \E^{\P^{a^\smalltext{\star}(\smallertext{Z})}} \bigg[
	 U_{\smallertext{\rm P}}\bigg( \frac{\mathds{1}_n \cdot \big( Q_T - Y^{y,\smallertext{Z}}_T\big)}{n} \bigg)
	\bigg] =
	- \mathrm{e}^{- \gamma_{\smalltext{\rm P}} \frac{\mathds{1}_\smalltext{n}\cdot (q_\smalltext{0} - y)}{n} }
	\E^{\P^{a^\smalltext{\star}(\smallertext{Z})}} 
	\bigg[
	 \exp \bigg(- \gamma_{\smallertext{\rm P}} \int_0^T f\big(Z^\smallertext{Q}_t, \overline{Z}^\smallertext{S}_t\big) \d t \bigg)
	 M^\smallertext{Z}_T
	\bigg].
	\end{align*}
where $M_T$ is the stochastic exponential
	\begin{align}\label{eq:MZ}
	M^\smallertext{Z}_T
	&\coloneqq  \Ec \Bigg(
	- \frac{\gamma_{\smallertext{\rm P}}}{n}
	\int_0^\cdot
	\sum_{i=1}^n \Bigg(
	\nu_i - { \nu_i \sum_{j=1}^n Z^{\smallertext{Q},j,i}_t }
	- \frac{\rho_i}{\sqrt{n}} \sum_{j=1}^n \sigma \overline{Z}^{\smallertext{S},j}_t
	\Bigg) \mathrm{d}B^{a^\smalltext{\star}(\smallertext{Z}),i}_t \notag\\
	&\quad
	- \frac{\gamma_{\smallertext{\rm P}}}{n}
	\int_0^\cdot \sum_{i=1}^n \Bigg(
	\frac{\sqrt{1-\rho_i^2}}{\sqrt{n}} \sum_{j=1}^n \sigma \overline{Z}^{\smallertext{S},j}_t
	\Bigg)  \d W^{a^\smalltext{\star}(\smallertext{Z}),i}_t
	\Bigg)_T,
	\end{align}
and 	$f$ is the function defined by
	\begin{align} \label{eq:fn_def}
		f(z^\smallertext{Q},z^\smallertext{S})
		&\coloneqq
		-\frac{1}{n} \sum_{i=1}^n
		\Bigg(
			 \frac{(z^{\smallertext{Q},i,i})^2}{2c_i} + \frac{\gamma_i}2
			 \sum_{j=1}^n\nu^2_j (z^{\smallertext{Q},i,j})^2
		+ \frac{ \gamma_i \sigma^2}2 \big(z^{\smallertext{S},i} \big)^2
			+\frac{\gamma_i\sigma}{\sqrt{n}} z^{\smallertext{S},i}\sum_{j=1}^n\rho_j \nu_j z^{\smallertext{Q},i,j}
			- \frac{z^{\smallertext{Q},i,i}}{c_i}
		\Bigg)\ \nonumber \\
		&
		-\frac{\gamma_{\smallertext{\rm P}}}{2}  \frac{1}{n^2} \sum_{i=1}^n
		\Bigg(
			\Bigg(
			\nu_i - { \nu_i \sum_{j=1}^n z^{\smallertext{Q},j,i} }
			- \frac{\rho_i}{\sqrt{n}} \sum_{j=1}^n \sigma z^{\smallertext{S},j}
		\Bigg)^2
			+
				\frac{(1-\rho_i^2)\sigma^2}{n} \Bigg(\sum_{j=1}^n z^{S,j}
			\Bigg)^2
		\Bigg).
	\end{align}
	
It follows that
	\begin{align*}
	V_\smallertext{\rm P} =  \sup_{y \ge r} \sup_{Z \in \mathcal Z}  v_\smallertext{\rm P}(y,Z)&=-
	\mathrm{e}^{- \gamma_{\smalltext{\rm P}} \frac{\mathds{1}_\smalltext{n}\cdot (q_\smalltext{0} - r)}{n} }
	\inf_{Z \in \mathcal Z}
	\E^{\P^{\smalltext{a}^\tinytext{\star}\smalltext{(}\smalltext{Z}\smalltext{)}}} 
	\Bigg[  \exp \bigg(- \gamma_{\smallertext{\rm P}} \int_0^T f\big(Z^\smallertext{Q}_t, \overline{Z}^\smallertext{S}_t\big) \d t \bigg) M^\smallertext{Z}_T		\Bigg].
	\end{align*}	
Using the martingale property of stochastic exponentials assumed in the definition of $\Zc$, and the fact that $M^Z$ is a true martingale for a deterministic $Z$, it is immediate that the optimisation in $V_\smallertext{\rm P}$ boils down to finding the deterministic maximiser of $f$. Notice that after some simple algebra, we can write
	\begin{equation}\label{eq:concavity-decomp}
\begin{aligned}
 f(z^\smallertext{Q},z^\smallertext{S})
&=
\frac{1}{2n}\sum_{i=1}^n\frac{1}{c_i}
-\frac{1}{2n}\sum_{i=1}^n \frac{(z^{\smallertext{Q},i,i}-1)^2}{c_i}
-\frac{1}{2n}\sum_{i=1}^n \gamma_i
\bigg(\sigma z^{\smallertext{S},i} + \frac{1}{\sqrt n}\rho^\top\big(\nu\odot (z^{\smallertext{Q},i,:})^\top\big)\bigg)^2
\\
&\quad
-\frac{1}{2n}\sum_{i=1}^n \gamma_i
\big(\nu\odot (z^{\smallertext{Q},i,:})^\top\big)^\top
\bigg(I_n-\frac{1}{n}\rho\rho^\top\bigg)
\big(\nu\odot (z^{\smallertext{Q},i,:})^\top\big)
\\
&\quad
-\frac{\gamma_{\smallertext{\rm P}}}{2n^2}\sum_{i=1}^n
\Bigg(
\Bigg(
\nu_i-\nu_i\sum_{j=1}^n z^{\smallertext{Q},j,i}
-\frac{\rho_i}{\sqrt n}\sum_{j=1}^n \sigma z^{\smallertext{S},j}
\Bigg)^2
+\frac{(1-\rho_i^2)\sigma^2}{n}
\Bigg(\sum_{j=1}^n z^{\smallertext{S},j}\Bigg)^2
\Bigg),
\end{aligned}
\end{equation}
where $\nu\coloneqq (\nu^1, \dots, \nu^n)^\top$ and $\rho\coloneqq (\rho^1, \dots, \rho^n)^\top$.

\paragraph{How to read the decomposition.}
The first negative square in \eqref{eq:concavity-decomp} is the \emph{row-by-row incentive cost}: it penalises deviations of the diagonal loading $z^{\smallertext Q,i,i}$ from the first-best benchmark $1$. The second and third lines are the \emph{risk-sharing terms}. The row-wise quadratic form captures the diffusion risk created when the contract loads on the non-traded signals $Q$, while the final principal term penalises the \emph{aggregate} exposure generated by column sums of $z^{\smallertext Q}$ and by the common $S$-tilt. In other words, $z^{\smallertext Q}$ creates incentives but also loads the contract on idiosyncratic and common noise, whereas $z^{\smallertext S}$ is the hedge used to re-balance that risk.

\medskip
Under this representation, $f$ is an affine--quadratic function of $(z^\smallertext{Q},z^\smallertext{S})$. Moreover, its Hessian is negative semidefinite because the matrix $\mathrm{I}_n-\frac{1}{n}\rho\rho^\top$ is symmetric positive semidefinite. Hence $f$ is globally concave. The proof of the following theorem is then straightforward once \Cref{prop:FOC-block} below is established. Notice also that since $s_0>0$ and $S$ is a geometric diffusion, one has $S_t>0$, $\P$--a.s.\ for all $t\in[0,T]$. In particular, both $S_T^{-1}$ and $\log(S_T/S_0)$ are well defined.

\begin{theorem}\label{th:main}
Let $(z^{\smallertext{S},\star},z^{\smallertext{Q},\star})$ be the unique maximiser of $f$.
Then the optimal contract is given by
\begin{align*}
\xi^i
&=S_T^{-1}\Bigg(
 r_i
-T\Bigg(\frac{(z^{\smallertext{Q},i,i,\star})^2}{2c_i}
 +\sum_{j\in\{1,\dots,n\}\setminus\{i\}}\frac{z^{\smallertext{Q},j,j,\star}}{c_j}z^{\smallertext{Q},i,j,\star}
 -\frac12\sum_{j=1}^n\gamma_i\nu_j^2 (z^{\smallertext{Q},i,j,\star})^2
 -\frac{\gamma_i}{2}\sigma^2 (z^{\smallertext{S},i,\star})^2\Bigg)\\
&\quad
-T\Bigg(\bigg(\mu-\frac{\sigma^2}{2}\bigg)z^{\smallertext{S},i,\star}
-\sum_{j=1}^n\gamma_i\frac{\rho_j}{\sqrt n}\nu_j z^{\smallertext{Q},i,j,\star}\sigma z^{\smallertext{S},i,\star}\Bigg)+z^{\smallertext{Q},i,: ,\star}\cdot(Q_T-Q_0)
+z^{\smallertext{S},i,\star}\log\bigg(\frac{S_T}{S_0}\bigg)
\Bigg),
\; i\in\{1,\dots,n\}.
\end{align*}
The explicit expressions of $(z^{\smallertext{Q},\star},z^{\smallertext{S},\star})$ are given by \eqref{eq:def_zSQ_star_n} below.
\end{theorem}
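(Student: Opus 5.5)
The plan is to combine the reformulation of \cite[Proposition 3.11]{hernandez2024principal} recalled above with the concavity of $f$. First I would settle existence and uniqueness of the maximiser. The Hessian of $f$ is negative semidefinite by \eqref{eq:concavity-decomp}, but strict concavity is what gives uniqueness, and I would obtain it directly. The terms $-\frac{1}{2n}\sum_i \gamma_i(\sigma z^{S,i}+\dots)^2$ and the row quadratic forms, combined with the incentive term $-(z^{Q,i,i}-1)^2/(2c_i)$, should give a negative definite form. Concretely, if the quadratic part vanishes along a direction $(\delta^Q,\delta^S)$, then the incentive term forces $\delta^{Q,i,i}=0$. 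The row term with $\mathrm{I}_n-\frac1n\rho\rho^\top$, which is positive definite because $\|\rho\|^2<n$ (each $|\rho_i|<1$), then forces $\nu\odot\delta^{Q,i,:}=0$, hence $\delta^Q=0$. The squared term then gives $\sigma\delta^{S,i}=0$. So $f$ is strictly concave and coercive, and the unique maximiser is the solution of the linear first-order system of \Cref{prop:FOC-block}, whose explicit form is \eqref{eq:def_zSQ_star_n}.

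Second, I would verify optimality of the constant control $Z^\star_t=(z^{Q,\star},z^{S,\star}/S_t)$, so that $\overline Z^S=z^{S,\star}$ is constant. For arbitrary $Z\in\mathcal Z$, $f(Z^Q_t,\overline Z^S_t)\le f(z^{Q,\star},z^{S,\star})$ for all $t$. Since $\gamma_{\rm P}>0$, $M^Z$ is a positive $\P^{a^\star(Z)}$-martingale by condition $(iii)$, and the expectation in the expression for $V_{\rm P}$ is therefore bounded below by $\exp(-\gamma_{\rm P}Tf^\star)$. Equality holds for $Z^\star$, because there the integrand is deterministic and $M^{Z^\star}$ is the exponential of a Brownian integral with constant coefficients, hence a true martingale. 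I would also check $Z^\star\in\mathcal Z$. Item $(i)$ holds since $a^\star$ is constant and Novikov applies. For item $(ii)$, $Y^{0,Z^\star,i}$ is $\P^{\alpha}$-Gaussian up to the drift induced by $\alpha^i$, and the $L^{q}$ bound on $\mathrm e^{-\gamma_i q Y}$ uniformly over stopping times follows from the exponential moments and the integrability built into $\Ac^i$. This item is the step I expect to be the most delicate, and I would handle it by a Hölder argument together with a supermartingale bound on the exponential of the $Y$-integral. The optimal $y$ is $r$, since $v_{\rm P}$ is decreasing in each $y^i$.

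Finally, I would identify the contract as $\xi^i=S_T^{-1}Y^{r,Z^\star,i}_T$ and compute $Y_T$ explicitly. Using $\int_0^T Z^{S,i}_t\,\mathrm dS_t=z^{S,i,\star}\int_0^T\mathrm dS_t/S_t=z^{S,i,\star}\big(\log(S_T/S_0)+\frac{\sigma^2}{2}T\big)$ by Itô's formula, I get
\[
Y^{r,Z^\star,i}_T=r_i-TF^i(z^{Q,\star},z^{S,\star})+z^{Q,i,:,\star}\cdot(Q_T-Q_0)+z^{S,i,\star}\Big(\log\frac{S_T}{S_0}+\frac{\sigma^2}{2}T\Big).
\]
Expanding $F^i$ and merging $\mu z^{S,i}-\frac{\sigma^2}{2}z^{S,i}$ into $(\mu-\frac{\sigma^2}{2})z^{S,i,\star}$ yields exactly the displayed formula, which completes the argument.
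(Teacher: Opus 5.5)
Your route is the paper's: reduce via \cite[Proposition 3.11]{hernandez2024principal} to maximising $f$ pointwise, and bound the expectation below by $\exp(-\gamma_{\mathrm{P}}Tf^\star)$ using $f\le f^\star$ and $\mathbb{E}[M^Z_T]=1$. Then take $y=r$ and read off $\xi^i=S_T^{-1}Y^{r,Z^\star,i}_T$ using $\int_0^T\mathrm{d}S_t/S_t=\log(S_T/S_0)+\sigma^2T/2$. Your strict-concavity argument and your choice $Z^{S,\star}_t=z^{S,\star}/S_t$ (only $\overline Z^{S}$ is constant) are correct, and more careful than the paper's one-line justification.

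The step that fails is your verification of item $(ii)$ in the definition of $\mathcal{Z}$. A deviation $\alpha^i\in\mathcal{A}^i(a^\star(Z^\star)^{-i})$ is only required to make the Girsanov density a true martingale. That gives no exponential moments of $\int_0^T\alpha^i_t\,\mathrm{d}t$ under $\mathbb{P}^{\alpha}$, yet $Y^{0,Z^\star,i}_T$ contains the term $z^{Q,i,i,\star}\int_0^T\alpha^i_t\,\mathrm{d}t$. Take $\alpha^i_t\coloneqq-(B^i_{T/2})^4\mathbf{1}_{\{t>T/2\}}$. It is admissible: condition on $\mathcal{F}^n_{T/2}$. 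Under $\mathbb{P}^\alpha$, $B^i_{T/2}$ is still $\mathcal{N}(0,T/2)$, and the rest of $Y^{0,Z^\star,i}_T$ is a constant plus a $\mathbb{P}^\alpha$-Gaussian variable. By Cauchy--Schwarz, $\mathbb{E}^{\mathbb{P}^\alpha}[|U_i(Y^{0,Z^\star,i}_T)|^q]=+\infty$ for every $q>1$ whenever $z^{Q,i,i,\star}>0$; flip the sign of $\alpha^i$ if it is negative. So you cannot get the Nash property of $a^\star(Z^\star)$ by placing $Z^\star$ in $\mathcal{Z}$; the paper's short argument does not address this point either.

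Verify the Nash property directly instead. This is easy because the volatility vector $\Sigma^i$ of $Y^{r,Z^\star,i}$ with respect to the $\mathbb{P}^\alpha$-Brownian motion $(B^\alpha,W^\alpha)$ is constant. Completing the square in $F^i$ gives, for any deviation,
\[
-\gamma_i\Big(Y^{r,Z^\star,i}_T-\frac{c_i}{2}\int_0^T(\alpha^i_t)^2\mathrm{d}t\Big)=-\gamma_ir_i+\frac{\gamma_ic_i}{2}\int_0^T\Big(\alpha^i_t-\frac{z^{Q,i,i,\star}}{c_i}\Big)^2\mathrm{d}t+\log\mathcal{E}\big(-\gamma_i\Sigma^i\cdot(B^\alpha,W^\alpha)\big)_T.
\]
This stochastic exponential is a true martingale. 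Changing measure with it shows the agent's utility is at most $U_i(r_i)$, with equality if and only if $\alpha^i=z^{Q,i,i,\star}/c_i$. The principal's upper bound only uses the representation of arbitrary admissible contracts, so it is unaffected.
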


\paragraph{Interpretation of the optimal contract.}
The term $z^{\smallertext{Q},i,:,\star}\!\cdot (Q_T-Q_0)$ is the \emph{signal-based incentive leg}. Its diagonal coefficient $z^{\smallertext{Q},i,i,\star}$ rewards agent $i$'s own signal, while the off-diagonal entries $z^{\smallertext{Q},i,j,\star}$ implement cross-signal loadings across agents. The loading $z^{\smallertext{S},i,\star}\log(S_T/S_0)$ is the \emph{traded-asset hedge}: it offsets the diffusion risk created by the $Q$-exposures, but it also exposes the principal to aggregate market risk. The drift correction $\big(\mu-\sigma^2/2\big)z^{\smallertext{S},i,\star}$ is exactly the It\^o compensator associated with the log-return of $S$. Because signed exposures are allowed, the contract should be interpreted as a \emph{net transfer claim} indexed to $Q$ and $S$, rather than literally as a long-only token grant.

\subsubsection{First-order conditions}
In order to find the maximiser of $f$ in \eqref{eq:concavity-decomp}, let us look at the first-order conditions. 
For simplicity, let us denote by $(E_{i,j})_{(i,j)\in\{1,\dots,n\}^\smalltext{2}}$ the canonical basis of $\R^{n\times n}$, and define
\[
\lambda_n\coloneqq \frac{\gamma_{\smallertext{\rm P}}\sigma^2}{n^3}\sum_{j=1}^n(1-\rho_j^2)\in \R,\;
\Gamma\coloneqq \mathrm{Diag}[\gamma]\in\R^{n\times n},\;
N\coloneqq \mathrm{Diag}[\nu]\in\R^{n\times n},\;
\mathrm{J}\coloneqq \mathds{1}_n\mathds{1}_n^\top\in\R^{n\times n}.
\]
For any $i\in\{1,\dots,n\}$, set
\begin{gather*}
A_i\coloneqq \gamma_i+\frac{1}{c_i\nu_i^2},\;
\alpha_{i,n}\coloneqq \frac{\gamma_{\smallertext{\rm P}}}{n\gamma_i},\\
\kappa_{i,n}\coloneqq 1+\frac{\gamma_{\smallertext{\rm P}}}{n}\Bigg(\sum_{\ell=1}^n\frac1{\gamma_\ell}-\frac{1}{\gamma_i}\bigg(1-\frac{\gamma_i}{A_i}\bigg)\Bigg),\;
d_{i,n}\coloneqq \kappa_{i,n}^{-1}\bigg(\nu_i-\frac{1}{c_i\nu_iA_i}\bigg),\\
m_{i,n}\coloneqq \frac{\sigma}{\sqrt{n}\kappa_{i,n}}\rho_i\bigg(1-\frac{\gamma_i}{A_i}\bigg),\;
K_{i,n}(z^{\smallertext{S}})\coloneqq d_{i,n}-m_{i,n}z^{\smallertext{S},i},\; z^{\smallertext{S}}\in\R^n,\\
\mu_{i,n}\coloneqq \gamma_i\frac{\sigma^2}{n}\Bigg(1-\frac1n\sum_{j=1}^n\rho_j^2+\bigg(1-\frac{\gamma_i}{A_i}\bigg)\frac{\rho_i^2}{n}\Bigg)+ \frac{\gamma_{\smallertext{\rm P}}\sigma^2\rho_i^2}{\kappa_{i,n}n^3}\bigg(1-\frac{\gamma_i}{A_i}\bigg)^2,\\
D_n\coloneqq \mathrm{Diag}[\mu_{1,n},\dots,\mu_{n,n}],\\
\ell_{i,n}\coloneqq \frac{\gamma_{\smallertext{\rm P}}\sigma\rho_i}{n^{5/2}}\bigg(1-\frac{\gamma_i}{A_i}\bigg)d_{i,n}-\frac{\gamma_i\sigma}{n^{3/2}}\frac{\rho_i}{A_i c_i\nu_i},\;
\ell_n\coloneqq (\ell_{1,n},\dots,\ell_{n,n})^\top,\\
s_{i,n}\coloneqq \mu_{i,n}^{-1},\;
s_n\coloneqq (s_{1,n},\dots,s_{n,n})^\top,\;
S_n\coloneqq \mathrm{Diag}[s_n]=D_n^{-1},\;
y_n\coloneqq \frac{\lambda_n}{1+\lambda_n{1}_n^\top s_n}.
\end{gather*}

\begin{proposition}[First-order conditions]\label{prop:FOC-block}
The first-order conditions $\nabla f=0$ are the symmetric linear system
\begin{equation} \label{eq:FOC}
\begin{pmatrix}
H_{\smallertext{Q}\smallertext{Q}} & H_{\smallertext{Q}\smallertext{S}}\\
H_{\smallertext{Q}\smallertext{S}}^\top & H_{\smallertext{S}\smallertext{S}}
\end{pmatrix}
\begin{pmatrix}\mathrm{vec}(z^{\smallertext{Q}})\\ z^{\smallertext{S}}\end{pmatrix}
=
-\begin{pmatrix} b_\smallertext{Q}\\ b_\smallertext{S}\end{pmatrix},
\end{equation}
where $\mathrm{vec}$ stacks the columns of $z^{\smallertext{Q}}$, and with blocks 
\begin{align*}
H_{\smallertext{S}\smallertext{S}}
&\coloneqq -\frac{\sigma^2}{n}\Gamma - \frac{\gamma_{\smallertext{\rm P}}\sigma^2}{n^2}\mathrm{J},\\
H_{\smallertext{Q}\smallertext{S}}
&\coloneqq -\frac{\sigma}{n^{3/2}} (\rho\odot \nu)\otimes \Gamma 
	{ - \frac{\gamma_{\smallertext{\rm P}} \sigma}{n^{5/2}} (\rho\odot \nu)\otimes \mathrm{J}},\\
H_{\smallertext{Q}\smallertext{Q}}
&\coloneqq  -\frac{1}{n}\big( N^2 \otimes \Gamma\big)
	- \frac{\gamma_{\smallertext{\rm P}}}{n^2}\big(  N^2  \otimes \mathrm{J} \big) 
	-\frac{1}{n}\mathrm{Diag}\big[(1/c_1)E_{11},\dots,(1/c_n)E_{nn}\big],
\end{align*}
and right-hand side
\[
b_\smallertext{Q}^{i,j}\coloneqq\frac{1}{n c_i}\mathbf{1}_{\{i=j\}}+\frac{\gamma_{\smallertext{\rm P}}}{n^2}\nu_j^2,
\;
b_\smallertext{S}\coloneqq\frac{\gamma_{\smallertext{\rm P}}\sigma}{n^{5/2}}\Bigg(\sum_{i=1}^n \rho_i\nu_i\Bigg) \mathrm{1}_n.
\]
This system has a unique solution given by
\begin{equation} \label{eq:def_zSQ_star_n}
z^{\smallertext{S},\star}
\coloneqq
\big(S_n-y_n s_n s_n^\top\big)\ell_n,
\;
z^{\smallertext{Q},i,j,\star}\coloneqq \frac{q_{i,j}^\star}{\nu_j},\; (i,j)\in\{1,\dots,n\}^2,
\end{equation}
where, for any $(i,j)\in\{1,\dots,n\}^2$
\[
q_{i,j}^\star
\coloneqq
\begin{cases}
\displaystyle \alpha_{i,n} K_{j,n}(z^{\smallertext{S},\star}) - \frac{\sigma}{\sqrt n}\rho_j z^{\smallertext{S},i,\star}, \; \text{\rm if}\; j\neq i,\\[0.5em]
\displaystyle A_i^{-1}\bigg(\frac{\gamma_{\smallertext{\rm P}}}{n} K_{i,n}(z^{\smallertext{S},\star}) - \gamma_i\frac{\sigma}{\sqrt n}\rho_i z^{\smallertext{S},i,\star} + \frac{1}{c_i\nu_i}\bigg),\; \text{\rm if}\; j=i.
\end{cases}
\]
\end{proposition}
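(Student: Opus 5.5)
The plan has three parts: derive the gradient of $f$ in matrix form, show the Hessian is negative definite so the solution is unique, and then solve the linear system by elimination.

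\textbf{Gradient and block form.} I will differentiate $f$ as written in \eqref{eq:fn_def}, entry by entry. Write $c_j\coloneqq\sum_k z^{\smallertext Q,k,j}$ for the $j$th column sum and $\Sigma\coloneqq\sum_k z^{\smallertext S,k}$ for the aggregate $S$-tilt. The agents' part then gives the following:
\begin{itemize}
\item diagonal-only terms $-\frac1n\frac{z^{\smallertext Q,i,i}-1}{c_i}$;
\item row-wise risk terms $-\frac{\gamma_i}{n}\nu_j^2 z^{\smallertext Q,i,j}$ and $-\frac{\gamma_i\sigma}{n^{3/2}}\rho_j\nu_j z^{\smallertext S,i}$;
\item for the $S$-variables, $-\frac{\gamma_i\sigma^2}{n}z^{\smallertext S,i}-\frac{\gamma_i\sigma}{n^{3/2}}\sum_j\rho_j\nu_jz^{\smallertext Q,i,j}$.
\end{itemize}
The principal's term depends on $z^{\smallertext Q}$ only through $\nu_j(1-c_j)-\rho_j\sigma\Sigma/\sqrt n$. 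Its derivative therefore has the Kronecker structure $N^2\otimes \mathrm J$ in the $QQ$ block and $(\rho\odot\nu)\otimes \mathrm J$ in the $QS$ block. In the $SS$ block, the two pieces $\rho_j^2$ and $1-\rho_j^2$ combine into $\frac{\gamma_{\rm P}\sigma^2}{n^2}\mathrm J$. I will match each coefficient to the stated $H$ and $b$, taking care that the ordering of $\mathrm{vec}$ (column-stacking) is what makes $N^2$ the left Kronecker factor. This step is routine bookkeeping.

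\textbf{Uniqueness.} I will show the Hessian is negative definite, not merely semidefinite. By the decomposition \eqref{eq:concavity-decomp}, $-f$ is a sum of nonnegative quadratics, so it suffices to show these quadratics vanish simultaneously only at zero. Suppose $q^i=\nu\odot(z^{\smallertext Q,i,:})^\top$ lies in $\ker(I_n-\frac1n\rho\rho^\top)$. Since $|\rho_i|<1$, we have $\|\rho\|^2<n$, so $I_n-\frac1n\rho\rho^\top\succ0$ and hence $q^i=0$, which gives $z^{\smallertext Q}=0$. The term $\gamma_i(\sigma z^{\smallertext S,i}+\dots)^2$ then forces $z^{\smallertext S}=0$. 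So $f$ is strictly concave, the linear system has a unique solution, and that solution is the unique maximiser.

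\textbf{Solving the system.} I will use the reparametrisation $q_{i,j}=\nu_jz^{\smallertext Q,i,j}$ and eliminate in stages.
\begin{enumerate}
\item Fix $z^{\smallertext S}$ and the column sums. The off-diagonal FOC for entry $(i,j)$ reads $\gamma_i q_{i,j}=\frac{\gamma_{\rm P}}{n}K_j-\gamma_i\frac{\sigma}{\sqrt n}\rho_j z^{\smallertext S,i}$, where $K_j\coloneqq\nu_j(1-c_j)-\rho_j\sigma\Sigma/\sqrt n$ is the principal's exposure to signal $j$. This gives the $j\ne i$ formula with $\alpha_{i,n}=\gamma_{\rm P}/(n\gamma_i)$. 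The diagonal FOC has the extra $1/(c_i\nu_i)$ term, which produces the factor $A_i^{-1}$.
\item Summing $q_{\cdot,j}$ over rows gives a scalar equation for each $K_j$. It has the form $\kappa_{j,n}K_j=\nu_j-\frac{1}{c_j\nu_jA_j}-(\text{term in }\rho_j\Sigma\text{ and }z^{\smallertext S,j})$. The subtlety is that the $\Sigma$-contributions must cancel. The $-\sigma\rho_j\sum_i z^{\smallertext S,i}/\sqrt n$ coming from the rows should cancel against the $\Sigma$ inside $K_j$, leaving only the defect $(1-\gamma_j/A_j)z^{\smallertext S,j}$ from the diagonal row. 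This is exactly $K_j=d_{j,n}-m_{j,n}z^{\smallertext S,j}$.
\item Substitute into the $S$-equations. These become $D_nz^{\smallertext S}+\lambda_n\mathrm J z^{\smallertext S}=\ell_n$, a diagonal plus rank-one matrix; the $\rho$-dependent aggregate pieces combine into the $\lambda_n\mathrm J$ term via the identity $\sum_j\rho_j^2+\sum_j(1-\rho_j^2)=n$.
\item Invert by Sherman–Morrison: $(D_n+\lambda_n\mathds 1\mathds 1^\top)^{-1}=S_n-y_ns_ns_n^\top$. This requires $\mu_{i,n}>0$, which holds because $1-\frac1n\sum_j\rho_j^2>0$ and $A_i>\gamma_i$.
\end{enumerate}

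\textbf{Main obstacle.} The hard part is stages 2–3: verifying that the column-sum aggregation closes with exactly the constants $\kappa_{i,n}$, $\mu_{i,n}$ and $\ell_{i,n}$. The cross terms involving $\Sigma$ are easy to mis-cancel, so I will first check the case $n=1$ against the stated formulas as a consistency test.
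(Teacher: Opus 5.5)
Your proposal is correct and follows the paper's proof essentially step for step. You differentiate to get the block system, then eliminate the $Q$-block row by row in the variables $q_{i,j}=\nu_j z^{Q,i,j}$, and sum over rows to get $K_{j,n}=d_{j,n}-m_{j,n}z^{S,j}$ with the aggregate $\sum_k z^{S,k}$ terms cancelling. You then reduce the $S$-equations to $(D_n+\lambda_n\mathrm{J})z^{S}=\ell_n$ and invert by Sherman--Morrison.

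Your strict-concavity argument is a small but welcome addition: $\mathrm{I}_n-\frac1n\rho\rho^\top\succ0$ because $\|\rho\|^2<n$. The paper only states semidefiniteness and gets uniqueness implicitly from the forced elimination plus its final verification step. Rename your column sums, since $c_j$ already denotes the effort cost.
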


\section{Sign structure and comparative statics of the maximiser}\label{sec:homogeneous}

\subsection{Homogeneous case}
Assume that, for some constants $c$, $\gamma$, $\nu$ and $\rho$,
\begin{equation}\label{eq:assum_homog}
c_i\equiv c>0, 
\; \gamma_i\equiv\gamma>0, 
\; \nu_i\equiv\nu>0,
\; \rho_i\equiv\rho\in(-1,1).
\end{equation}
By symmetry, for each team size $n\ge 1$, the maximiser has the form, for $(i,j)\in\{1,\dots,n\}^2$,
\[
z_n^{\smallertext S,i,\star}=z_{s,n}^\star,
\;
z_n^{\smallertext Q,i,i,\star}=z_{d,n}^\star,
\;
z_n^{\smallertext Q,i,j,\star}=z_{o,n}^\star,\; \text{when }j\neq i.
\]
For later comparison across team sizes, we index by $n$ every scalar that depends explicitly on $n$. Set
\[
A\coloneqq \gamma+\frac{1}{c\nu^2},
\;
\delta\coloneqq 1-\frac{\gamma}{A}=\frac{A-\gamma}{A}=\frac{1}{Ac\nu^2},
\;
\alpha_n\coloneqq \frac{\gamma_\smallertext{\rm P}}{n\gamma},
\;
\beta_n\coloneqq \frac{\sigma\rho}{\sqrt n},
\]
\[
\widetilde\kappa_n\coloneqq A+\frac{\gamma_\smallertext{\rm P}}{n}\bigg(\frac{(n-1)A}{\gamma}+1\bigg),
\;
\kappa_n\coloneqq \frac{\widetilde\kappa_n}{A}.
\]
Finally define the scalar denominator
\begin{equation}
\label{eq:Delta-n}
\Delta_n(\gamma_\smallertext{\rm P})
\coloneqq
(\gamma+\gamma_\smallertext{\rm P})(1-\rho^2)
+\frac{\gamma\rho^2\delta}{n}
+\frac{\gamma_\smallertext{\rm P}\rho^2\delta^2}{n^2\kappa_n}
>0.
\end{equation}

\begin{proposition}[Homogeneous closed forms]\label{prop:homogeneous-closed-form}
Under the homogeneity assumptions in \eqref{eq:assum_homog}, the unique maximiser in {\rm\Cref{prop:FOC-block}} is given by
\begin{align}
\label{eq:zs-star}
z_{s,n}^\star
&=-\frac{\rho\gamma}{A c \sigma \nu \sqrt{n}\Delta_n(\gamma_\smallertext{\rm P})}
\bigg( 1-\frac{\gamma_\smallertext{\rm P}}{n\widetilde\kappa_n}\bigg),
\\[0.5em]
\label{eq:zozd-star}
z_{o,n}^\star
&=\frac{1}{\nu}\big(\alpha_n K_n^\star-\beta_n z_{s,n}^\star\big),
\; n\in\N^\star\setminus\{1\},
\\
z_{d,n}^\star
&=\frac{1}{\nu A}\bigg(\frac{\gamma_\smallertext{\rm P}}{n}K_n^\star-\gamma\beta_n z_{s,n}^\star+\frac{1}{c\nu}\bigg),
\end{align}
where 
\begin{equation}
\label{eq:K-star}
K_n^\star
\coloneqq
\frac{1}{\kappa_n}\bigg(\frac{\gamma\nu}{A}-\beta_n\delta z_{s,n}^\star\bigg).
\end{equation}
\end{proposition}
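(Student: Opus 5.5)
The plan is to specialise \Cref{prop:FOC-block} to the homogeneous parameters \eqref{eq:assum_homog} and check that the general formulas reduce to \eqref{eq:zs-star}--\eqref{eq:K-star}.

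\textbf{Step 1: symmetric form.} Under \eqref{eq:assum_homog}, $f$ in \eqref{eq:concavity-decomp} is invariant under simultaneous permutation of the agents, that is, under $z^{\smallertext Q}\mapsto P z^{\smallertext Q}P^\top$ and $z^{\smallertext S}\mapsto Pz^{\smallertext S}$ for a permutation matrix $P$. By \Cref{prop:FOC-block} the maximiser is unique, so it is fixed by every such permutation. Hence all $S$-tilts coincide, all diagonal loadings coincide, and all off-diagonal loadings coincide. This justifies the ansatz $(z^\star_{s,n},z^\star_{d,n},z^\star_{o,n})$. The same conclusion can also be read off directly from the explicit formulas below.

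\textbf{Step 2: the scalar building blocks.} Since $A_i=A$ and $1-\gamma_i/A_i=\delta$, we get
\[
\kappa_{i,n}=1+\frac{\gamma_\smallertext{\rm P}(n-\delta)}{n\gamma}.
\]
Using $1-\delta=\gamma/A$, we have $(n-\delta)/\gamma=(n-1)/\gamma+1/A$, so $\kappa_{i,n}=\widetilde\kappa_n/A=\kappa_n$. Next,
\[
d_{i,n}=\kappa_n^{-1}\nu\Big(1-\frac{1}{c\nu^2A}\Big)=\kappa_n^{-1}\frac{\gamma\nu}{A},
\qquad
m_{i,n}=\frac{\beta_n\delta}{\kappa_n}.
\]
Therefore $K_{i,n}(z^{\smallertext S,\star})$ equals $K_n^\star$ in \eqref{eq:K-star}. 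Substituting $\alpha_{i,n}=\alpha_n$ and $\sigma\rho_j/\sqrt n=\beta_n$ into the formula for $q^\star_{i,j}$, and dividing by $\nu_j=\nu$, gives the stated expressions for $z^\star_{o,n}$ and $z^\star_{d,n}$ immediately.

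\textbf{Step 3: the $S$-tilt.} This is the only step with real algebra, and I expect it to be the main obstacle. Here $D_n=\mu I_n$ and $s_n=\mu^{-1}\mathrm{1}_n$, with $\mu=\mu_{i,n}$. Likewise $\ell_n=\ell\,\mathrm{1}_n$ and $y_n=\lambda_n/(1+n\lambda_n/\mu)$. Since $(S_n-y_ns_ns_n^\top)\mathrm{1}_n=\big(\mu^{-1}-n y_n\mu^{-2}\big)\mathrm{1}_n$, we obtain
\[
z^\star_{s,n}=\frac{\ell}{\mu+n\lambda_n}.
\]
For the denominator, we have $n\lambda_n=\gamma_\smallertext{\rm P}\sigma^2(1-\rho^2)/n$ and
\[
\mu=\frac{\gamma\sigma^2}{n}\Big(1-\rho^2+\frac{\delta\rho^2}{n}\Big)+\frac{\gamma_\smallertext{\rm P}\sigma^2\rho^2\delta^2}{\kappa_nn^3}.
\]
Summing these gives $\mu+n\lambda_n=\sigma^2\Delta_n(\gamma_\smallertext{\rm P})/n$, which is positive by \eqref{eq:Delta-n}. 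For the numerator, insert $d_{i,n}=\gamma\nu/(A\kappa_n)$ and $\delta=1/(Ac\nu^2)$. The first term of $\ell$ becomes
\[
\frac{\gamma\sigma\rho}{n^{3/2}Ac\nu}\cdot\frac{\gamma_\smallertext{\rm P}}{nA\kappa_n}=\frac{\gamma\sigma\rho}{n^{3/2}Ac\nu}\cdot\frac{\gamma_\smallertext{\rm P}}{n\widetilde\kappa_n}.
\]
Hence
\[
\ell=-\frac{\gamma\sigma\rho}{n^{3/2}Ac\nu}\Big(1-\frac{\gamma_\smallertext{\rm P}}{n\widetilde\kappa_n}\Big).
\]
Dividing $\ell$ by $\sigma^2\Delta_n/n$ yields exactly \eqref{eq:zs-star}. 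The key things to keep track of are the identity $A\kappa_n=\widetilde\kappa_n$ and the cancellation of the factors $\gamma/A$ against $\delta$. Finally, $z^\star_{o,n}$ is only meaningful for $n\ge2$, which explains the restriction in \eqref{eq:zozd-star}.
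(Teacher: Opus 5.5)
Your proposal is correct. The specialisations $\kappa_{i,n}=\widetilde\kappa_n/A=\kappa_n$, $d_{i,n}=\gamma\nu/(A\kappa_n)$ and $m_{i,n}=\beta_n\delta/\kappa_n$ all check out. So do the identity $\mu_{i,n}+n\lambda_n=\sigma^2\Delta_n(\gamma_{\mathrm P})/n$ and the simplification of $\ell_{i,n}$ via $A\kappa_n=\widetilde\kappa_n$, and together they give \eqref{eq:zs-star} exactly.

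Your route differs from the paper's in how $z_{s,n}^\star$ is obtained. The paper does not evaluate the Sherman--Morrison closed form of \Cref{prop:FOC-block}; it uses only the row-wise formulas for $q^\star_{i,j}$. It sums a column to get $(n-1)q_{o,n}+q_{d,n}=(\kappa_n-1)K_n^\star-\beta_n((n-1)+\gamma/A)z_{s,n}^\star+\nu\delta$. It then substitutes this and \eqref{eq:K-star} into the specialised $S$-block first-order condition, which yields one scalar linear equation in $z_{s,n}^\star$, and checks by hand that its coefficient equals $\Delta_n(\gamma_{\mathrm P})$.

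Your version is shorter and more structural. Since $D_n=\mu I_n$ and $\ell_n\propto\mathrm{1}_n$, the vector $\mathrm{1}_n$ is an eigenvector of $D_n+\lambda_n\mathrm{1}_n\mathrm{1}_n^\top$ with eigenvalue $\mu+n\lambda_n$. So $\Delta_n(\gamma_{\mathrm P})$ appears directly as $n/\sigma^2$ times that eigenvalue, positivity included. The symmetric form of the maximiser is also read off from the general formulas rather than assumed, and your permutation-invariance-plus-uniqueness argument justifies the ansatz that the paper only asserts. In exchange, the paper's route does not depend on the final Sherman--Morrison step of \Cref{prop:FOC-block}, and it keeps the column-residual mechanism behind $K_n^\star$ visible in the computation.
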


\paragraph{How to read the homogeneous closed forms.}
The constant $A=\gamma+\frac{1}{c\nu^2}$ is the basic \emph{own-signal incentive versus risk} trade-off: larger $c$ or smaller $\nu$ makes it harder to load on the diagonal signal. The quantity $\delta=1-\gamma/A$ is the residual share of that trade-off left after the own-signal term has been absorbed into $A$. The factors $\widetilde\kappa_n$ and $\kappa_n$ collect the feedback created by the principal's aggregate-risk penalty, while $\Delta_n(\gamma_\smallertext{\rm P})$ is the effective denominator obtained after eliminating the off-diagonal $Q$-weights and the column residual $K_n^\star$. The explicit factor $n^{-1/2}$ in $z_{s,n}^\star$ already shows that, under symmetry, the $S$-hedge is used to offset a common component whose contribution is averaged over the $n$ agents.

\begin{remark}[`n=1` sanity check and large-`n` scaling]\label{rem:homo-n1-large-n}
The off-diagonal quantity $z_{o,n}^\star$ is only meaningful for $n\ge 2$. When $n=1$, the formulas reduce to
\begin{gather*}
z_{s,1}^\star
=-\frac{\gamma\nu\rho}{\sigma(\gamma+\gamma_\smallertext{\rm P})\big(1+c\nu^2(\gamma+\gamma_\smallertext{\rm P})(1-\rho^2)\big)},\\
z_{d,1}^\star
=\frac{1+\gamma_\smallertext{\rm P}c\nu^2(1-\rho^2)}{1+(\gamma+\gamma_\smallertext{\rm P})c\nu^2(1-\rho^2)}.
\end{gather*}
Thus the diagonal formula in {\rm\Cref{prop:homogeneous-closed-form}} is consistent with the one-agent benchmark: when $n=1$, there is no peer-benchmarking channel and the only remaining trade-off is between the own-signal incentive and the traded-asset hedge.

\medskip
For fixed $(c,\gamma,\nu,\rho,\sigma,\gamma_\smallertext{\rm P})$ and $n\longrightarrow\infty$, one directly reads from the closed forms that
\[
z_{s,n}^\star=\Oc(n^{-1/2}),
\;
z_{o,n}^\star=\Oc(n^{-1}),
\;
z_{d,n}^\star\longrightarrow \frac{1}{1+c\gamma\nu^2}.
\]
In particular, as the team becomes large, the common $S$-tilt and each off-diagonal benchmarking term become negligible, while the diagonal coefficient converges to the own-signal loading that would arise from the row-wise problem after aggregate exposure has been diversified away.
\end{remark}

\begin{proposition}[Signs, limits, and basic comparative statics]\label{prop:homo-signs-limits}
Under the homogeneity condition \eqref{eq:assum_homog}, the closed forms \eqref{eq:zs-star}--\eqref{eq:zozd-star} satisfy
\begin{itemize}
\item[$(i)$] $\mathrm{sgn}(z_{s,n}^\star)=-\mathrm{sgn}(\rho);$

\medskip
\item[$(ii)$] $z_{o,n}^\star\ge0$ for $n\in \N^\star\setminus\{1\}$, with strict inequality except in the degenerate case $(\gamma_\smallertext{\rm P},\rho)\longrightarrow (0,0)$, and $z_{d,n}^\star>0$ for all $n\in\N^\star;$

\medskip
\item[$(iii)$] $|z_{s,n}^\star|$ is decreasing in $\gamma_\smallertext{\rm P}$ and $z_{s,n}^\star\longrightarrow 0$ as $\gamma_\smallertext{\rm P}\longrightarrow\infty$;

\medskip
\item[$(iv)$] for $n\in\N^\star\setminus\{1\}$
\[
\lim_{\gamma_\smalltext{\rm P}\to\infty} z_{o,n}^\star
=\frac{\gamma}{(n-1)A+\gamma}\in(0,1);
\]

\medskip
\item[$(v)$] for all $n\in\N^\star$
\[
\lim_{\gamma_\smalltext{\rm P}\to\infty} z_{d,n}^\star
=\frac{(n-1)A-(n-2)\gamma}{(n-1)A+\gamma}\in(0,1].
\]
\end{itemize}
\end{proposition}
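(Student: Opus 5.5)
The proof works directly from the closed forms of \Cref{prop:homogeneous-closed-form}. We treat $(c,\gamma,\nu,\rho,\sigma,n)$ as fixed and view every quantity as an explicit rational function of $\gamma_\smallertext{\rm P}$. Throughout we use $A>\gamma$, $\delta\in(0,1)$, $\widetilde\kappa_n>0$ and $\Delta_n(\gamma_\smallertext{\rm P})>0$.

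\emph{Step 1 (items $(i)$ and $(iii)$).} Write $z_{s,n}^\star=-\rho\,C\,g(\gamma_\smallertext{\rm P})$ with $C=\gamma/(Ac\sigma\nu\sqrt n)>0$ and
\[
g(\gamma_\smallertext{\rm P})=\frac{1-\gamma_\smallertext{\rm P}/(n\widetilde\kappa_n)}{\Delta_n(\gamma_\smallertext{\rm P})}.
\]
From the definition, $n\widetilde\kappa_n=nA+\gamma_\smallertext{\rm P}\big((n-1)A/\gamma+1\big)>\gamma_\smallertext{\rm P}$, so the numerator of $g$ lies in $(0,1]$. This gives $(i)$, with $z_{s,n}^\star=0$ exactly when $\rho=0$. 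For $(iii)$ we show that $g$ is decreasing:
\begin{itemize}
\item The numerator equals $1-\gamma_\smallertext{\rm P}/(a+b\gamma_\smallertext{\rm P})$ with $a=nA$ and $b=(n-1)A/\gamma+1\ge1$. Its derivative is $-a/(a+b\gamma_\smallertext{\rm P})^2<0$, so it is decreasing.
\item In $\Delta_n$, the term $(\gamma+\gamma_\smallertext{\rm P})(1-\rho^2)$ is increasing. The term $\gamma_\smallertext{\rm P}/\kappa_n=A\gamma_\smallertext{\rm P}/(a'+b'\gamma_\smallertext{\rm P})$ is also increasing.
\end{itemize}
So $\Delta_n$ increases, and $|z_{s,n}^\star|$ is decreasing. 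As $\gamma_\smallertext{\rm P}\to\infty$, the numerator tends to $1-1/b$. Meanwhile $\Delta_n\to\infty$ when $|\rho|<1$, so $z_{s,n}^\star=\Oc(\gamma_\smallertext{\rm P}^{-1})\to0$.

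\emph{Step 2 (limits $(iv)$ and $(v)$).} Since $\beta_n z_{s,n}^\star\to0$, formula \eqref{eq:K-star} gives $K_n^\star\sim(\gamma\nu/A)/\kappa_n$. Here $\kappa_n\sim\gamma_\smallertext{\rm P}\big((n-1)A+\gamma\big)/(nA\gamma)$. Then
\[
\alpha_nK_n^\star=\frac{\gamma_\smallertext{\rm P}}{n\gamma}K_n^\star\longrightarrow \frac{\gamma\nu}{(n-1)A+\gamma},
\]
and dividing by $\nu$ yields $(iv)$. The main care point is that $\beta_n z_{s,n}^\star$ is multiplied by $\gamma_\smallertext{\rm P}$ inside $K_n^\star$ only through the bounded factor $\gamma_\smallertext{\rm P}/\kappa_n$, so it still vanishes. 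Similarly,
\[
\frac{\gamma_\smallertext{\rm P}}{n}K_n^\star\longrightarrow\frac{\gamma^2\nu}{(n-1)A+\gamma},
\]
hence
\[
z_{d,n}^\star\longrightarrow\frac{1}{\nu A}\bigg(\frac{\gamma^2\nu}{(n-1)A+\gamma}+\frac{1}{c\nu}\bigg).
\]
Using $1/(c\nu^2)=A-\gamma$, this equals
\[
\frac{\gamma^2+(A-\gamma)\big((n-1)A+\gamma\big)}{A\big((n-1)A+\gamma\big)}.
\]
The numerator factors as $A\big((n-1)A-(n-2)\gamma\big)$. Membership in $(0,1]$ follows from $A>\gamma$: the numerator exceeds $\gamma>0$, and numerator minus denominator equals $-(n-1)\gamma\le0$, with equality only when $n=1$. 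Likewise $\gamma/((n-1)A+\gamma)\in(0,1)$ for $n\ge2$.

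\emph{Step 3 (signs in $(ii)$) — the main obstacle.} For $z_{o,n}^\star$, Step 1 shows $-\beta_nz_{s,n}^\star=\sigma^2\rho^2Cg/\sqrt n\ge0$. Also $\alpha_nK_n^\star$ has the sign of
\[
\frac{\gamma\nu}{A}-\beta_n\delta z_{s,n}^\star=\frac{\gamma\nu}{A}+\frac{\sigma^2\rho^2\delta Cg}{\sqrt n}>0,
\]
multiplied by $\alpha_n\ge0$. Hence $z_{o,n}^\star\ge0$, and it vanishes only if $\gamma_\smallertext{\rm P}=0$ and $\rho=0$, which is the stated degenerate case.

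For $z_{d,n}^\star$, all three terms inside the bracket are nonnegative by the same sign computation, and $1/(c\nu)>0$ makes the sum strictly positive. The real difficulty is not the signs but making sure the sign of $K_n^\star$ really is positive without expanding \eqref{eq:zs-star} fully. The remark above handles this by noting that $-\beta_nz_{s,n}^\star=\sigma\rho^2Cg/\sqrt n\cdot\sigma\ge0$ irrespective of $\mathrm{sgn}(\rho)$. Finally, as a consistency check, we verify the $n=1$ case against Remark~\ref{rem:homo-n1-large-n}.
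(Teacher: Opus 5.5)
Your proposal is correct and follows the paper's own route. The signs come from the closed forms via $\beta_n z_{s,n}^\star\le 0\Rightarrow K_n^\star\ge \gamma\nu/(A\kappa_n)>0$, item $(iii)$ uses a decreasing numerator over an increasing $\Delta_n$, and the limits follow from $z_{s,n}^\star\to0$ and $\gamma_{\mathrm{P}}/(n\widetilde\kappa_n)\to\gamma/((n-1)A+\gamma)$; you also write out the algebra for $(v)$ that the paper leaves implicit. Two harmless slips: since $C$ already contains $1/\sigma$, $-\beta_n z_{s,n}^\star=\sigma\rho^2Cg/\sqrt n$ (one factor of $\sigma$, not $\sigma^2$), and for $n=1$ the numerator $(n-1)A-(n-2)\gamma$ equals $\gamma$ rather than exceeding it — neither affects any conclusion.
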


\begin{proof}
$(i)$ The denominator \eqref{eq:Delta-n} satisfies $\Delta_n(\gamma_\smallertext{\rm P})>0$, and
\[
0\le \frac{\gamma_\smallertext{\rm P}}{n\widetilde\kappa_n}
=
\frac{\gamma_\smallertext{\rm P}}
{nA+\gamma_\smallertext{\rm P}\big(\frac{(n-1)A}{\gamma}+1\big)}
<1,
\; \gamma_\smallertext{\rm P} > 0,
\]
because the denominator is strictly larger than $\gamma_\smallertext{\rm P}$. Hence $\mathrm{sgn}(z_{s,n}^\star)=-\mathrm{sgn}(\rho)$.

\medskip
$(ii)$ From \eqref{eq:K-star}
\[
K_n^\star=\frac{1}{\kappa_n}\bigg(\frac{\gamma\nu}{A}-\beta_n\delta z_{s,n}^\star\bigg)\ge \frac{\gamma\nu}{A\kappa_n}>0,
\]
since $\beta_n z_{s,n}^\star\le 0$. Therefore
\[
z_{o,n}^\star=\frac{1}{\nu}(\alpha_nK_n^\star-\beta_n z_{s,n}^\star)\ge0.
\]
If $\gamma_\smallertext{\rm P}>0$, then $\alpha_n>0$ and $K_n^\star>0$, so $z_{o,n}^\star>0$. When $\gamma_\smallertext{\rm P}\longrightarrow0$, then
\[
z_{o,n}^\star=-\frac{1}{\nu}\beta_n z_{s,n}^\star,
\]
which is strictly positive when $\rho\neq0$ and equal to $0$ when $\rho=0$. Finally
\[
z_{d,n}^\star=\frac{1}{\nu A}\bigg(\frac{\gamma_\smallertext{\rm P}}{n}K_n^\star-\gamma\beta_n z_{s,n}^\star+\frac{1}{c\nu}\bigg)>0.
\]

\medskip
$(iii)$ In \eqref{eq:zs-star}, the numerator factor $1-\gamma_\smallertext{\rm P}/(n\widetilde\kappa_n)$ decreases with $\gamma_\smallertext{\rm P}$, while the denominator \eqref{eq:Delta-n} increases with $\gamma_\smallertext{\rm P}$. Hence $|z_{s,n}^\star|$ decreases with $\gamma_\smallertext{\rm P}$, and $z_{s,n}^\star\longrightarrow 0$ as $\gamma_\smallertext{\rm P}\longrightarrow\infty$.

\medskip
$(iv)$ Using $z_{s,n}^\star\longrightarrow 0$ and \eqref{eq:K-star}
\[
\lim_{\gamma_\smalltext{\rm P}\to\infty} z_{o,n}^\star
=\lim_{\gamma_\smalltext{\rm P}\to\infty}\frac{1}{\nu}\alpha_n K_n^\star
=\lim_{\gamma_\smalltext{\rm P}\to\infty}\frac{\gamma_\smalltext{\rm P}}{n\widetilde\kappa_n}
=\frac{\gamma}{(n-1)A+\gamma}.
\]

\medskip
$(v)$ The limit for $z_{d,n}^\star$ follows similarly from \eqref{eq:zozd-star} and \eqref{eq:K-star} by direct algebra.
\end{proof}

We next isolate the results for a risk-neutral principal. Define
\[
\eta_n\coloneqq \rho^2\bigg((n-1)+\frac{\gamma}{A}\bigg)\in[0,n).
\]
At the limit case $\gamma_\smallertext{\rm P}\longrightarrow0$, \Cref{prop:FOC-block} decouples across rows. The homogeneous maximiser admits the following closed forms.

\begin{proposition}[Homogeneous formulas when $\gamma_\smallertext{\rm P} \to 0$]
\label{prop:homog-gp0}
When $\gamma_\smallertext{\rm P}\longrightarrow0$, the unique maximiser is
\begin{align}
\label{eq:zs-gp0}
z_{s,n}^{0}
&\coloneqq z_{s,n}^\star\big|_{\gamma_\smalltext{\rm P}=0}
= -\frac{\sqrt{n}}{\sigma}\frac{\rho}{A c \nu}\frac{1}{n-\eta_n},
\\[0.25em]
\label{eq:zo-gp0}
z_{o,n}^{0}
&\coloneqq z_{o,n}^\star\big|_{\gamma_\smalltext{\rm P}=0}
= \frac{\rho^2}{A c \nu^2}\frac{1}{n-\eta_n},
\; n\in\N^\star\setminus\{1\},
\\[0.25em]
\label{eq:zd-gp0}
z_{d,n}^{0}
&\coloneqq z_{d,n}^\star\big|_{\gamma_\smalltext{\rm P}=0}
= \frac{1}{A c \nu^2}\bigg(1+\frac{\gamma\rho^2/A}{n-\eta_n}\bigg).
\end{align}
Moreover, with
\[
\nu_{n,\rho}^{\dagger}
\coloneqq
\sqrt{\frac{n(1-\rho^2)+\rho^2}{\gamma cn(1-\rho^2)}},
\]
one has
\begin{itemize}
\item $|z_{s,n}^0|$ is increasing in $|\rho|$ and decreasing in $c;$
\item $|z_{s,n}^0|$ is increasing in $\nu$ on $(0,\nu_{n,\rho}^{\dagger}]$ and decreasing in $\nu$ on $[\nu_{n,\rho}^{\dagger},\infty);$
\item $z_{o,n}^0$ increases with $|\rho|$ and decreases with both $c$ and $\nu;$
\item $z_{d,n}^0$ increases with $|\rho|$ and decreases with both $c$ and $\nu$.
\end{itemize}
\end{proposition}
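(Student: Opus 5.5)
The plan is to obtain the three formulas by direct specialisation of \Cref{prop:homogeneous-closed-form}. All quantities there ($\widetilde\kappa_n$, $\kappa_n$, $\Delta_n$, $\alpha_n$, $K_n^\star$) are continuous in $\gamma_\smallertext{\rm P}\in[0,\infty)$ with nonvanishing denominators at $\gamma_\smallertext{\rm P}=0$. So the limit $\gamma_\smallertext{\rm P}\longrightarrow0$ is obtained by setting $\gamma_\smallertext{\rm P}=0$, giving
\[
\widetilde\kappa_n=A,\quad \kappa_n=1,\quad \alpha_n=0,\quad K_n^\star=\gamma\nu/A,
\]
and $\Delta_n(0)=\gamma(1-\rho^2)+\gamma\rho^2\delta/n$. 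The key algebraic identity is
\[
n-\eta_n=n(1-\rho^2)+\rho^2\Big(1-\frac{\gamma}{A}\Big)=n(1-\rho^2)+\rho^2\delta,
\]
so that $\Delta_n(0)=\gamma(n-\eta_n)/n$. This identity also gives $n-\eta_n>0$, hence $\eta_n\in[0,n)$.

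Plugging into \eqref{eq:zs-star}, whose bracket equals $1$ when $\gamma_\smallertext{\rm P}=0$, gives \eqref{eq:zs-gp0}. Then $z_{o,n}^0=-\beta_n z_{s,n}^0/\nu$ yields \eqref{eq:zo-gp0}, and $z_{d,n}^0=(\nu A)^{-1}\big(-\gamma\beta_n z_{s,n}^0+1/(c\nu)\big)$ yields \eqref{eq:zd-gp0}. Each of these is one line once $\beta_n z^0_{s,n}=-\rho^2/(Ac\nu(n-\eta_n))$ is noted.

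For the comparative statics, I will rewrite everything in terms of $x\coloneqq\gamma c\nu^2$. This uses $Ac\nu^2=1+x$ and $\delta=1/(1+x)$. Set
\[
D\coloneqq(1+x)n(1-\rho^2)+\rho^2>0,
\]
so that $Ac\nu(n-\eta_n)=D/\nu$. Then
\[
|z_{s,n}^0|=\frac{\sqrt n\,|\rho|\,\nu}{\sigma\big(n(1-\rho^2)+\rho^2+\gamma c\nu^2 n(1-\rho^2)\big)},\qquad z_{o,n}^0=\frac{\rho^2}{D}.
\]
For the diagonal, the two terms of \eqref{eq:zd-gp0} should combine, after the numerator factors as $(1+x)\big(n(1-\rho^2)+\rho^2\big)$, into
\[
z_{d,n}^0=\frac{n-(n-1)\rho^2}{(1+x)n(1-\rho^2)+\rho^2}.
\]
From these forms the claims follow by elementary monotonicity arguments.

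For $|z^0_{s,n}|$: the numerator increases in $|\rho|$ while the denominator $n-(n-1)\rho^2+\gamma c\nu^2n(1-\rho^2)$ decreases in $\rho^2$, and the expression is clearly decreasing in $c$. In $\nu$, it has the form $\nu/(a+b\nu^2)$ with $a=n(1-\rho^2)+\rho^2$ and $b=\gamma cn(1-\rho^2)$, which is unimodal with maximum at $\nu^2=a/b$, i.e.\ at $\nu=\nu^\dagger_{n,\rho}$.

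For $z_{o,n}^0$: writing it as $\big(1+(1+x)n(1-\rho^2)/\rho^2\big)^{-1}$ shows monotonicity in $\rho^2$ and in $x$, hence in $c$ and $\nu$.

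For $z_{d,n}^0$: it is decreasing in $x$, hence in $c$ and $\nu$. For $\rho$, treat it as a Möbius function of $u=\rho^2$. The sign of its $u$-derivative reduces (the $u$-terms cancel) to
\[
n\big((1+x)n-1\big)-(n-1)(1+x)n=nx\ge0.
\]
This gives monotonicity in $|\rho|$, strict whenever $x>0$, which always holds.

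The main obstacle is purely algebraic. It lies in the simplification of $z_{d,n}^0$ to the single-fraction form above, and in keeping track of the substitutions $A=(1+x)/(c\nu^2)$ and $\delta=1/(1+x)$ without sign errors. I will double-check that simplification against the $n=1$ formula in \Cref{rem:homo-n1-large-n} at $\gamma_\smallertext{\rm P}=0$, where it should give $z_{d,1}^0=1/(1+x(1-\rho^2))$.
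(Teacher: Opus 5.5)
Your proposal is correct and follows the paper's proof. Like the paper, you set $\gamma_{\rm P}=0$ in the homogeneous closed forms, using $n-\eta_n=n(1-\rho^2)+\rho^2\delta$, and read off the comparative statics from the same three rational expressions the paper writes down. Your substitution $x=\gamma c\nu^2$ and the explicit sign computation $nx>0$ for the dependence of $z^0_{d,n}$ on $\rho^2$ simply spell out the \emph{direct differentiation} that the paper leaves to the reader.
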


\begin{proof}
The closed forms are obtained by taking the limit $\gamma_\smallertext{\rm P}\longrightarrow0$ in \Cref{prop:homogeneous-closed-form}. For the comparative statics, it is convenient to rewrite
\begin{gather*}
|z_{s,n}^0|
=
\frac{\sqrt n|\rho|\,\nu}{\sigma\big(c\gamma n(1-\rho^2)\nu^2+n(1-\rho^2)+\rho^2\big)},\\
z_{o,n}^0
=
\frac{\rho^2}{c\gamma n(1-\rho^2)\nu^2+n(1-\rho^2)+\rho^2},\\
z_{d,n}^0
=
\frac{n(1-\rho^2)+\rho^2}{c\gamma n(1-\rho^2)\nu^2+n(1-\rho^2)+\rho^2}.
\end{gather*}
The stated monotonicities in $|\rho|$, $c$, and $\nu$ now follow by direct differentiation. In particular,
\[
\partial_{\nu}|z_{s,n}^0|
\;\text{has the sign of}\;
n(1-\rho^2)+\rho^2-\gamma c\,n(1-\rho^2)\nu^2,
\]
which yields the threshold $\nu_{n,\rho}^{\dagger}$.
\end{proof}

In particular, \Cref{prop:homog-gp0} can be read as the exact $\gamma_\smallertext{\rm P}=0$ simplification of the homogeneous closed forms, while \Cref{prop:hetero-small-gp-persistence} describes the stability of the associated sign pattern for small positive principal risk aversion.

\begin{figure}[ht!]
  \centering
  \includegraphics[width=.32\linewidth]{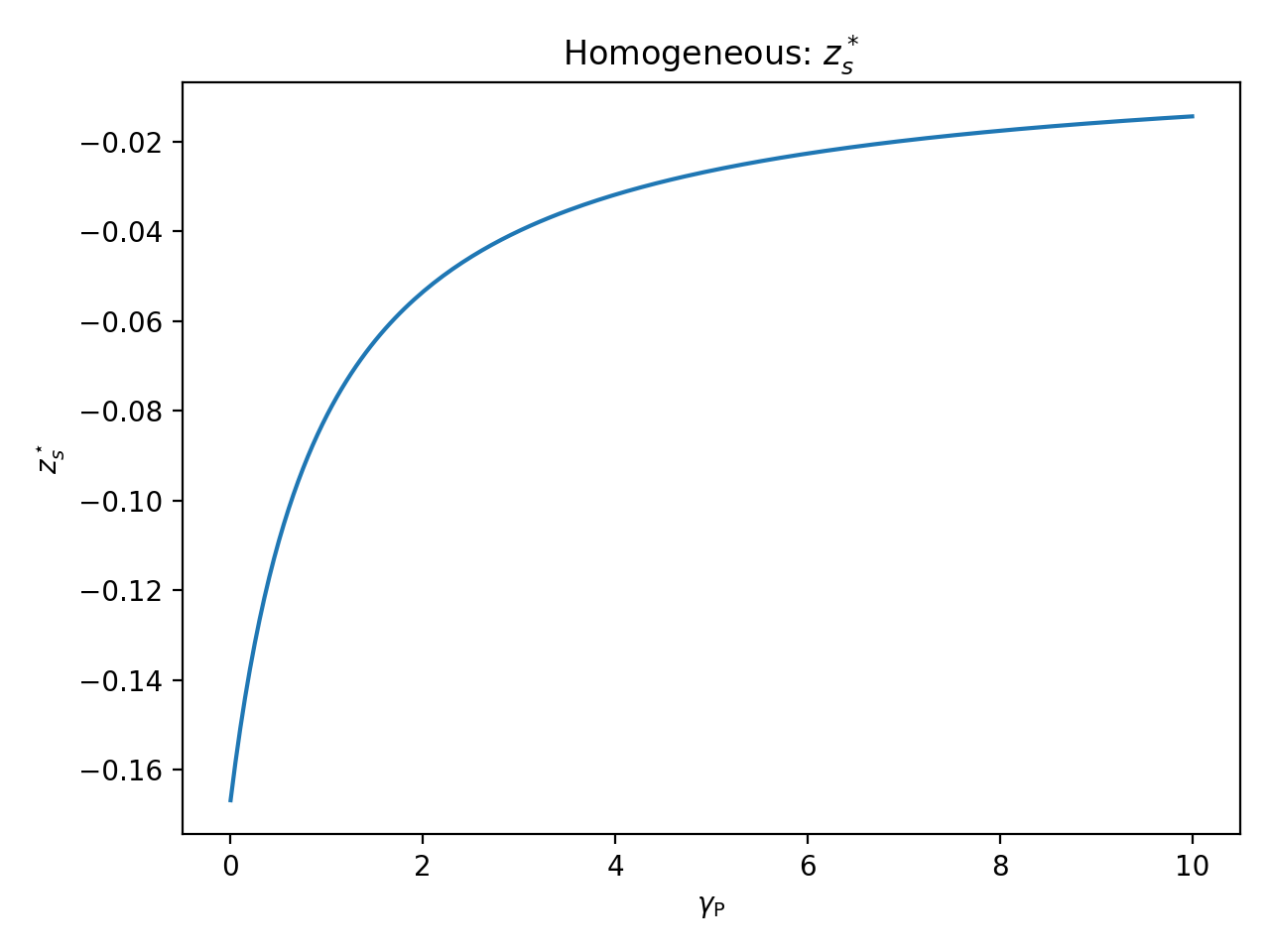}\hfill
  \includegraphics[width=.32\linewidth]{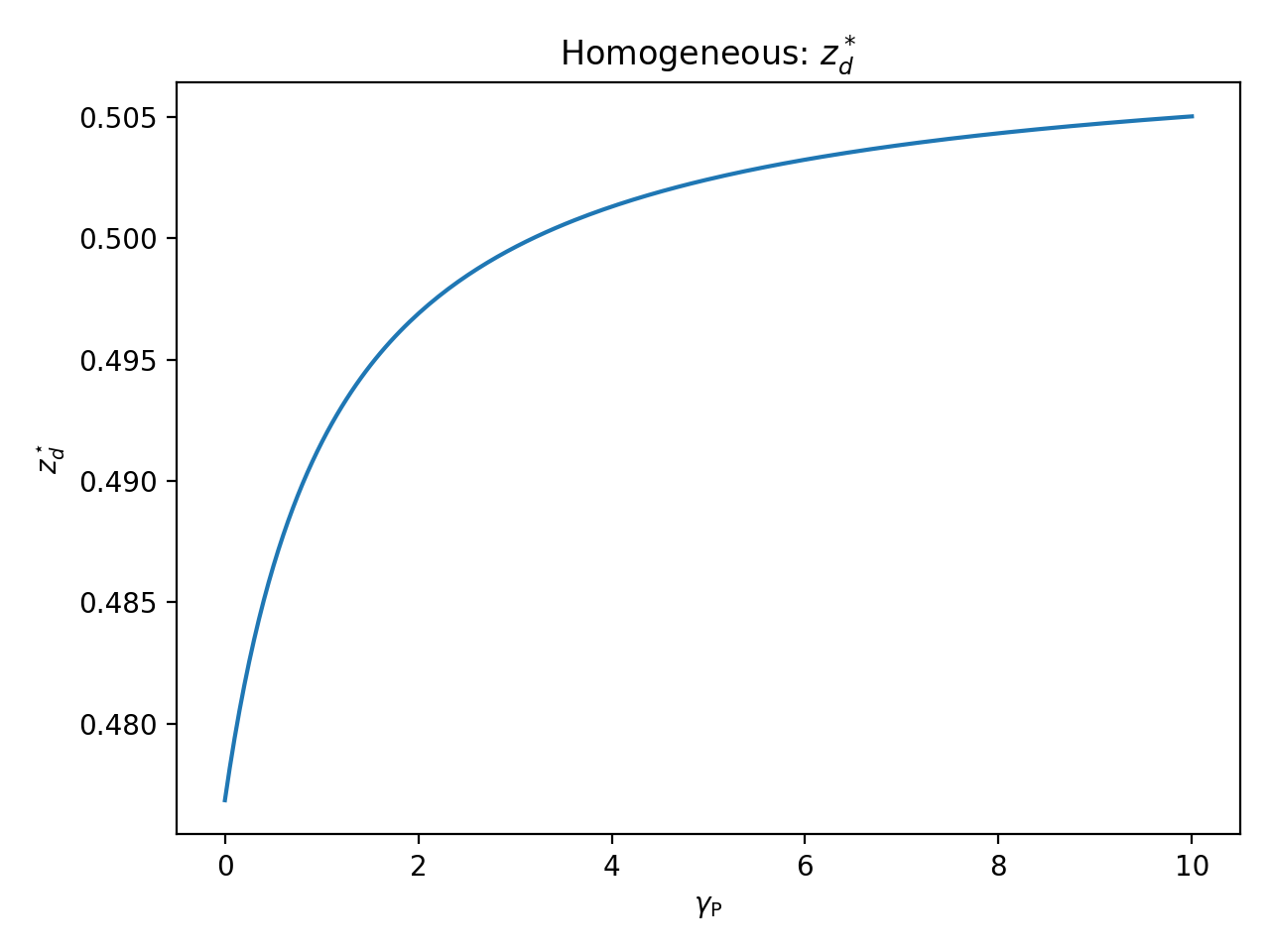}\hfill
  \includegraphics[width=.32\linewidth]{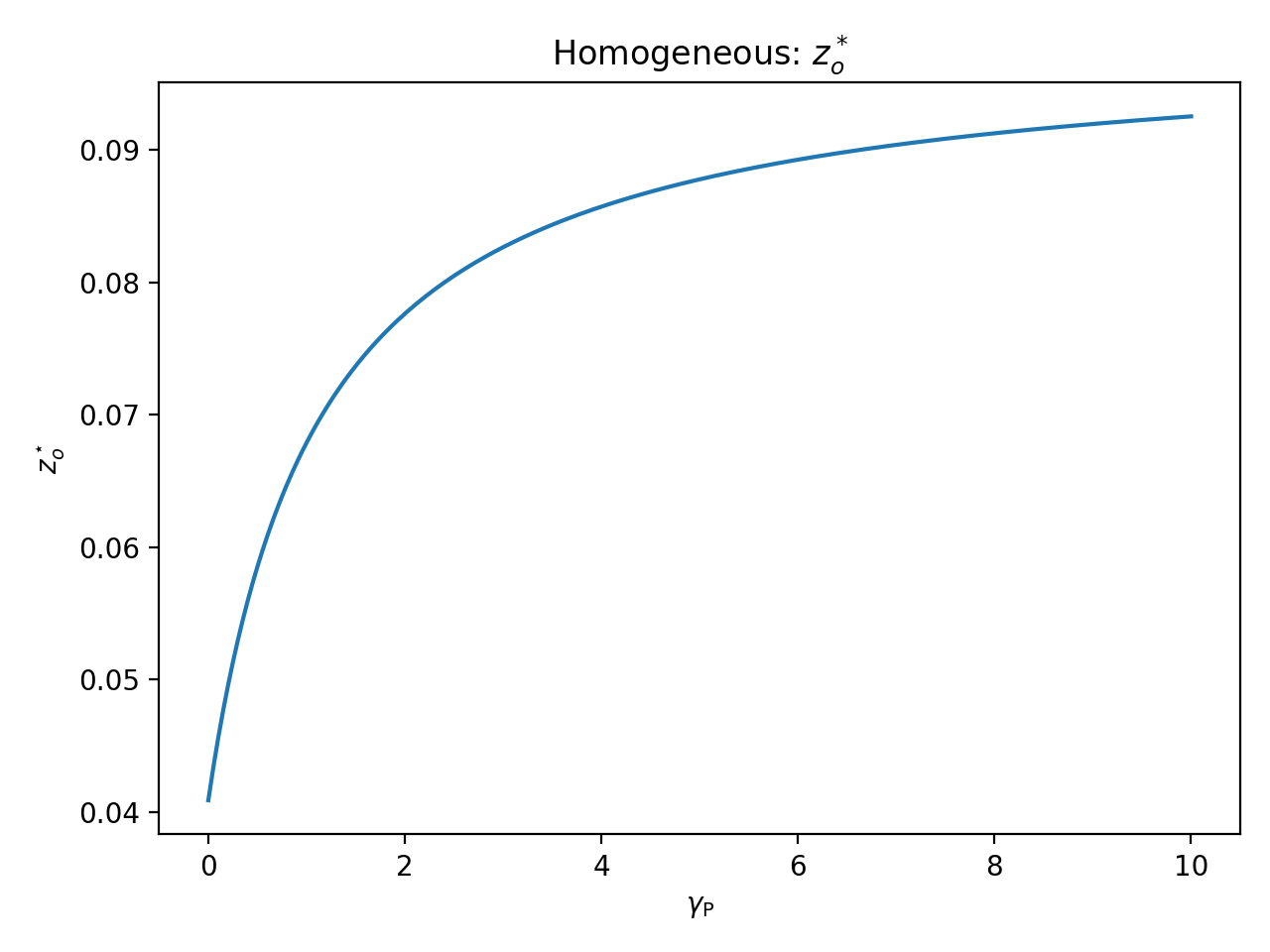}
  \caption{\small Homogeneous economy. Closed forms from \Cref{prop:homogeneous-closed-form}.}
  \label{fig:homo-all}
\end{figure}

\subsection{Heterogeneous case in the large risk aversion limit}
\label{subsec:large-gP-explicit}

We study in this subsection the limit case $\gamma_{\smallertext{\rm P}}\longrightarrow\infty$. Recall from \eqref{eq:fn_def} that
\[
f(z^\smallertext{Q},z^\smallertext{S})
=
g(z^\smallertext{Q},z^\smallertext{S})
-\frac{\gamma_{\smallertext{\rm P}}}{2}\Phi_n(z^\smallertext{Q},z^\smallertext{S}),
\]
with
\begin{align*}
g(z^\smallertext{Q},z^\smallertext{S})
&\coloneqq
-\frac{1}{n} \sum_{i=1}^n \Bigg(
\frac{(z^{\smallertext{Q},i,i})^2}{2c_i}
+\frac{\gamma_i}{2}\sum_{j=1}^n \nu_j^2 (z^{\smallertext{Q},i,j})^2
+\frac{\gamma_i\sigma^2}{2} (z^{\smallertext{S},i})^2
+\frac{\gamma_i\sigma}{\sqrt{n}}z^{\smallertext{S},i}\sum_{j=1}^n \rho_j\nu_j z^{\smallertext{Q},i,j}
\Bigg)
+\frac{1}{n}\sum_{i=1}^n \frac{z^{\smallertext{Q},i,i}}{c_i},
\end{align*}
and
\begin{align*}
\Phi_n(z^\smallertext{Q},z^\smallertext{S})
&\coloneqq \frac{1}{n^2} \sum_{i=1}^n \Bigg(
\bigg(\nu_i-\nu_i\sum_{j=1}^n z^{\smallertext{Q},j,i}
- \frac{\rho_i\sigma}{\sqrt{n}} \sum_{k=1}^n z^{\smallertext{S},k}\bigg)^2
+\frac{(1-\rho_i^2)\sigma^2}{n}\bigg(\sum_{k=1}^n z^{\smallertext{S},k}\bigg)^2\Bigg) \\
&=\big\|O_nx-o_n\big\|_{\smallertext W_n}^2
 \coloneqq (O_nx-o_n)^\top W_n (O_nx-o_n),
\end{align*}
where $W_n\in\R^{(n+1)\times(n+1)}$ is the diagonal matrix
\[
W_n\coloneqq \mathrm{Diag}\bigg[\underbrace{\frac{1}{n^2},\dots,\frac{1}{n^2}}_{\text{$n$ times}}, \frac{\sigma^2}{n^3}\sum_{i=1}^n(1-\rho_i^2)\bigg],
\]
and $O_n:\R^{n^2+n}\longrightarrow \R^{n+1}$ is the linear operator
\[
\begin{aligned}
(O_nx)_i&\coloneqq \nu_i\sum_{j=1}^n z^{\smallertext{Q},j,i} + \frac{\rho_i\sigma}{\sqrt{n}}\sum_{k=1}^n z^{\smallertext{S},k},\; i\in\{1,\dots,n\},\; (O_nx)_{n+1}&\coloneqq \sum_{k=1}^n z^{\smallertext{S},k},
\; x=(z^\smallertext{Q},z^\smallertext{S}),
\end{aligned}
\]
and
\[
o_n\coloneqq (\nu_1,\dots,\nu_n,0)^\top\in\R^{n+1}.
\]

\paragraph{Penalty interpretation.}
The quantity $\Phi_n(z^\smallertext Q,z^\smallertext S)$ is the squared weighted distance to the affine constraint set $\frF_n$ defined below in \Cref{eq:Fn}. When $\gamma_{\smallertext{\rm P}}$ is large, the principal first forces the contract close to these constraints---column sums of $z^\smallertext Q$ close to $1$, and aggregate $S$-exposure close to $0$---and only then optimises the residual row-by-row risk-sharing problem inside that affine set. This is the sense in which the large-$\gamma_{\smallertext{\rm P}}$ regime isolates the pure risk-management component of the contract.

\medskip
When $\gamma_{\smallertext{\rm P}}\to\infty$, the second term acts as a quadratic penalty. Thus we expect the optimisation problem
\begin{equation} \label{eq:opitmize_f_gamma}
\max_{x=(z^\smallertext{Q},z^\smallertext{S})\in\R^{\smalltext{n}^\tinytext{2}\smalltext{+}\smalltext{n}}}
f_\gamma(x)\coloneqq g(x)-\frac{\gamma_{\smallertext{\rm P}}}{2}\|O_nx-o_n\|_{\smallertext W_n}^2,
\end{equation}
to converge to the constrained problem
\begin{equation} \label{eq:opitmize_g_F}
\max_{x\in \R^{\smalltext{n}^\tinytext{2}\smalltext{+}\smalltext{n}}}  g(x)
\;\text{subject to}\;
x=(z^\smallertext{Q},z^\smallertext{S})\in\frF_n,
\end{equation}
where
\begin{equation}\label{eq:Fn}
\frF_n\coloneqq \big\{x\in\R^{n^\smalltext{2}+n}:O_nx=o_n\big\}
=
\Bigg\{(z^\smallertext{Q},z^\smallertext{S})\in\R^{n\times n}\times\R^n:
\sum_{j=1}^n z^{\smallertext{Q},j,i}=1,\;  i\in\{1,\dots,n\},\; 
\sum_{k=1}^n z^{\smallertext{S},k}=0\Bigg\}.
\end{equation}

\begin{lemma}[Existence, uniqueness, and KKT for the constrained problem]\label{lem:KKT-constrained}
For the constrained optimisation problem \eqref{eq:opitmize_g_F}, there exists a unique solution $x_n^\star =  \big(\mathrm{vec}(z_n^{\smallertext{Q},\star}),z_n^{\smallertext{S},\star}\big) \in \frF_n$.
Moreover, there is a unique multiplier $\iota_n^\star\in\R^{n+1}$ such that the KKT system
\begin{equation}\label{eq:KKT-constrained}
\begin{cases}
\partial g(x_n^\star)+O_n^\top \iota_n^\star=0,\\
O_nx_n^\star=o_n,
\end{cases}
\end{equation}
holds.
Equivalently,
\[
H_nx_n^\star+h_n+O_n^\top\iota_n^\star=0,\; O_nx_n^\star=o_n,
\]
where $H_n\coloneqq D^2 g$ and $h_n\coloneqq \partial g(0)$.
\end{lemma}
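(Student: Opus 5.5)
Since $g$ is quadratic with constant Hessian $H_n=D^2g$ and $\partial g(x)=H_nx+h_n$, the problem \eqref{eq:opitmize_g_F} is the maximisation of a concave quadratic over an affine subspace. The plan has four steps: show $\frF_n\neq\emptyset$, show $g$ is strictly concave on $\R^{n^2+n}$ (hence coercive), get existence and uniqueness of the maximiser, and then derive the KKT system from Lagrange duality with a separate argument for uniqueness of $\iota_n^\star$.

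\emph{Non-emptiness.} The point $z^{\smallertext Q}=\mathrm{I}_n$, $z^{\smallertext S}=0_n$ lies in $\frF_n$, since every column sum is $1$ and the aggregate $S$-exposure is $0$.

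\emph{Strict concavity.} This is the main step. I will use the decomposition \eqref{eq:concavity-decomp} with $\gamma_{\smallertext{\rm P}}=0$, which gives $g$ up to an additive constant:
\[
g(x)=\text{const}-\frac{1}{2n}\sum_{i=1}^n\bigg(\frac{(z^{\smallertext{Q},i,i}-1)^2}{c_i}+\gamma_i\Big(\sigma z^{\smallertext S,i}+\tfrac{1}{\sqrt n}\rho^\top v_i\Big)^2+\gamma_i v_i^\top\Big(\mathrm{I}_n-\tfrac1n\rho\rho^\top\Big)v_i\bigg),
\]
where $v_i\coloneqq\nu\odot(z^{\smallertext Q,i,:})^\top$.

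Because $|\rho_j|<1$, we have $\|\rho\|^2<n$. Hence $\mathrm{I}_n-\frac1n\rho\rho^\top\succ0$, with smallest eigenvalue $1-\|\rho\|^2/n>0$.

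The homogeneous quadratic part of $-g$ is therefore zero only if every $v_i=0$. Since $\nu_j>0$, this means $z^{\smallertext Q}=0$. The middle square then forces $\sigma z^{\smallertext S,i}=0$ for each $i$, so $z^{\smallertext S}=0$.

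It follows that $H_n\prec0$, i.e.\ $-H_n\succeq\lambda\,\mathrm{I}$ for some $\lambda>0$. Consequently $g(x)\to-\infty$ as $\|x\|\to\infty$.

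\emph{Existence and uniqueness.} The set $\frF_n$ is closed and non-empty, and $g$ is continuous and coercive on it, so a maximiser exists. Strict concavity of $g$ on the convex set $\frF_n$ gives uniqueness of $x_n^\star$.

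\emph{KKT system.} Write $x=x_n^\star+u$ with $u\in\ker(O_n)$. First-order optimality along the subspace $\ker(O_n)$ gives $\partial g(x_n^\star)\perp\ker(O_n)$. By the identity $\ker(O_n)^\perp=\mathrm{range}(O_n^\top)$, there exists $\iota$ with $\partial g(x_n^\star)=-O_n^\top\iota$. Together with $O_nx_n^\star=o_n$, this is \eqref{eq:KKT-constrained}. Substituting $\partial g(x)=H_nx+h_n$ gives the equivalent form $H_nx_n^\star+h_n+O_n^\top\iota_n^\star=0$.

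\emph{Uniqueness of the multiplier.} Two multipliers $\iota,\iota'$ would satisfy $O_n^\top(\iota-\iota')=0$, so it suffices to show $O_n$ is surjective onto $\R^{n+1}$, i.e.\ $O_n^\top$ is injective. Take $w=(w_1,\dots,w_{n+1})$ with $w\cdot O_nx=0$ for all $x$.

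Testing with $x$ equal to the single entry $z^{\smallertext Q,j,i}=1$ gives $w_i\nu_i=0$, so $w_i=0$ for $i\le n$.

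Testing with $z^{\smallertext S,1}=1$ then gives $w_{n+1}=0$, since the $n$ remaining $Q$-components of $w$ already vanish.

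Hence $O_n$ has full row rank $n+1$ and $\iota_n^\star$ is unique. The same surjectivity also gives non-emptiness of $\frF_n$ independently of the explicit point above.
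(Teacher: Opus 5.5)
Your proposal is correct and follows essentially the paper's proof. Strict concavity of $g$, together with $\frF_n$ being non-empty and closed, gives the unique maximiser; the KKT system is the standard first-order condition for equality-constrained concave maximisation; and uniqueness of $\iota_n^\star$ comes from full row rank of $O_n$, which you prove exactly as in \Cref{lem:row-rank}. The one addition is that you actually justify the strict concavity the paper only asserts, via \eqref{eq:concavity-decomp} at $\gamma_{\rm P}=0$ and $\lambda_{\min}(\mathrm{I}_n-\frac1n\rho\rho^\top)=1-\|\rho\|^2/n>0$; this is a worthwhile detail, since the main text only records positive semidefiniteness of that matrix.
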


\begin{proof}
Notice first that the function $g$ is a strictly concave quadratic, so there exists $m_n>0$ such that
\begin{equation}\label{eq:strong-concavity}
-y^\top H_n y\ge m_n\|y\|^2,
\; y\in\R^{n^2+n},
\end{equation}
Then, since $\frF_n$ is non-empty and closed, there exists a unique maximiser $x_n^\star$ of $g$ over $\frF_n$. By \Cref{lem:row-rank}, $O_n$ has full row rank, hence the multiplier $\iota_n^\star$ is unique. The KKT conditions are necessary and sufficient for equality-constrained concave maximisation.
\end{proof}

For the penalised problem \eqref{eq:opitmize_f_gamma}, \Cref{prop:FOC-block} yields a unique global maximiser $x_{\gamma,n}$ for each $\gamma_{\smallertext{\rm P}}>0$. Moreover, it satisfies
\begin{equation}\label{eq:stat-penalty}
D g(x_{\gamma,n})-\gamma_{\smallertext{\rm P}}O_n^\top W_n(O_nx_{\gamma,n}-o_n)=0
\Longleftrightarrow
H_nx_{\gamma,n}+h_n=\gamma_{\smallertext{\rm P}}O_n^\top W_n r_{\gamma,n},
\end{equation}
where
\[
r_{\gamma,n}\coloneqq O_nx_{\gamma,n}-o_n.
\]
The next result formalises the convergence from the unconstrained to the constrained problem.
\begin{proposition}\label{thm:penalty-limit}
Let $x_n^\star$ solve the constrained problem \eqref{eq:KKT-constrained}, and let $x_{\gamma,n}$ maximise $f_\gamma$. Define the scaled multipliers
\[
\widehat\iota_{\gamma,n}\coloneqq -\gamma_{\smallertext{\rm P}}W_n r_{\gamma,n}\in\R^{n+1}.
\]
Then, as $\gamma_{\smallertext{\rm P}}\to\infty$,
\begin{itemize}
\item[$(i)$] $\|O_nx_{\gamma,n}-o_n\|_{\smallertext W_n}=\Oc(1/\gamma_{\smallertext{\rm P}})$;

\item[$(ii)$] $x_{\gamma,n}\to x_n^\star$, and $\|x_{\gamma,n}-x_n^\star\|=\Oc(1/\gamma_{\smallertext{\rm P}})$;

\item[$(iii)$] $\widehat\iota_{\gamma,n}\to \iota_n^\star$, and $\|\widehat\iota_{\gamma,n}-\iota_n^\star\|=\Oc(1/\gamma_{\smallertext{\rm P}})$;

\item[$(iv)$] $\big|g(x_n^\star)-g(x_{\gamma,n})\big|=\Oc(1/\gamma_{\smallertext{\rm P}})$.
\end{itemize}
\end{proposition}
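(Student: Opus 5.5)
We compare the two stationarity systems directly. Set $e_\gamma\coloneqq x_{\gamma,n}-x_n^\star$ and $\delta_\gamma\coloneqq \widehat\iota_{\gamma,n}-\iota_n^\star$. Subtract the KKT system \eqref{eq:KKT-constrained} (in the form $H_nx_n^\star+h_n+O_n^\top\iota_n^\star=0$) from the penalised stationarity condition \eqref{eq:stat-penalty}, which reads $H_nx_{\gamma,n}+h_n+O_n^\top\widehat\iota_{\gamma,n}=0$. This gives
\[
H_ne_\gamma+O_n^\top\delta_\gamma=0,\qquad O_ne_\gamma=r_{\gamma,n}=-\gamma_{\smallertext{\rm P}}^{-1}W_n^{-1}\widehat\iota_{\gamma,n},
\]
where we use $O_nx_n^\star=o_n$ and the fact that $W_n$ is diagonal with positive entries (since $\rho_i\in(-1,1)$). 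Substituting $\widehat\iota_{\gamma,n}=\iota_n^\star+\delta_\gamma$, the pair $(e_\gamma,\delta_\gamma)$ solves the perturbed saddle-point system
\[
\begin{pmatrix} H_n & O_n^\top\\ O_n & \gamma_{\smallertext{\rm P}}^{-1}W_n^{-1}\end{pmatrix}\begin{pmatrix} e_\gamma\\ \delta_\gamma\end{pmatrix}=\begin{pmatrix}0\\ -\gamma_{\smallertext{\rm P}}^{-1}W_n^{-1}\iota_n^\star\end{pmatrix}.
\]

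The main step, and the only real obstacle, is to show that the inverse of this matrix is bounded uniformly in $\gamma_{\smallertext{\rm P}}\ge1$. At $\gamma_{\smallertext{\rm P}}^{-1}=0$ the matrix is the KKT matrix $\mathcal K_0$. It is invertible because $-H_n\succ0$ by \eqref{eq:strong-concavity} and $O_n$ has full row rank by \Cref{lem:row-rank}. Concretely, the Schur complement $O_nH_n^{-1}O_n^\top$ is negative definite.

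For general $\gamma_{\smallertext{\rm P}}$ we eliminate $e_\gamma=-H_n^{-1}O_n^\top\delta_\gamma$. This yields
\[
\big(-O_nH_n^{-1}O_n^\top+\gamma_{\smallertext{\rm P}}^{-1}W_n^{-1}\big)\delta_\gamma=-\gamma_{\smallertext{\rm P}}^{-1}W_n^{-1}\iota_n^\star.
\]
The matrix on the left is a sum of a positive definite matrix and a positive semidefinite one. Its smallest eigenvalue is therefore at least $\lambda_{\min}(-O_nH_n^{-1}O_n^\top)>0$, independently of $\gamma_{\smallertext{\rm P}}$. Hence
\[
\|\delta_\gamma\|\le \gamma_{\smallertext{\rm P}}^{-1}\,\|W_n^{-1}\iota_n^\star\|\,/\,\lambda_{\min}(-O_nH_n^{-1}O_n^\top)=\Oc(1/\gamma_{\smallertext{\rm P}}),
\]
which is $(iii)$. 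Then $\|e_\gamma\|\le\|H_n^{-1}O_n^\top\|\,\|\delta_\gamma\|=\Oc(1/\gamma_{\smallertext{\rm P}})$, which is $(ii)$.

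The remaining items follow quickly. For $(i)$, $O_nx_{\gamma,n}-o_n=O_ne_\gamma$, so $\|O_nx_{\gamma,n}-o_n\|_{W_n}\le\|W_n^{1/2}O_n\|\,\|e_\gamma\|=\Oc(1/\gamma_{\smallertext{\rm P}})$. As a cross-check, the same rate follows from $r_{\gamma,n}=-\gamma_{\smallertext{\rm P}}^{-1}W_n^{-1}\widehat\iota_{\gamma,n}$ together with the boundedness of $\widehat\iota_{\gamma,n}$. For $(iv)$, $g$ is quadratic, hence Lipschitz on the bounded set $\{x:\|x-x_n^\star\|\le1\}$ with constant $\|Dg(x_n^\star)\|+\|H_n\|$. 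Therefore $|g(x_n^\star)-g(x_{\gamma,n})|\le C\|e_\gamma\|=\Oc(1/\gamma_{\smallertext{\rm P}})$. A first-order expansion using $Dg(x_n^\star)=-O_n^\top\iota_n^\star$ gives $Dg(x_n^\star)\cdot e_\gamma=-\iota_n^\star\cdot r_{\gamma,n}$, which is also $\Oc(1/\gamma_{\smallertext{\rm P}})$. No sharper rate is claimed, so the Lipschitz bound suffices.

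All constants depend on $n$ and the model parameters, but not on $\gamma_{\smallertext{\rm P}}$.
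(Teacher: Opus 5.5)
Your proof is correct and is essentially the paper's argument. Both subtract the KKT system from the penalised stationarity condition and rely on the definiteness of the Schur complement $O_nH_n^{-1}O_n^\top$ (full row rank from \Cref{lem:row-rank} plus strict concavity of $g$); your equation $\big(-O_nH_n^{-1}O_n^\top+\gamma_{\rm P}^{-1}W_n^{-1}\big)\delta_\gamma=-\gamma_{\rm P}^{-1}W_n^{-1}\iota_n^\star$ is the multiplier-side counterpart of the paper's $(\mathrm{I}_{n+1}+\gamma_{\rm P}\Sc_n)y_{\gamma,n}=W_n^{1/2}O_nH_n^{-1}O_n^\top\iota_n^\star$ for the weighted residual. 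Solving for $\delta_\gamma$ first and then reading off $e_\gamma=-H_n^{-1}O_n^\top\delta_\gamma$ is a mild streamlining, since it removes the need for the paper's preliminary energy bound and its range/kernel splitting of $x_{\gamma,n}-x_n^\star$.
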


We now solve the KKT system explicitly. For $i\in\{1,\dots,n\}$ set
\[
\mathcal H_{i,n}\coloneqq \mathrm{Diag}\bigg[\gamma_i\nu_1^2,\dots,\gamma_i\nu_{i-1}^2, \gamma_i\nu_i^2+\frac1{c_i}, \gamma_i\nu_{i+1}^2,\dots,\gamma_i\nu_n^2\bigg],
\;
\mathcal P_{i,n}\coloneqq \mathcal H_{i,n}^{-1}=\mathrm{Diag}[p_{i,1},\dots,p_{i,n}],
\]
so that $p_{i,j}=1/(\gamma_i\nu_j^2)$ if $j\neq i$ and $p_{i,i}=1/(\gamma_i\nu_i^2+1/c_i)$. Define, for each column $j\in\{1,\dots,n\}$,
\[
\Theta_{j,n}\coloneqq \sum_{i=1}^n p_{i,j},
\;
\zeta_j\coloneqq \frac{p_{j,j}}{c_j}\in(0,1),
\;
a_j\coloneqq \frac{\rho_j}{c_j\nu_j(\gamma_j\nu_j^2+1/c_j)}
=\frac{\rho_j\zeta_j}{\nu_j}.
\]
For each row $k\in\{1,\dots,n\}$ set
\begin{gather*}
M_{k,j}\coloneqq \rho_j\nu_j p_{k,j},
\;
\upsilon_{k,n}\coloneqq \sum_{j=1}^n \rho_j^2\nu_j^2 p_{k,j},
\;
\varphi_{k,n}\coloneqq 1-\frac{\gamma_k}{n}\upsilon_{k,n}\in(0,1),\\
w_{j,n}\coloneqq \frac{1}{n\Theta_{j,n}},
\;
\varphi_n\coloneqq (\varphi_{1,n},\dots,\varphi_{n,n})^\top.
\end{gather*}
Finally define the $n\times n$ matrix $L_n$ and the vectors $(U_n,V_n)\in\R^n\times\R^n$ by
\begin{gather*}
(L_n)_{k,j}\coloneqq \frac{1}{\varphi_{k,n}}w_{j,n}a_jM_{k,j}
=\frac{\rho_j^2\zeta_j p_{k,j}}{n\varphi_{k,n}\Theta_{j,n}}\ge 0,\; (k,j)\in\{1,\dots,n\}^2,
\;
(V_n)_k\coloneqq \frac{n}{\gamma_k\sigma^2\varphi_{k,n}},
\\
(U_n)_k\coloneqq -\frac{1}{\varphi_{k,n}}\Bigg(
\underbrace{\frac{1}{\sigma\sqrt n}\sum_{j=1}^n \frac{1-\zeta_j}{\Theta_{j,n}}M_{k,j}}_{\eqqcolon C_{\smalltext{k}\smalltext{,}\smalltext{n}}}
+\underbrace{\frac{1}{\sigma\sqrt n}\frac{\rho_k\nu_k}{c_k}p_{k,k}}_{\eqqcolon r_{\smalltext{k}\smalltext{,}\smalltext{n}}}
\Bigg),
\; (k,j)\in\{1,\dots,n\}^2.
\end{gather*}

\begin{proposition}[Explicit constrained maximiser on $\frF_n$]\label{prop:explicit-column}
Set
\[
u_n\coloneqq (\mathrm{I}_n-L_n)^{-1}V_n,
\;
v_n\coloneqq (\mathrm{I}_n-L_n)^{-1}U_n,
\;
\vartheta_n^\star\coloneqq -\frac{{1}_n^\top v_n}{{1}_n^\top u_n}.
\]
Then the unique maximiser of $g$ on $\frF_n$ is given by
\begin{gather}
\bar z_n^{\smallertext S}
\coloneqq u_n\vartheta_n^\star+v_n,\\
\bar z_n^{\smallertext Q,i,j}
\coloneqq p_{i,j}\bigg(n\bar\mu_{j,n}+\frac{1}{c_i}\mathbf{1}_{\{i=j\}}-\frac{\gamma_i\sigma}{\sqrt{n}}\bar z_n^{\smallertext S,i}\rho_j\nu_j\bigg),
\; (i,j)\in\{1,\dots,n\}^2,
\label{eq:zQ-column}
\end{gather}
where
\begin{equation}
\bar\mu_{j,n}\coloneqq\frac{1}{n\Theta_{j,n}}\bigg(1-\zeta_j-\frac{\sigma}{\sqrt{n}}a_j\bar z_n^{\smallertext S,j}\bigg),
\; j\in\{1,\dots,n\}.
\label{eq:mu-column}
\end{equation}
Moreover $L_n\ge 0$ and $L_n^\top\varphi_n<\varphi_n.$ Consequently $\spr(L_n)<1$, so $(\mathrm{I}_n-L_n)$ is a nonsingular $M$-matrix and
\[
(\mathrm{I}_n-L_n)^{-1}\ge 0.
\]
\end{proposition}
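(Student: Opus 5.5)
Since $g$ is separable across the rows of $(z^{\smallertext Q},z^{\smallertext S})$, I will solve the KKT system of \Cref{lem:KKT-constrained} row by row. I write the multiplier as $\iota_n^\star=(n\mu_{1,n},\dots,n\mu_{n,n},\theta)$ up to the positive normalisations coming from the factor $1/n$ in $g$. The column constraint $\sum_i z^{\smallertext Q,i,j}=1$ contributes the multiplier $\mu_{j,n}$ to every entry of column $j$, and the market-neutrality constraint contributes $\theta$ to every $z^{\smallertext S,i}$.

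\textbf{Stationarity in $z^{\smallertext Q,i,:}$.} The stationarity condition with respect to $z^{\smallertext Q,i,:}$ reads $\mathcal H_{i,n}z^{\smallertext Q,i,:}=n\mu_{:,n}+\frac1{c_i}e_i-\frac{\gamma_i\sigma}{\sqrt n}z^{\smallertext S,i}(\rho\odot\nu)$. This is exactly \eqref{eq:zQ-column} with $\bar\mu_{j,n}=\mu_{j,n}$.

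\textbf{Column constraints.} Summing \eqref{eq:zQ-column} over $i$ and imposing the column sum equal to one gives
\[
n\Theta_{j,n}\mu_{j,n}+\zeta_j-\frac{\sigma}{\sqrt n}\rho_j\nu_j\sum_i p_{i,j}\gamma_iz^{\smallertext S,i}=1 .
\]
The cross term must be handled carefully. Since $p_{i,j}\gamma_i=1/\nu_j^2$ for $i\neq j$, it couples all rows. I will reorganise it so that, after substituting back, the only coupling left is through the diagonal correction captured by $a_j$, yielding \eqref{eq:mu-column}. If the algebra leaves a residual common term $\sum_i z^{\smallertext S,i}$, I will absorb it using the constraint $\sum_k z^{\smallertext S,k}=0$, which makes that term vanish on $\frF_n$. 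This bookkeeping is the step I expect to be the most error-prone.

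\textbf{Stationarity in $z^{\smallertext S,k}$.} The condition is
\[
\gamma_k\sigma^2z^{\smallertext S,k}+\frac{\gamma_k\sigma}{\sqrt n}\sum_j\rho_j\nu_jz^{\smallertext Q,k,j}=n\theta .
\]
I substitute \eqref{eq:zQ-column} for $z^{\smallertext Q,k,j}$ and \eqref{eq:mu-column} for $\mu_{j,n}$, and collect terms.
\begin{itemize}
\item The self-term produces $\gamma_k\sigma^2\varphi_{k,n}z^{\smallertext S,k}$, via $\upsilon_{k,n}$.
\item The $\mu$-terms produce $-\gamma_k\sigma^2\sum_jw_{j,n}a_jM_{k,j}z^{\smallertext S,j}$.
\item The constants $1-\zeta_j$ and $1/c_k$ produce $C_{k,n}$ and $r_{k,n}$.
\end{itemize}
Dividing by $\gamma_k\sigma^2\varphi_{k,n}/n$ gives
\[
(\mathrm I_n-L_n)z^{\smallertext S}=V_n\theta+U_n .
\]
Once invertibility is established, this yields $z^{\smallertext S}=u_n\theta+v_n$. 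Imposing $1_n^\top z^{\smallertext S}=0$ then gives $\theta=\vartheta_n^\star$, provided $1_n^\top u_n\neq0$. Uniqueness from \Cref{lem:KKT-constrained} identifies this KKT point as the maximiser.

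\textbf{$M$-matrix property.} Nonnegativity $L_n\ge0$ is immediate from the formula. For $L_n^\top\varphi_n<\varphi_n$, I compute
\[
(L_n^\top\varphi_n)_j=\frac{\rho_j^2\zeta_j}{n\Theta_{j,n}}\sum_kp_{k,j}=\frac{\rho_j^2\zeta_j}{n}<\varphi_{j,n}.
\]
The last inequality follows from $\varphi_{j,n}=1-\frac{\gamma_j}{n}\upsilon_{j,n}$ and $\gamma_j\rho_\ell^2\nu_\ell^2p_{j,\ell}\le\rho_\ell^2$, which give $\varphi_{j,n}\ge1-\frac1n\sum_\ell\rho_\ell^2>1-\frac1n$ (the same bound shows $\varphi_{j,n}\in(0,1)$). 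Combined with $\rho_j^2\zeta_j<1$, this gives $\rho_j^2\zeta_j/n<1/n<\varphi_{j,n}$ as long as $1/n\le1-1/n$, i.e.\ $n\ge2$. For $n=1$ I will use the sharper estimate $\varphi=1-\rho^2\gamma\nu^2p<1$ together with $\zeta=1-\gamma\nu^2p$, so that $\rho^2\zeta<1-\rho^2\gamma\nu^2p$ reduces to $\rho^2<1$.

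A positive vector $\varphi_n$ with $L_n^\top\varphi_n<\varphi_n$ gives $\spr(L_n)=\spr(L_n^\top)<1$ by the Collatz–Wielandt bound. Hence the Neumann series gives $(\mathrm I_n-L_n)^{-1}=\sum_kL_n^k\ge0$. Since $V_n>0$, it follows that $u_n>0$, so $1_n^\top u_n>0$ and $\vartheta_n^\star$ is well defined.
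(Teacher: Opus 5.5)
Your route is the paper's: solve each row through $\mathcal P_{i,n}$, and eliminate the column multipliers using $1_n^\top z^S=0$. The bookkeeping you left open does close, because on that hyperplane $\sum_i p_{i,j}\gamma_i z^{S,i}=\nu_j^{-2}\sum_{i\neq j}z^{S,i}+\gamma_j p_{j,j}z^{S,j}=-\zeta_j z^{S,j}/\nu_j^2$. You then reduce the $z^S$-stationarity to $(\mathrm{I}_n-L_n)z^S=V_n\theta+U_n$ and conclude with a Collatz--Wielandt bound weighted by $\varphi_n$.

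The step that fails is your proof of $L_n^\top\varphi_n<\varphi_n$ for $n\ge 2$.
\begin{itemize}
\item The inequality $1-\frac1n\sum_\ell\rho_\ell^2>1-\frac1n$ requires $\|\rho\|^2<1$, which is not assumed. The paper's own six-agent calibration has $\|\rho\|^2=1.275$.
\item The bound $\varphi_{j,n}>1/n$ that your chain actually uses is false in general. Take $n=2$, $c_i=\gamma_i=\nu_i=1$, $\rho_1=\rho_2=0.9$. Then $\zeta_j=p_{j,j}=1/2$, $p_{j,\ell}=1$ for $\ell\neq j$, $\upsilon_{j,2}=1.215$, and $\varphi_{j,2}=0.3925<1/2$.
\item So the chain $\rho_j^2\zeta_j/n<1/n<\varphi_{j,n}$ breaks at the second inequality, even though the target inequality ($0.2025<0.3925$) holds.
\end{itemize}
The culprit is the crude bound $\gamma_j\nu_\ell^2p_{j,\ell}\le 1$. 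It discards exactly the $\rho_j^2\zeta_j/n$ contribution to $\varphi_{j,n}$ that must be matched against $(L_n^\top\varphi_n)_j$.

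The repair is the exact computation you already did for $n=1$. Since $\gamma_j\nu_\ell^2p_{j,\ell}=1$ for $\ell\neq j$ and $\gamma_j\nu_j^2p_{j,j}=1-\zeta_j$, one has $\gamma_j\upsilon_{j,n}=\|\rho\|^2-\rho_j^2\zeta_j$. Hence $\varphi_{j,n}=1-\frac1n\|\rho\|^2+\frac{\rho_j^2\zeta_j}{n}$, and $\varphi_{j,n}-(L_n^\top\varphi_n)_j=1-\frac1n\|\rho\|^2>0$ for every $n\ge1$, with no case split. This is precisely the paper's argument. It also gives $\varphi_{j,n}>0$, hence $V_n>0$ and $1_n^\top u_n>0$, which you rightly check and the paper leaves implicit.

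Two harmless slips:
\begin{itemize}
\item The divisor that produces $V_n$ is $\gamma_k\sigma^2\varphi_{k,n}$, not $\gamma_k\sigma^2\varphi_{k,n}/n$.
\item $O_n$ encodes the constraints as $\nu_i\sum_j z^{Q,j,i}+\frac{\rho_i\sigma}{\sqrt n}\sum_k z^{S,k}=\nu_i$. So the link between $\iota_n^\star$ and your $(\mu_n,\theta)$ is the linear change of variables $\mu_{j,n}=\nu_j(\iota_n^\star)_j$, $\theta=\frac{\sigma}{\sqrt n}\sum_i\rho_i(\iota_n^\star)_i+(\iota_n^\star)_{n+1}$, not a positive rescaling. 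This does not matter, since both constraint systems describe the same affine set.
\end{itemize}
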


\paragraph{Interpretation of the reduced system.}
The multipliers $\bar\mu_{j,n}$ are the column-by-column corrections that enforce $\sum_i \bar z_n^{\smallertext Q,i,j}=1$. Once those have been eliminated, the only genuine degrees of freedom are the entries of $\bar z_n^{\smallertext S}$ subject to $\mathrm{1}_n^\top \bar z_n^{\smallertext S}=0$. The matrix $L_n$ measures how the choice of the $S$-tilt in column $j$ feeds back into row $k$ through the column constraints, so the formula $(\mathrm{I}_n-L_n)^{-1}$ is the exact resolvent of that correlation-rebalancing mechanism.

\begin{remark}[Structural simplification as $\gamma_\smallertext{\rm P}\to\infty$]
At finite $\gamma_\smallertext{\rm P}>0$, the first-order conditions form a coupled $(n^2+n)\times(n^2+n)$ linear system. In the penalty limit, the constraints $O_nx=o_n$ bind and the problem reduces to
\begin{enumerate}
\item[$(i)$] $n$ decoupled diagonal solves, one per row $i$, delivering
\[
q_i
=
\mathcal P_{i,n}\bigg(
\frac{1}{c_i}e_i
-\frac{\gamma_i\sigma}{\sqrt n}\bar z_n^{\smallertext S,i}(\rho\odot\nu)
+n(\bar\mu_{1,n},\dots,\bar\mu_{n,n})^\top
\bigg);
\]
\item[$(ii)$] a single $n\times n$ reduced system of nonsingular $M$-matrix type, solved in closed form through $(\mathrm{I}_n-L_n)^{-1}\ge0$ and the linear constraint $\mathrm{1}_n^\top \bar z_n^{\smallertext S}=0$.
\end{enumerate}
\end{remark}

\begin{proposition}[Sign structure when $\rho$ is one-sided]\label{prop:signs-large-gp-column}
Suppose $\rho_i$ share a common sign $s\in\{-1,1\}$ and are not all zero.
Let $u_n$ and $v_n$ be as in {\rm\Cref{prop:explicit-column}}. Then $u_n>0$ entrywise, while $v_n$ has the constant sign $-s$.
Consequently, the constrained-limit vector $\bar z_n^{\smallertext S}=u_n\vartheta_n^\star+v_n$ satisfies exactly one of the following two alternatives:
\begin{itemize}
\item either $\bar z_n^{\smallertext S}\equiv 0$;
\item or $\bar z_n^{\smallertext S}$ has mixed signs.
\end{itemize}
In particular, every non-zero constrained limit has at least one positive and one negative component.
\end{proposition}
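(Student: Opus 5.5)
The plan is to get everything from the sign structure of the data $(L_n,U_n,V_n)$ and the $M$-matrix property already established in \Cref{prop:explicit-column}. Since $L_n\ge0$ and $\spr(L_n)<1$, the Neumann series gives
\[
(\mathrm{I}_n-L_n)^{-1}=\sum_{k\ge0}L_n^k=\mathrm{I}_n+\sum_{k\ge1}L_n^k\ \ge\ \mathrm{I}_n \quad\text{(componentwise)}.
\]
Hence, for any $w\in\R^n$ with $w>0$ entrywise, $(\mathrm{I}_n-L_n)^{-1}w\ge w>0$. This is the monotonicity principle I will apply twice.

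\emph{Step 1: $u_n>0$.} By definition $(V_n)_k=n/(\gamma_k\sigma^2\varphi_{k,n})$. We have $\gamma_k>0$, and $\varphi_{k,n}\in(0,1)$ by \Cref{prop:explicit-column}. So $V_n>0$ entrywise, and the principle gives $u_n\ge V_n>0$.

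\emph{Step 2: $v_n$ has constant sign $-s$.} First I check the sign of $U_n$, term by term, for each row $k$. In $C_{k,n}$, every summand $\frac{1-\zeta_j}{\Theta_{j,n}}M_{k,j}=\frac{1-\zeta_j}{\Theta_{j,n}}\rho_j\nu_jp_{k,j}$ has the sign of $\rho_j$, since $\zeta_j\in(0,1)$ and $\Theta_{j,n},\nu_j,p_{k,j}>0$. So every such summand is either $0$ or of sign $s$. Moreover, at least one index $j_0$ has $\rho_{j_0}\neq0$, and its summand is strictly of sign $s$, for every $k$. The term $r_{k,n}=\frac{1}{\sigma\sqrt n}\frac{\rho_k\nu_k}{c_k}p_{k,k}$ is also either $0$ or of sign $s$. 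Hence $s\,(C_{k,n}+r_{k,n})>0$ for every $k$, and since $\varphi_{k,n}>0$ we get $s\,U_n<0$ entrywise. Applying the monotonicity principle to $w=-sU_n>0$ gives $-s\,v_n=(\mathrm{I}_n-L_n)^{-1}(-sU_n)\ge -sU_n>0$. So $v_n$ has strict constant sign $-s$. As a by-product, $\vartheta_n^\star=-\mathrm{1}_n^\top v_n/\mathrm{1}_n^\top u_n$ is well defined and has sign $s$.

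\emph{Step 3: the dichotomy.} By the choice of $\vartheta_n^\star$ we have $\mathrm{1}_n^\top\bar z_n^{\smallertext S}=\vartheta_n^\star\mathrm{1}_n^\top u_n+\mathrm{1}_n^\top v_n=0$, which is the market-neutrality constraint in $\frF_n$. A real vector with zero sum is either identically zero, or it has a strictly positive entry, in which case it must also have a strictly negative one (and conversely). This gives exactly the two alternatives of the statement. The one-sided sign of $\rho$ matters for the finer structure: $\bar z_n^{\smallertext S}$ is the difference of two opposite-sign positive-direction vectors, $u_n\vartheta_n^\star$ of sign $s$ and $v_n$ of sign $-s$. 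This explains why, generically under heterogeneity, the two do not cancel.

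The only step requiring care is Step 2. There one must verify strict positivity of $-sU_n$ in \emph{every} row $k$, not merely in rows with $\rho_k\ne0$. This rests on $C_{k,n}$ pooling all columns $j$ with positive weights $p_{k,j}$, so a single nonzero $\rho_{j_0}$ suffices. Everything else is a direct consequence of $(\mathrm{I}_n-L_n)^{-1}\ge\mathrm{I}_n$.
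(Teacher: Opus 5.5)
Your proof is correct and follows the same route as the paper: nonnegativity of $(\mathrm{I}_n-L_n)^{-1}$ applied to $V_n>0$ and to $-sU_n>0$, which fixes the sign $s$ of $\vartheta_n^\star$, and then the zero-sum constraint $\mathrm{1}_n^\top \bar z_n^{S}=0$ for the dichotomy. You are slightly more careful than the paper in two places. First, the bound $(\mathrm{I}_n-L_n)^{-1}\ge \mathrm{I}_n$ makes the strict inequalities explicit. Second, you observe that $r_{k,n}$ vanishes when $\rho_k=0$, whereas the paper asserts $r_{k,n}>0$ for every $k$; as you note, the strict sign of $(U_n)_k$ must then come from $C_{k,n}$ through some $\rho_{j_0}\neq0$.
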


\begin{proof}
Since $(\mathrm{I}_n-L_n)^{-1}\ge 0$ and $V_n>0$, one has
\[
u_n=(\mathrm{I}_n-L_n)^{-1}V_n>0.
\]
If $s=1$, then every $M_{k,j}\ge 0$, so $C_{k,n}>0$ and $r_{k,n}>0$ for every $k$; hence $(U_n)_k<0$ and therefore
\[
v_n=(\mathrm{I}_n-L_n)^{-1}U_n<0.
\]
If $s=-1$, the same argument with reversed signs gives $v_n>0$.
Thus $v_n$ has the constant sign $-s$. Now
\[
\vartheta_n^\star=-\frac{\mathrm{1}_n^\top v_n}{\mathrm{1}_n^\top u_n},
\]
so $\vartheta_n^\star$ has sign $s$. Since $\mathrm{1}_n^\top \bar z_n^{\smallertext S}=0$, the vector $\bar z_n^{\smallertext S}$ cannot be entrywise non-negative unless it is identically zero, and it cannot be entrywise non-positive unless it is identically zero. Therefore either $\bar z_n^{\smallertext S}\equiv 0$, or it has both a positive and a negative component.
\end{proof}

\begin{remark}
If $c_i\equiv c$, $\gamma_i\equiv\gamma$, $\nu_i\equiv\nu$, and $\rho_i\equiv\rho$, then symmetry gives $\bar z_n^{\smallertext S}\equiv 0$, and \eqref{eq:zQ-column} reduces to the positive, symmetric $Q$-weights consistent with the homogeneous formulas proved in {\rm\Cref{prop:homogeneous-closed-form}}. This is precisely the exceptional zero case allowed by {\rm\Cref{prop:signs-large-gp-column}}.
\end{remark}

The next results concentrate on sign properties for the optimal sensitivity to $Q$.

\begin{proposition}[Diagonal sign test in the large--$\gamma_{\rm P}$ regime]
\label{prop:diag-sign-large-gp}
Let $(\bar z_n^{\smallertext Q},\bar z_n^{\smallertext S})$ be the unique maximiser of $g$ on $\frF_n$.

\medskip
$(i)$ For each $i\in\{1,\dots,n\}$,
\[
\mathrm{sgn}\big(\bar z_n^{\smallertext S,i}\big)=-\mathrm{sgn}(\rho_i)
\Longrightarrow
\bar z_n^{\smallertext Q,i,i}>0.
\]
Consequently, there is no diagonal sign flip in the limit as long as row $i$ keeps the standard $S$-tilt opposite to $\rho_i$.

\medskip
$(ii)$ For each $i\in\{1,\dots,n\}$,
\[
\bar z_n^{\smallertext Q,i,i}
= p_{i,i}\Bigg(
\underbrace{\frac{1-\zeta_i}{\Theta_{i,n}}+\frac{1}{c_i}}_{\text{\rm positive baseline}}
 - \underbrace{\frac{\sigma}{\sqrt n}\bar z_n^{\smallertext S,i}\rho_i
\bigg(\gamma_i\nu_i+\frac{p_{i,i}}{c_i\nu_i\Theta_{i,n}}\bigg)}_{\text{\rm correlation-balancing correction}}
\Bigg).
\]
Hence
\[
\mathrm{sgn}\big(\bar z_n^{\smallertext Q,i,i}\big)=\mathrm{sgn}\big(\mathscr B_{i,n}\big),\;
\mathscr B_{i,n}\coloneqq
\frac{1-\zeta_i}{\Theta_{i,n}}+\frac{1}{c_i}
-\frac{\sigma}{\sqrt n}\bar z_n^{\smallertext S,i}\rho_i
\bigg(\gamma_i\nu_i+\frac{p_{i,i}}{c_i\nu_i\Theta_{i,n}}\bigg).
\]
In particular, if $\rho_i\neq 0$, $\mathrm{sgn}(\bar z_n^{\smallertext S,i})=\mathrm{sgn}(\rho_i)$, and
\[
|\bar z_n^{\smallertext S,i}| >
\frac{\sqrt{n}}{\sigma|\rho_i|
\Big(\gamma_i\nu_i+\frac{p_{i,i}}{c_i\nu_i\Theta_{i,n}}\Big)}
\bigg(\frac{1-\zeta_i}{\Theta_{i,n}}+\frac{1}{c_i}\bigg),
\]
then $\bar z_n^{\smallertext Q,i,i}<0$. If $\rho_i=0$, then $a_i=0$ and the diagonal term is automatically positive. 

\medskip
$(iii)$ For $(i,j)\in\{1,\dots,n\}^2$ with $i\neq j$,
\[
\bar z_n^{\smallertext Q,i,j}
= p_{i,j}\Bigg(
\underbrace{\frac{1-\zeta_j}{\Theta_{j,n}}}_{>0}
 - \underbrace{\frac{\gamma_i\sigma}{\sqrt n}\bar z_n^{\smallertext S,i}\rho_j\nu_j
+\frac{\sigma}{\sqrt n}\frac{a_j}{\Theta_{j,n}}\bar z_n^{\smallertext S,j}}_{\text{\rm correlation rebalancing}}
\Bigg).
\]
Thus $\bar z_n^{\smallertext Q,i,j}>0$ whenever
\[
\gamma_i\big|\bar z_n^{\smallertext S,i}\big||\rho_j|\nu_j
+\frac{|a_j|}{\Theta_{j,n}}\big|\bar z_n^{\smallertext S,j}\big|
\le
\frac{1-\zeta_j}{\Theta_{j,n}}\frac{\sqrt n}{\sigma}.
\]
In particular, if $|\bar z_n^{\smallertext S}|$ is moderate and $\rho$ is weakly dispersed, all off-diagonals remain positive.
\end{proposition}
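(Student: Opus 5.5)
The plan is to start from the explicit representation \eqref{eq:zQ-column}--\eqref{eq:mu-column} of \Cref{prop:explicit-column}. There the maximiser of $g$ on $\frF_n$ is written row by row through the diagonal matrices $\mathcal P_{i,n}$ and the column multipliers $\bar\mu_{j,n}$. All three parts then reduce to substituting $\bar\mu_{j,n}$ into $\bar z_n^{\smallertext Q,i,j}$ and reading off signs, using that $p_{i,j}>0$ always.

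\textbf{Part $(ii)$.} We prove this first, since $(i)$ follows from it. Set $j=i$ in \eqref{eq:zQ-column} and insert \eqref{eq:mu-column}:
\[
\bar z_n^{\smallertext Q,i,i}=p_{i,i}\bigg(\frac{1-\zeta_i}{\Theta_{i,n}}-\frac{\sigma}{\sqrt n}\frac{a_i}{\Theta_{i,n}}\bar z_n^{\smallertext S,i}+\frac1{c_i}-\frac{\gamma_i\sigma}{\sqrt n}\bar z_n^{\smallertext S,i}\rho_i\nu_i\bigg).
\]
Now use $a_i=\rho_i\zeta_i/\nu_i=\rho_i p_{i,i}/(c_i\nu_i)$. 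This groups the two $\bar z_n^{\smallertext S,i}$ terms into the common factor $\rho_i\big(\gamma_i\nu_i+p_{i,i}/(c_i\nu_i\Theta_{i,n})\big)$, which is the displayed formula. Since $p_{i,i}>0$, the sign of $\bar z_n^{\smallertext Q,i,i}$ equals $\mathrm{sgn}(\mathscr B_{i,n})$.

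The baseline is positive because $\zeta_i=p_{i,i}/c_i=1/(1+c_i\gamma_i\nu_i^2)\in(0,1)$ and $\Theta_{i,n}>0$. If $\mathrm{sgn}(\bar z_n^{\smallertext S,i})=\mathrm{sgn}(\rho_i)\ne0$, the correction equals $\frac{\sigma}{\sqrt n}|\bar z_n^{\smallertext S,i}||\rho_i|(\cdots)$ with a positive bracket. Rearranging $\mathscr B_{i,n}<0$ gives exactly the stated threshold on $|\bar z_n^{\smallertext S,i}|$. If $\rho_i=0$, then $a_i=0$ and the correction vanishes, so $\mathscr B_{i,n}$ equals the positive baseline.

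\textbf{Part $(i)$.} If $\mathrm{sgn}(\bar z_n^{\smallertext S,i})=-\mathrm{sgn}(\rho_i)$, then $\bar z_n^{\smallertext S,i}\rho_i\le 0$. The correction term is therefore non-positive, so $\mathscr B_{i,n}$ is at least the positive baseline and $\bar z_n^{\smallertext Q,i,i}>0$. One degenerate case needs a remark: with the convention $\mathrm{sgn}(0)=0$, the hypothesis allows $\rho_i=0=\bar z_n^{\smallertext S,i}$, and the same inequality still applies.

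\textbf{Part $(iii)$.} For $i\ne j$ the indicator in \eqref{eq:zQ-column} vanishes, and substituting \eqref{eq:mu-column} gives
\[
\bar z_n^{\smallertext Q,i,j}=p_{i,j}\bigg(\frac{1-\zeta_j}{\Theta_{j,n}}-\frac{\sigma}{\sqrt n}\frac{a_j}{\Theta_{j,n}}\bar z_n^{\smallertext S,j}-\frac{\gamma_i\sigma}{\sqrt n}\bar z_n^{\smallertext S,i}\rho_j\nu_j\bigg).
\]
This matches the display once the correction is written as $\frac{\gamma_i\sigma}{\sqrt n}\bar z_n^{\smallertext S,i}\rho_j\nu_j+\frac{\sigma}{\sqrt n}\frac{a_j}{\Theta_{j,n}}\bar z_n^{\smallertext S,j}$, with the minus sign applied to the whole sum. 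The sign convention of the printed display should be checked against this.

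Bounding the correction in absolute value by $\frac{\sigma}{\sqrt n}\big(\gamma_i|\bar z_n^{\smallertext S,i}||\rho_j|\nu_j+|a_j||\bar z_n^{\smallertext S,j}|/\Theta_{j,n}\big)$ shows the bracket is non-negative under the stated inequality. Strictly speaking, the hypothesis as written yields $\bar z_n^{\smallertext Q,i,j}\ge0$. Strict positivity needs either strict inequality or the observation that equality forces a non-generic configuration, and I would state the sufficient condition with strict inequality. The final sentence is just an informal reading of this bound.

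\textbf{Main obstacle.} The only real work is the careful bookkeeping in the substitution. The key point is the identity $a_i/\Theta_{i,n}=\rho_i p_{i,i}/(c_i\nu_i\Theta_{i,n})$, which allows the factorisation in $(ii)$, together with tracking the signs consistently in $(iii)$.
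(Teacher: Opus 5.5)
Your proposal is correct and follows the paper's proof essentially line by line. You substitute \eqref{eq:mu-column} into \eqref{eq:zQ-column}, use $a_i=\rho_i p_{i,i}/(c_i\nu_i)$ to factor the two $\bar z_n^{\smallertext S,i}$ terms, and read off signs from $p_{i,j}>0$ and the positive baseline. Your remark that the non-strict hypothesis in $(iii)$ only yields $\bar z_n^{\smallertext Q,i,j}\ge 0$, so that strict positivity needs a strict inequality, is a valid minor correction that the paper's one-line bounding argument passes over.
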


\begin{proof}
$(i)$ From the explicit diagonal entry,
\[
\bar z_n^{\smallertext Q,i,i}
= p_{i,i}\bigg(
n\bar\mu_{i,n}+\frac{1}{c_i}-\frac{\gamma_i\sigma}{\sqrt{n}}\bar z_n^{\smallertext S,i}\rho_i\nu_i
\bigg).
\]
Insert the formula for $\bar\mu_{i,n}$:
\[
n\bar\mu_{i,n}+\frac{1}{c_i}
= \frac{1}{\Theta_{i,n}}(1-\zeta_i)+\frac{1}{c_i}-\frac{\sigma}{\sqrt{n}}\frac{a_i}{\Theta_{i,n}}\bar z_n^{\smallertext S,i}.
\]
Hence
\[
\bar z_n^{\smallertext Q,i,i}
= p_{i,i}\mathscr B_{i,n},
\;
\mathscr B_{i,n}
=\underbrace{\bigg(\frac{1}{\Theta_{i,n}}(1-\zeta_i)+\frac{1}{c_i}\bigg)}_{C_{i,n}^0>0}
-\frac{\sigma}{\sqrt{n}}\underbrace{\bigg(\frac{a_i}{\Theta_{i,n}}+\gamma_i\rho_i\nu_i\bigg)}_{R_{i,n}\ \text{has sign }\mathrm{sgn}(\rho_i)}\bar z_n^{\smallertext S,i}.
\]
If $\mathrm{sgn}(\bar z_n^{\smallertext S,i})=-\mathrm{sgn}(\rho_i)$, then $R_{i,n}\bar z_n^{\smallertext S,i}<0$, so $\mathscr B_{i,n}>C_{i,n}^0>0$. Therefore $\bar z_n^{\smallertext Q,i,i}>0$.

\medskip
$(ii)$ The displayed decomposition is obtained by evaluating \eqref{eq:zQ-column} at $j=i$ and substituting \eqref{eq:mu-column}. Since $p_{i,i}>0$, the sign is the sign of $\mathscr B_{i,n}$. The sufficient condition is exactly the negativity condition for $\mathscr B_{i,n}$ under $\mathrm{sgn}(\bar z_n^{\smallertext S,i})=\mathrm{sgn}(\rho_i)$. When $\rho_i=0$, one has $a_i=0$, so the correction term vanishes and
\[
\mathscr B_{i,n}=\frac{1-\zeta_i}{\Theta_{i,n}}+\frac{1}{c_i}>0.
\]
Hence $\bar z_n^{\smallertext Q,i,i}>0$ automatically in that case.

\medskip
$(iii)$ Insert \eqref{eq:mu-column} into \eqref{eq:zQ-column} with $j\ne i$:
\[
\bar z_n^{\smallertext Q,i,j}
= p_{i,j}\bigg(
\frac{1-\zeta_j}{\Theta_{j,n}}
-\frac{\gamma_i\sigma}{\sqrt n}\bar z_n^{\smallertext S,i}\rho_j\nu_j
-\frac{\sigma}{\sqrt n}\frac{a_j}{\Theta_{j,n}}\bar z_n^{\smallertext S,j}
\bigg).
\]
The sufficient positivity condition follows by bounding the last two terms in absolute value.
\end{proof}

\subsection{Heterogeneous case for small \texorpdfstring{$\gamma_{\smallertext{\rm P}}>0$}{γP>0}}
\label{subsec:signs-gp0-small}

At the limit case $\gamma_\smallertext{\rm P}\longrightarrow0$, the rows decouple. For $i\in\{1,\dots,n\}$ set
\[
\pi_{i,n}\coloneqq \|\rho\|^2-\rho_i^2+\frac{\gamma_i}{A_i}\rho_i^2\in[0,n).
\]

\begin{proposition}[Closed forms and signs when $\gamma_\smallertext{\rm P} \to 0$]
\label{prop:hetero-gp0}
Let
\[
q_{i,j,n}^0\coloneqq \nu_j z^{\smallertext{Q},i,j,\star}\big|_{\gamma_\smalltext{\rm P}=0},
\;
s_{i,n}^0\coloneqq z^{\smallertext{S},i,\star}\big|_{\gamma_\smalltext{\rm P}=0}.
\]
Then, for all $(i,j)\in\{1,\dots,n\}^2$,
\begin{align}
q_{i,j,n}^0&= - \frac{\sigma}{\sqrt{n}} s_{i,n}^0\rho_j, \; j\ne i,
\label{eq:hetero-gp0-qoff}\\
q_{i,i,n}^0&= \frac{1}{A_i}\bigg(\frac{1}{c_i\nu_i}-\frac{\gamma_i\sigma}{\sqrt{n}}\rho_i s_{i,n}^0\bigg),
\label{eq:hetero-gp0-qdiag}\\
s_{i,n}^0&= -\frac{\sqrt{n}}{\sigma}\frac{ \rho_i}{ A_i c_i\nu_i(n-\pi_{i,n})}.
\label{eq:hetero-gp0-s}
\end{align}
In particular, $\mathrm{sgn}(s_{i,n}^0)= -\mathrm{sgn}(\rho_i),\;
z^{\smallertext{Q},i,i,\star}(0)>0,\; i\in\{1,\dots,n\},$ and, for $j\ne i$, $\mathrm{sgn}\big(z^{\smallertext{Q},i,j,\star}(0)\big)=\mathrm{sgn}(\rho_i\rho_j).$ \end{proposition}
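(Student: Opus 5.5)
The plan is to set $\gamma_{\rm P}=0$ directly in the first-order conditions of \Cref{prop:FOC-block}. All coupling terms carrying the factor $\gamma_{\rm P}$ then vanish: the $\mathrm{J}$-blocks in $H_{SS}$, $H_{QS}$, $H_{QQ}$, the $\gamma_{\rm P}$-part of $b_Q$, and all of $b_S$. The resulting system is block diagonal across rows $i$. Equivalently, in the decomposition \eqref{eq:concavity-decomp} only the row-wise terms survive, so for each $i$ we maximise a strictly concave quadratic in the $n+1$ variables $(z^{Q,i,:},z^{S,i})$. Writing $q_{i,j}=\nu_j z^{Q,i,j}$ and $s=z^{S,i}$, the row-$i$ objective is, up to the factor $1/n$ and a constant,
\[
\frac{q_{i,i}}{c_i\nu_i}-\frac{q_{i,i}^2}{2c_i\nu_i^2}-\frac{\gamma_i}{2}\|q_{i,:}\|^2-\frac{\gamma_i\sigma^2}{2}s^2-\frac{\gamma_i\sigma}{\sqrt n}s\sum_j\rho_jq_{i,j}.
\]
As a consistency check, one can instead take $\gamma_{\rm P}\to0$ in the formulas of \Cref{prop:FOC-block}. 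There $\alpha_{i,n}\to0$, $\kappa_{i,n}\to1$ and $\ell_{i,n}\to-\gamma_i\sigma\rho_i/(n^{3/2}A_ic_i\nu_i)$, while the explicit $\gamma_{\rm P}$-term of $\mu_{i,n}$ and $\lambda_n$ (hence $y_n$) vanish. This gives $s^0_{i,n}=\ell_{i,n}/\mu_{i,n}$ and the same expressions.

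The row stationarity conditions are as follows. For $j\neq i$, differentiating in $q_{i,j}$ gives $\gamma_iq_{i,j}+\gamma_i\sigma n^{-1/2}s\rho_j=0$, which is \eqref{eq:hetero-gp0-qoff}. For $j=i$ one gets $(\gamma_i+1/(c_i\nu_i^2))q_{i,i}=1/(c_i\nu_i)-\gamma_i\sigma n^{-1/2}\rho_is$, which is \eqref{eq:hetero-gp0-qdiag} since $A_i=\gamma_i+1/(c_i\nu_i^2)$. Differentiating in $s$ gives $\sigma s+n^{-1/2}\sum_j\rho_jq_{i,j}=0$.

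Next substitute the two $q$-expressions into the $s$-equation. The off-diagonal part contributes $-\sigma n^{-1}s(\|\rho\|^2-\rho_i^2)$. The diagonal part contributes $n^{-1/2}\rho_iA_i^{-1}\big(1/(c_i\nu_i)-\gamma_i\sigma n^{-1/2}\rho_is\big)$. Collecting the coefficients of $s$ gives
\[
s\,\frac{\sigma}{n}\big(n-\pi_{i,n}\big)=-\frac{\rho_i}{\sqrt n\,A_ic_i\nu_i},
\]
which is \eqref{eq:hetero-gp0-s}. The step needing care is checking that $n-\pi_{i,n}>0$, so that the division is legitimate and the sign is determined. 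This holds because $\gamma_i/A_i<1$ and $\rho_j^2<1$, hence $\pi_{i,n}<n$; it is also consistent with strict concavity.

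The signs then follow directly. Since $A_i,c_i,\nu_i,\sigma>0$ and $n-\pi_{i,n}>0$, we get $\mathrm{sgn}(s^0_{i,n})=-\mathrm{sgn}(\rho_i)$. Consequently $\rho_is^0_{i,n}\le0$, so the bracket in \eqref{eq:hetero-gp0-qdiag} is at least $1/(c_i\nu_i)>0$, giving $z^{Q,i,i,\star}(0)>0$. For $j\neq i$, $q^0_{i,j,n}=-\sigma n^{-1/2}\rho_js^0_{i,n}$ has the sign of $\rho_j\rho_i$, and dividing by $\nu_j>0$ preserves it. Here the convention $\mathrm{sgn}(0)=0$ covers the degenerate cases.

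The only real subtlety is interpreting the limit $\gamma_{\rm P}\to0$. The maximiser of \Cref{prop:FOC-block} depends continuously (rationally) on $\gamma_{\rm P}\ge0$, because the Hessian stays negative definite at $\gamma_{\rm P}=0$. Therefore the limit coincides with the $\gamma_{\rm P}=0$ solution computed above.
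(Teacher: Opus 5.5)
Your proposal is correct and follows essentially the paper's argument. At $\gamma_{\rm P}=0$ the objective decouples across rows, and you solve the row-$i$ stationarity conditions (off-diagonal, diagonal, then the $s$-equation), using $n-\pi_{i,n}>0$ to fix the signs; your cross-check $s^0_{i,n}=\ell_{i,n}/\mu_{i,n}$ via \Cref{prop:FOC-block} and your continuity argument for the limit $\gamma_{\rm P}\to0$ are correct details that the paper leaves implicit.
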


\begin{proof}
When $\gamma_\smallertext{\rm P}\longrightarrow0$, the objective $f$ separates across rows $i$ and is strictly concave; the row-$i$ first-order conditions give a $2\times 2$ linear system in $\big(q_{i,\cdot},s_i\big)$ whose unique solution is \eqref{eq:hetero-gp0-qoff}--\eqref{eq:hetero-gp0-s}. The sign conclusions are immediate from \eqref{eq:hetero-gp0-s} and the formulas above.
\end{proof}

\begin{proposition}[Comparative statics when $\gamma_\smallertext{\rm P} \to 0$]
\label{prop:hetero-gp0-CS}
From \eqref{eq:hetero-gp0-s}--\eqref{eq:hetero-gp0-qdiag}, for any $(i,j)\in\{1,\dots,n\}^2$
\begin{itemize}
\item $|s_{i,n}^0|$ increases with $|\rho_i|$ and decreases with $c_i;$
\item with
\[
\nu_{i,n}^{\dagger}
\coloneqq
\sqrt{\frac{n-\|\rho\|^2+\rho_i^2}{\gamma_i c_i(n-\|\rho\|^2)}},
\]
$|s_{i,n}^0|$ is increasing in $\nu_i$ on $(0,\nu_{i,n}^{\dagger}]$ and decreasing in $\nu_i$ on $[\nu_{i,n}^{\dagger},\infty);$
\item for $j\ne i$, $|q_{i,j,n}^0|=\frac{\sigma}{\sqrt n}|s_{i,n}^0||\rho_j|$ increases with $|\rho_j|$ and with $|s_{i,n}^0|;$
\item $q_{i,i,n}^0$ increases with $|\rho_i|$.
\end{itemize}
\end{proposition}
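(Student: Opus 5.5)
The plan is to first rewrite the closed form \eqref{eq:hetero-gp0-s} of \Cref{prop:hetero-gp0} so that every parameter appears explicitly, and then read each monotonicity off by elementary inspection or a one-variable derivative. This mirrors the proof of \Cref{prop:homog-gp0}.

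\textbf{Step 1: rewrite the denominator of $|s_{i,n}^0|$.} Put $\delta_i\coloneqq 1-\gamma_i/A_i=1/(A_ic_i\nu_i^2)$. Then
\[
n-\pi_{i,n}=\big(n-\|\rho\|^2\big)+\rho_i^2\delta_i .
\]
Multiplying by $A_ic_i\nu_i=(\gamma_ic_i\nu_i^2+1)/\nu_i$ gives
\[
|s_{i,n}^0|=\frac{\sqrt n\,|\rho_i|\,\nu_i}{\sigma\Big(\gamma_ic_i\nu_i^2\big(n-\|\rho\|^2\big)+n-\|\rho\|^2+\rho_i^2\Big)} .
\]
Here $n-\|\rho\|^2>0$ because every $|\rho_j|<1$, so the denominator is positive. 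Two features of this form drive everything below. First, $n-\|\rho\|^2+\rho_i^2=n-\sum_{j\neq i}\rho_j^2$ does not depend on $\rho_i$. Second, $n-\|\rho\|^2$ is decreasing in every $\rho_j^2$.

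\textbf{Step 2: the bullets on $|s_{i,n}^0|$.} As $|\rho_i|$ increases, the numerator increases. In the denominator, $\gamma_ic_i\nu_i^2(n-\|\rho\|^2)$ decreases and the remaining part $n-\sum_{j\neq i}\rho_j^2$ is unchanged, so $|s_{i,n}^0|$ increases. The constant $c_i$ appears only in a positive multiple in the denominator, so $|s_{i,n}^0|$ decreases in $c_i$.

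For $\nu_i$, I view the expression as $\nu\mapsto \nu/(a\nu^2+b)$ with
\[
a=\gamma_ic_i\big(n-\|\rho\|^2\big),\qquad b=n-\|\rho\|^2+\rho_i^2 .
\]
Note that $\|\rho\|^2$ does not involve $\nu_i$. The derivative has the sign of $b-a\nu^2$, which vanishes exactly at $\nu=\sqrt{b/a}=\nu_{i,n}^\dagger$. This gives increase on $(0,\nu_{i,n}^\dagger]$ and decrease on $[\nu_{i,n}^\dagger,\infty)$.

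\textbf{Step 3: the off-diagonal and diagonal bullets.} For $j\neq i$, \eqref{eq:hetero-gp0-qoff} gives $|q^0_{i,j,n}|=\frac{\sigma}{\sqrt n}|s^0_{i,n}||\rho_j|$, which is trivially increasing in $|s^0_{i,n}|$.

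Monotonicity in $|\rho_j|$ is the one point needing care, because $\rho_j$ also enters $|s^0_{i,n}|$ through $\|\rho\|^2$. By Step 1, increasing $\rho_j^2$ with $j\neq i$ lowers both denominator pieces $n-\|\rho\|^2$ and $n-\sum_{k\neq i}\rho_k^2$, and leaves the numerator unchanged. Hence $|s^0_{i,n}|$ is itself nondecreasing in $|\rho_j|$, and the product is increasing in $|\rho_j|$.

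For the diagonal, \eqref{eq:hetero-gp0-s} gives $\mathrm{sgn}(s^0_{i,n})=-\mathrm{sgn}(\rho_i)$, so $-\rho_is^0_{i,n}=|\rho_i||s^0_{i,n}|$. Then \eqref{eq:hetero-gp0-qdiag} reads
\[
q^0_{i,i,n}=\frac{1}{A_i}\Big(\frac{1}{c_i\nu_i}+\frac{\gamma_i\sigma}{\sqrt n}|\rho_i||s^0_{i,n}|\Big).
\]
Since $A_i$ does not depend on $\rho$, and both $|\rho_i|$ and $|s^0_{i,n}|$ increase with $|\rho_i|$ by Step 2, $q^0_{i,i,n}$ increases with $|\rho_i|$.

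\textbf{Main obstacle.} The only real work is the algebraic rewriting in Step 1. Once it shows that the $\rho_i$-dependence of the denominator cancels except through the factor $n-\|\rho\|^2$, every bullet is immediate. I would double-check that cancellation carefully, since a sign slip there would reverse the $|\rho|$ monotonicities.
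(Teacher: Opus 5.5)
Your proof is correct and follows essentially the same route as the paper: rewrite $|s^0_{i,n}|$ as $\sqrt n|\rho_i|\nu_i/\big(\sigma(\gamma_ic_i\nu_i^2(n-\|\rho\|^2)+n-\|\rho\|^2+\rho_i^2)\big)$ and read off each monotonicity, with the $\nu_i$-threshold coming from the sign of $n-\|\rho\|^2+\rho_i^2-\gamma_ic_i(n-\|\rho\|^2)\nu_i^2$. The differences are minor. For the diagonal bullet, the paper uses the explicit form $q^0_{i,i,n}=\nu_i(n-\|\rho\|^2+\rho_i^2)/(\gamma_ic_i(n-\|\rho\|^2)\nu_i^2+n-\|\rho\|^2+\rho_i^2)$, whereas you use the product $|\rho_i||s^0_{i,n}|$. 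Your check that $|s^0_{i,n}|$ is itself nondecreasing in $|\rho_j|$ for $j\neq i$ makes explicit a step the paper simply calls immediate.
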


\begin{proof}
Using \eqref{eq:hetero-gp0-s}, one can rewrite
\[
|s_{i,n}^0|
=
\frac{\sqrt n|\rho_i|\,\nu_i}{\sigma\big(\gamma_i c_i(n-\|\rho\|^2)\nu_i^2+n-\|\rho\|^2+\rho_i^2\big)}.
\]
Hence $|s_{i,n}^0|$ is increasing in $|\rho_i|$ and decreasing in $c_i$ by direct differentiation, and
\[
\partial_{\nu_i}|s_{i,n}^0|
\;\text{has the sign of}\;
n-\|\rho\|^2+\rho_i^2-\gamma_i c_i(n-\|\rho\|^2)\nu_i^2,
\]
which gives the threshold $\nu_{i,n}^{\dagger}$. The off-diagonal formula
\[
|q_{i,j,n}^0|=\frac{\sigma}{\sqrt n}|s_{i,n}^0||\rho_j|,
\]
immediately yields the third claim. Finally,
\[
q_{i,i,n}^0
=
\frac{\nu_i(n-\|\rho\|^2+\rho_i^2)}{\gamma_i c_i(n-\|\rho\|^2)\nu_i^2+n-\|\rho\|^2+\rho_i^2},
\]
so $q_{i,i,n}^0$ increases with $|\rho_i|$.
\end{proof}

We next look at what we can deduce for small $\gamma_\smallertext{\rm P}>0$.

\begin{proposition}[Local sign persistence for small $\gamma_\smallertext{\rm P}>0$]
\label{prop:hetero-small-gp-persistence}
Fix $n\in\N^\star$. There exists $\overline\gamma_{\smallertext{\rm P},n}>0$ such that, for every $\gamma_\smallertext{\rm P}\in(0,\overline\gamma_{\smallertext{\rm P},n})$,
\[
z^{\smallertext Q,i,i,\star}(\gamma_\smallertext{\rm P})>0,
\; i\in\{1,\dots,n\}.
\]
Moreover, for every $i\in\{1,\dots,n\}$ such that $\rho_i\neq0$
\[
\mathrm{sgn}\big(z^{\smallertext S,i,\star}(\gamma_\smallertext{\rm P})\big)
=
-\mathrm{sgn}(\rho_i),
\]
and for every $(i,j)\in\{1,\dots,n\}^2$ with $j\neq i$ and $\rho_i\rho_j\neq0$
\[
\mathrm{sgn}\big(z^{\smallertext Q,i,j,\star}(\gamma_\smallertext{\rm P})\big)
=
\mathrm{sgn}(\rho_i\rho_j).
\]
If $\rho_i=0$ or $\rho_i\rho_j=0$, continuity only implies that the corresponding coefficient remains small for small $\gamma_\smallertext{\rm P}$; it need not remain exactly zero.
\end{proposition}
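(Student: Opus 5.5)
The plan is a continuity argument at $\gamma_{\smallertext{\rm P}}=0$, using the linear system of \Cref{prop:FOC-block}. Write the first-order conditions \eqref{eq:FOC} as $H(\gamma_{\smallertext{\rm P}})x=-b(\gamma_{\smallertext{\rm P}})$. Every block of $H$ and of $b$ is affine in $\gamma_{\smallertext{\rm P}}$: the $\gamma_{\smallertext{\rm P}}$-terms are the $\mathrm{J}$-terms in $H_{\smallertext{S}\smallertext{S}}$, $H_{\smallertext{Q}\smallertext{S}}$ and $H_{\smallertext{Q}\smallertext{Q}}$, together with $b_{\smallertext S}$ and the $\nu_j^2$-part of $b_{\smallertext Q}$.

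The first step is to check that $H(0)$ is nonsingular, indeed negative definite. At $\gamma_{\smallertext{\rm P}}=0$ the function $f$ reduces to $g$, which is strictly concave by \eqref{eq:strong-concavity} in \Cref{lem:KKT-constrained}. Alternatively, the first three lines of \eqref{eq:concavity-decomp} form a sum of row-wise strictly concave quadratics, since $\mathrm{I}_n-\frac1n\rho\rho^\top\succ0$ when $|\rho_i|<1$. Strict concavity is an open condition, so $H(\gamma_{\smallertext{\rm P}})$ is also negative definite for $\gamma_{\smallertext{\rm P}}\ge0$; in fact adding the $\gamma_{\smallertext{\rm P}}$-terms only adds $-\frac{\gamma_{\smallertext{\rm P}}}2\Phi_n$, which is concave. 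Hence $x^\star(\gamma_{\smallertext{\rm P}})=-H(\gamma_{\smallertext{\rm P}})^{-1}b(\gamma_{\smallertext{\rm P}})$ is a rational function of $\gamma_{\smallertext{\rm P}}$ with no pole on $[0,\infty)$. In particular it is continuous at $0$, and its value there is the row-decoupled solution of \Cref{prop:hetero-gp0}.

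The second step reads off the strict signs at $\gamma_{\smallertext{\rm P}}=0$ from \Cref{prop:hetero-gp0}, using \eqref{eq:hetero-gp0-qdiag}, \eqref{eq:hetero-gp0-s} and \eqref{eq:hetero-gp0-qoff}. For every $i$ we have $z^{\smallertext Q,i,i,\star}(0)=q^0_{i,i,n}/\nu_i>0$, which holds even when $\rho_i=0$, because then $q^0_{i,i,n}=1/(A_ic_i\nu_i)>0$. If $\rho_i\neq0$, then $s^0_{i,n}$ is nonzero with sign $-\mathrm{sgn}(\rho_i)$, since $n-\pi_{i,n}>0$. If $\rho_i\rho_j\ne0$, then $q^0_{i,j,n}=-\frac{\sigma}{\sqrt n}s^0_{i,n}\rho_j$ is nonzero with sign $\mathrm{sgn}(\rho_i\rho_j)$.

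The third step is continuity. The index sets are finite, and each of the listed coefficients is strictly nonzero at $\gamma_{\smallertext{\rm P}}=0$. So there is a common threshold $\overline\gamma_{\smallertext{\rm P},n}>0$, namely the minimum of the individual thresholds, such that all these coefficients keep their sign on $(0,\overline\gamma_{\smallertext{\rm P},n})$. If desired, the threshold can be made quantitative from $\|x^\star(\gamma_{\smallertext{\rm P}})-x^\star(0)\|\le \gamma_{\smallertext{\rm P}}\|H(\gamma_{\smallertext{\rm P}})^{-1}\|\,\|\partial_{\gamma_{\smallertext{\rm P}}}H\,x^\star(0)+\partial_{\gamma_{\smallertext{\rm P}}}b\|$, with $\|H(\gamma_{\smallertext{\rm P}})^{-1}\|\le 1/\lambda_{\min}(-H(0))$ by monotonicity.

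The final remark, about coefficients that vanish at $0$, follows because continuity only forces such coefficients to be small, not zero. One could exhibit nonvanishing from the $\gamma_{\smallertext{\rm P}}$-derivative, for instance the $\alpha_{i,n}K_{j,n}$ term in $q^\star_{i,j}$, but this is not needed.

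The main obstacle is making sure that the limit $\gamma_{\smallertext{\rm P}}\to0$ of the closed form \eqref{eq:def_zSQ_star_n} genuinely equals the solution of the $\gamma_{\smallertext{\rm P}}=0$ system, so that \Cref{prop:hetero-gp0} applies. I would not use the explicit formulas for this. Instead I would argue through the invertibility of $H(0)$ as in the first step, which gives continuity directly.
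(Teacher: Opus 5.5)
Your proposal is correct and follows the paper's argument: the maximiser solves a nonsingular linear system whose data are affine in $\gamma_{\rm P}$, so it is continuous at $0$, and the strict signs from \Cref{prop:hetero-gp0} therefore persist on a small interval. Your explicit check that $H(0)$ is negative definite, via strict concavity of $g$, covers the endpoint $\gamma_{\rm P}=0$, which the paper's proof assumes without comment.
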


\begin{proof}
By \Cref{prop:FOC-block}, the maximiser is the unique solution of a nonsingular linear system whose coefficient matrix and right-hand side depend smoothly on $\gamma_\smallertext{\rm P}$. Therefore the solution map
\[
\gamma_\smallertext{\rm P}
\longmapsto
\big(z^{\smallertext Q,\star}(\gamma_\smallertext{\rm P}),z^{\smallertext S,\star}(\gamma_\smallertext{\rm P})\big),
\]
is continuous at $0$.

\medskip
By \Cref{prop:hetero-gp0}, the diagonal entries $z^{\smallertext Q,i,i,\star}(0)$ are strictly positive for all $i$. The sign of $z^{\smallertext S,i,\star}(0)$ is strict whenever $\rho_i\neq0$, and the sign of $z^{\smallertext Q,i,j,\star}(0)$ for $j\neq i$ is strict whenever $\rho_i\rho_j\neq0$. Hence these strict sign relations persist on some interval $[0,\overline\gamma_{\smallertext{\rm P},n})$ for a suitable $\overline\gamma_{\smallertext{\rm P},n}>0$.
\end{proof}

\subsection{Economic interpretation of optimal sensitivities}
\label{subsec:econ-interpretation}

The sensitivity vectors $z^{\smallertext S,\star}$ and the matrix $z^{\smallertext Q,\star}$ jointly determine how the optimal contract reacts to the traded asset $S$ and to the signals $Q$. The key trade‑off is: \emph{incentives on $Q$} raise variance via diffusion risk, while \emph{tilts on $S$} hedge that variance but may expose the principal to aggregate risk. The parameter $\gamma_{\smallertext{\rm P}}$ gauges how much the principal penalises aggregate exposure.

\paragraph{Regime A: risk–neutral principal (the limit case $\gamma_\smallertext{\rm P}\to 0$).}
By \Cref{prop:hetero-gp0}, rows decouple exactly and the solution has a transparent structure
\[
\mathrm{sgn}\big(z^{\smallertext S,i,\star}\big)=-\mathrm{sgn}(\rho_i),\;
z^{\smallertext Q,i,i,\star}>0,\; i\in\{1,\dots,n\},\; 
\mathrm{sgn}\big(z^{\smallertext Q,i,j,\star}\big)=\mathrm{sgn}(\rho_i\rho_j),\; (i,j)\in\{1,\dots,n\}^2,\; j\ne i.
\]
\emph{Interpretation.} 
$(i)$ Each $S$–tilt hedges the variance created by row‑$i$ incentives, hence it leans \emph{against} $\rho_i$. 

\medskip
$(ii)$ Off–diagonals co‑move with the sign pattern of $(\rho_i\rho_j)$ so that the $Q$-leg and the $S$-hedge reinforce each other, lowering the risk cost per unit of incentive. 

\medskip
$(iii)$ From \Cref{prop:hetero-gp0-CS}, stronger absolute correlations $|\rho|$ push the solution to rely more on $S$ and to increase the magnitude of the cross-signal loadings, while higher $c_i$ (costlier effort) dampens the reliance on $S$.

\paragraph{Regime B: slightly risk--averse principal (small $\gamma_\smallertext{\rm P}>0$).}
The block system in \Cref{prop:FOC-block} adds rank-one penalties to $H_{\smallertext{S}\smallertext{S}}$ and $H_{\smallertext{Q}\smallertext{Q}}$, shrinking only the common components. By \Cref{prop:hetero-small-gp-persistence}, there exists an $n$-dependent threshold $\overline\gamma_{\smallertext{\rm P},n}>0$ such that every sign that is \emph{strict} when $\gamma_\smallertext{\rm P}\longrightarrow0$ persists throughout $[0,\overline\gamma_{\smallertext{\rm P},n})$.

\medskip
\emph{Interpretation.} Relative to the risk-neutral benchmark, the principal subtracts a common component from the vector of $S$-tilts and pushes each column sum $\sum_{j=1}^n z^{\smallertext Q,j,i}$ towards $1$ (equivalently, $\nu_i\sum_{j=1}^n z^{\smallertext Q,j,i}$ towards $\nu_i$), while preserving locally the non-degenerate sign pattern inherited from $\gamma_\smallertext{\rm P}\longrightarrow0$.

\paragraph{Regime C: highly risk–averse principal (large $\gamma_{\rm P}$).}
The quadratic penalty enforces the affine constraints in the limit (\Cref{thm:penalty-limit}), yielding the explicit constrained maximiser in \Cref{prop:explicit-column}
\[
\sum_{k=1}^n \bar z_n^{\smallertext S,k}=0,\;
\sum_{i=1}^n \bar z_n^{\smallertext Q,i,j}=1,\; j\in\{1,\dots,n\}.
\]
\emph{Interpretation.} The principal eliminates \emph{aggregate} $S$ exposure (market‑neutrality) and reshapes $Q$-weights so that each column aggregates to unity (identity pooling), while still sharing diffusion risk via heterogeneous $S$-tilts and correlation‑aware off-diagonals.

\medskip
\emph{Sign structure of $S$-tilts.} In general heterogeneity, $\bar z_n^{\smallertext S}$ solves a reduced linear system of nonsingular $M$-matrix type; see \Cref{prop:explicit-column}. Because the constrained limit imposes $\sum_{i=1}^n \bar z_n^{\smallertext S,i}=0$, every non-zero limit vector must have mixed signs. When the $\rho_i$ share a common sign, \Cref{prop:signs-large-gp-column} shows that the only alternatives are the degenerate case $\bar z_n^{\smallertext S}\equiv 0$ and a genuinely mixed-sign configuration.

\begin{figure}[ht!]
  \centering
  \includegraphics[width=.45\linewidth]{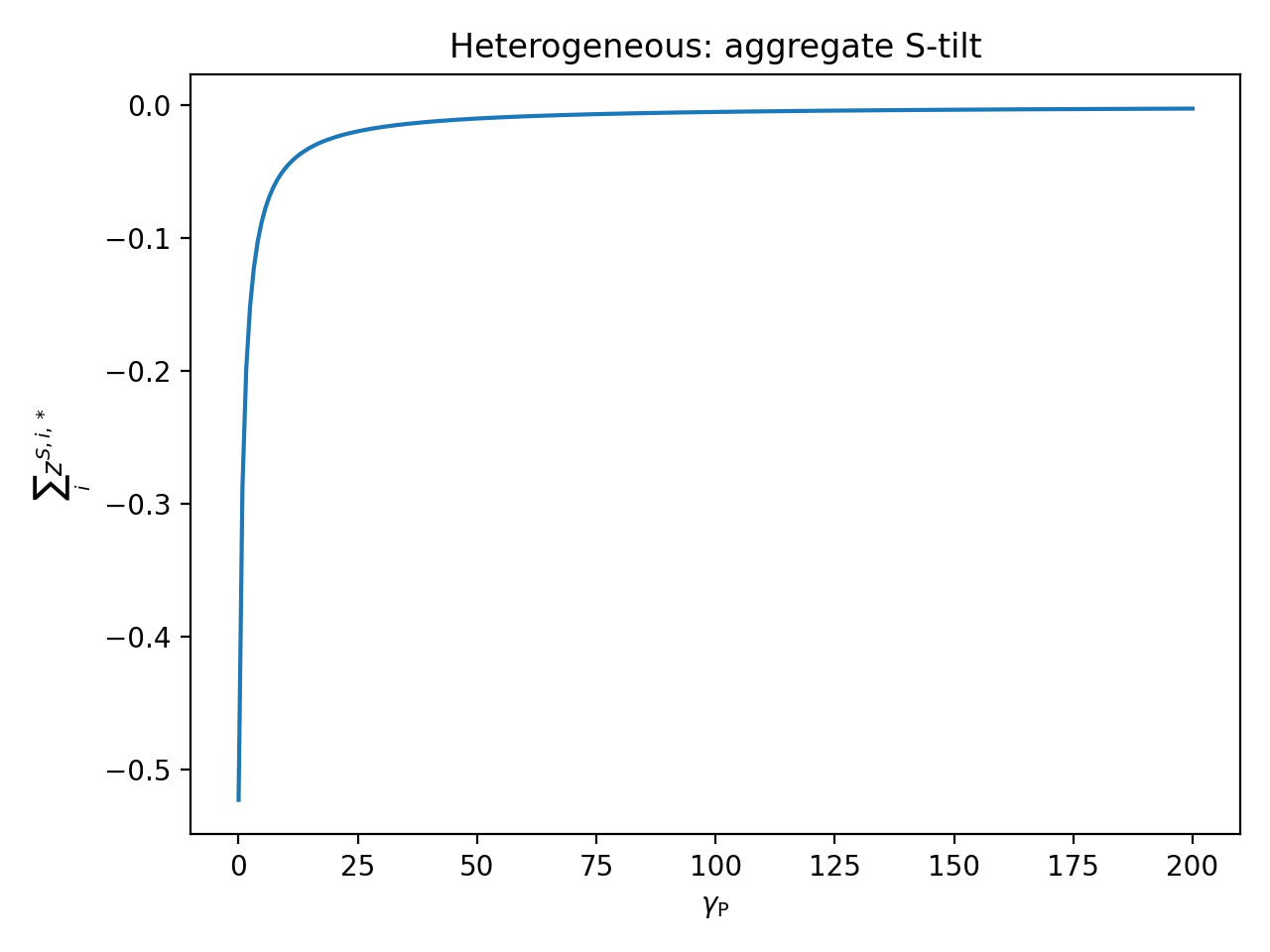}\hfill
  \includegraphics[width=.45\linewidth]{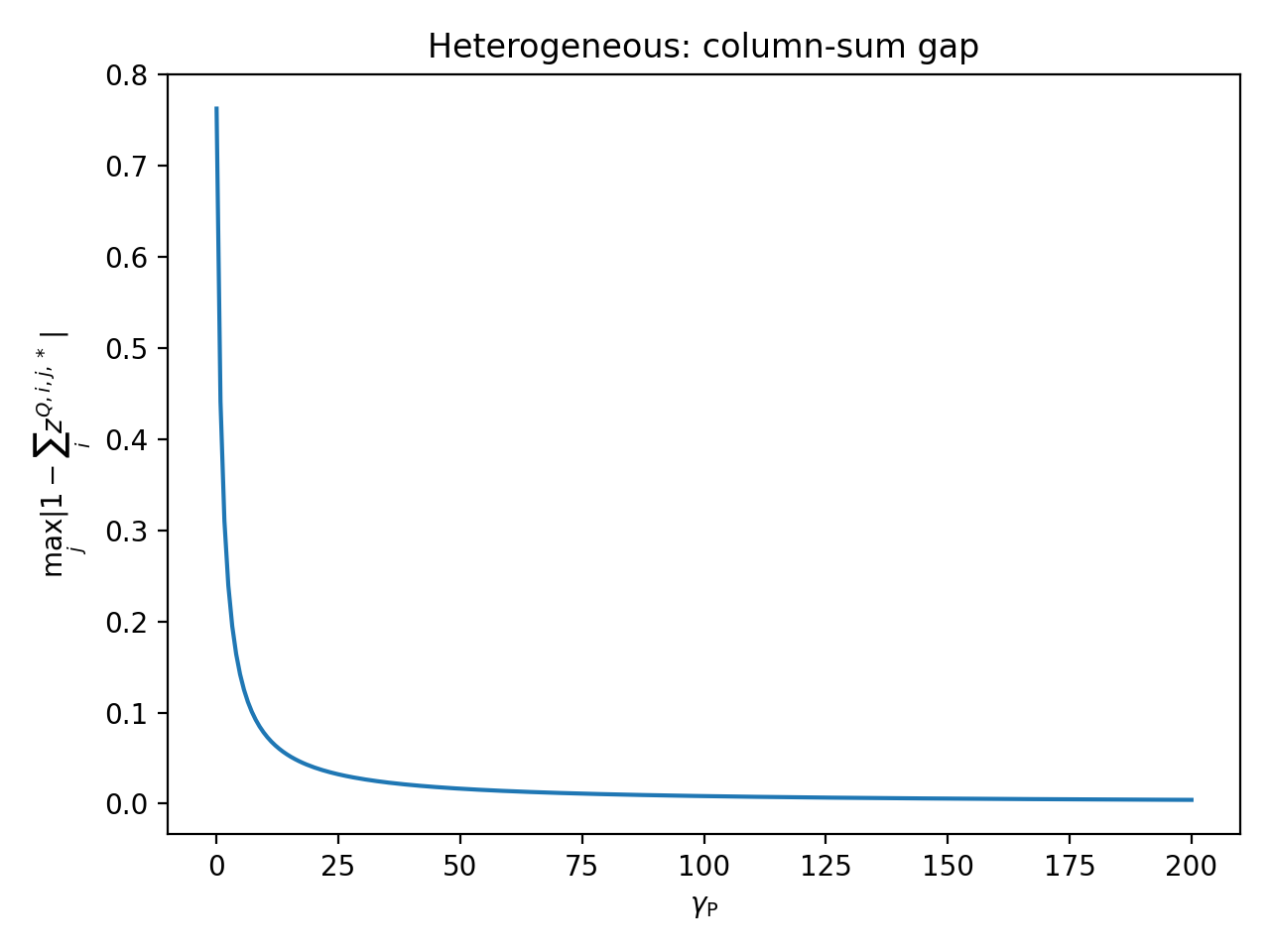}
  \caption{\small Heterogeneous economy. Left: aggregate $S$-tilt $\sum_i z^{\smallertext{S},i,\star}$ tends to $0$ as $\gamma_\smalltext{\rm P}\uparrow\infty$.
  Right: the column identity $\sum_{i=1}^n z^{\smallertext{Q},i,j,\star}=1$ tightens with $\gamma_\smalltext{\rm P}$.}
  \label{fig:hetero-penalty-limit}
\end{figure}

\begin{figure}[ht!]
  \centering
  \includegraphics[width=.6\linewidth]{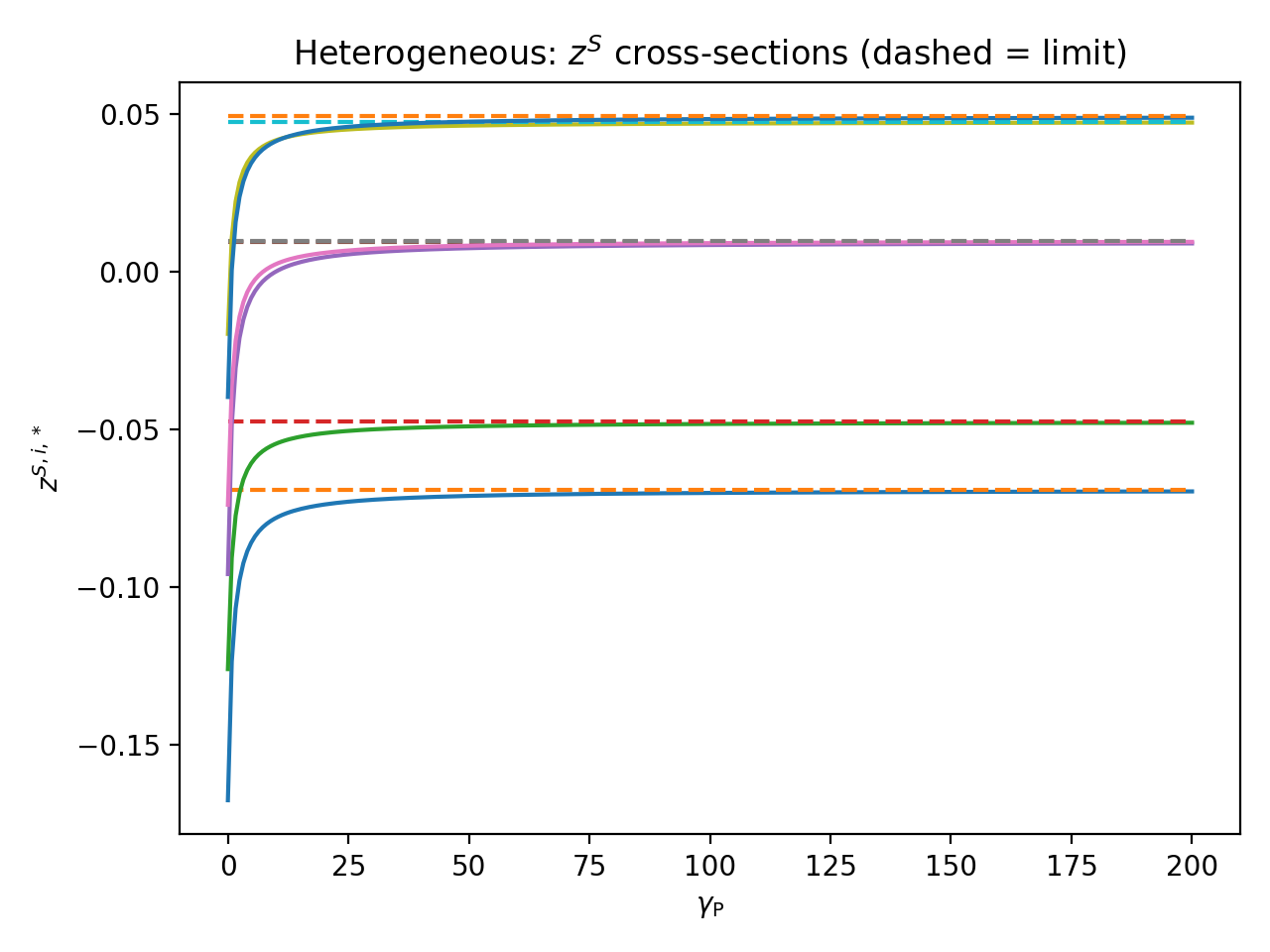}
  \caption{\small Heterogeneous economy. Cross-sections $z^{\smallertext{S},i,\star}(\gamma_\smalltext{\rm P})$ (solid) and constrained limit $\bar z_n^{\smallertext S}$ (dashed), see \Cref{subsec:large-gP-explicit}.}
  \label{fig:hetero-zS-cross}
\end{figure}

\medskip
\emph{Diagonal entries and the possibility of a malus.} The diagonal in the limit splits into a \emph{positive baseline} and a \emph{signed correlation‑balancing correction}
\[
\bar z_n^{\smallertext Q,i,i}
= p_{i,i}\underbrace{\bigg(\frac{1}{\Theta_{i,n}}(1-\zeta_i)+\frac{1}{c_i}\bigg)}_{\text{baseline }C_{i,n}^0>0}
- p_{i,i}\frac{\sigma}{\sqrt{n}}\underbrace{\bigg(\frac{a_i}{\Theta_{i,n}}+\gamma_i\rho_i\nu_i\bigg)}_{\text{has sign }\mathrm{sgn}(\rho_\smalltext{i})}\bar z_n^{\smallertext S,i}.
\]
By \Cref{prop:diag-sign-large-gp}.$(i)$, if the row keeps the \emph{standard} $S$–tilt ($\mathrm{sgn}(\bar z_n^{\smallertext S,i})=-\mathrm{sgn}(\rho_i)$), then $\bar z_n^{\smallertext Q,i,i}>0$; no diagonal flip is possible in that tail.  
Conversely, \Cref{prop:diag-sign-large-gp}.$(ii)$ shows that a negative diagonal at high $\gamma_\smallertext{\rm P}$ \emph{requires} the row's $S$-tilt to align in sign with $\rho_i$ (\emph{i.e.}, a non-standard $S$-tilt for that row). The relevant mechanism is therefore not pairwise correlation with other agents' signals, but an excessively large exposure of row $i$ to the \emph{common traded factor} relative to the rest of the group. In the present unconstrained model, this is not merely a weak bonus reduction: because
\[
a^{\star,i}=\frac{z^{\smallertext Q,i,i,\star}}{c_i},
\]
a negative diagonal means that the induced action-loading itself changes sign. The malus should therefore be interpreted as a \emph{signed disclosure / reporting-distortion prescription}: the principal uses row $i$ to offset residual common-factor exposure rather than to push its own signal upward in the usual direction.
\medskip

This interpretation is economically coherent only because the control set is unconstrained and allows negative actions. If one wishes to rule out such behaviour institutionally, the natural next step is to impose the constraint $\alpha^i\ge0$, in which case the negative-diagonal region should be read as a boundary phenomenon of the constrained problem rather than as an interior optimum. In the present paper, the correct reading is therefore that sufficiently strong principal risk aversion may induce a \emph{sign reversal in the action-loading} of rows whose exposure to the common traded factor is sufficiently large relative to the rest of the group.

\begin{figure}[ht!]
  \centering
  \includegraphics[width=.62\linewidth]{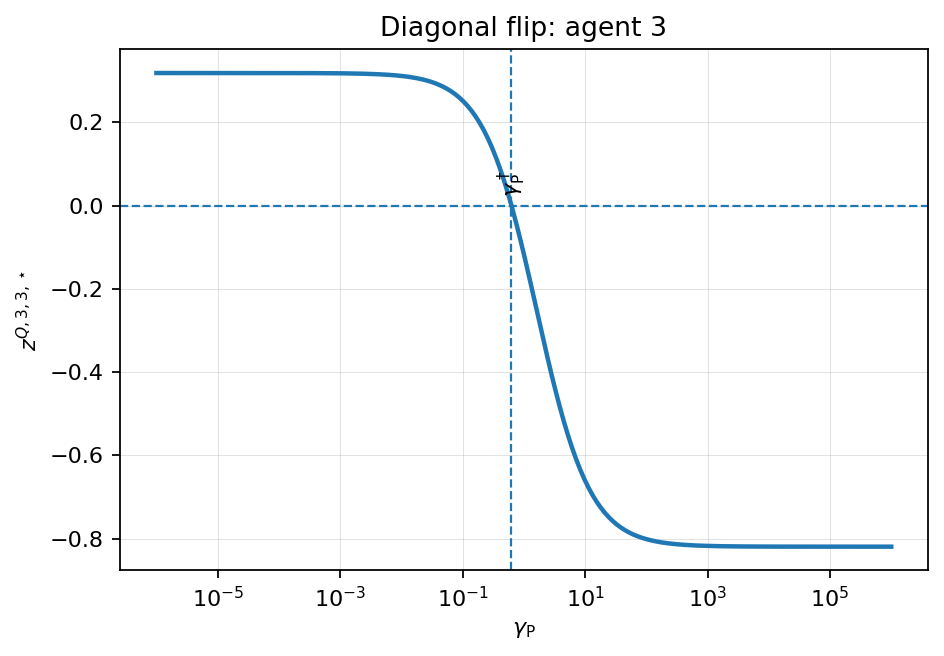}
  \caption{\small Dedicated four-agent calibration from \Cref{tab:diag-flip-params}. The diagonal $z^{\smallertext{Q},3,3,\star}$ crosses zero at $\gamma_\smalltext{\rm P}^\dagger\approx 0.629$ and tends to a negative limit as $\gamma_{\rm P}\uparrow\infty$.}
  \label{fig:diag-flip}
\end{figure}

\begin{appendix}
\section{Technical proofs}

\begin{proof}[Proof of Proposition \ref{prop:FOC-block}]
We keep the $n$-subscripts throughout this proof, so that the dependence on the team size remains visible at every step.

\medskip
\emph{Step $1$: {\rm FOC} in $z^{\smallertext{Q}}$.} Since $f$ is affine-quadratic, its Hessian is constant, and the first-order conditions follow from direct differentiation. We have
	\begin{gather*}
		\partial^2_{z^{\smalltext{Q}\smalltext{,}\smalltext{i}\smalltext{,}\smalltext{j}}z^{\smalltext{Q}\smalltext{,}\smalltext{k}\smalltext{,}\smalltext{\ell}}} f(z^\smallertext{Q},z^\smallertext{S})
		=
		-\frac{1}{nc_i} \mathbf{1}_{\{i=j=k=\ell\}}
		-\frac{  \gamma_i  \nu_j^2   }{n} \mathbf{1}_{\{i=k\}}\mathbf{1}_{\{j=\ell\}} 
		-\frac{\gamma_{\smallertext{\rm P}} } {n^2} { \nu_j^2 \mathbf{1}_{\{\ell = j\}} }, \\
		\partial^2_{z^{\smalltext{S}\smalltext{,}\smalltext{k}}z^{\smalltext{Q}\smalltext{,}\smalltext{i}\smalltext{,}\smalltext{j}}} f(z^\smallertext{Q},z^\smallertext{S})=-\frac{\gamma_i\rho_j\nu_j\sigma}{n^{3/2}}\mathbf{1}_{\{i=k\}}-\frac{\gamma_{\smallertext{\rm P}} {  \rho_j \nu_j} \sigma}{n^{5/2}},\\
		\partial^2_{z^{\smalltext{S}\smalltext{,}\smalltext{i}}z^{\smalltext{S}\smalltext{,}\smalltext{j}}} f(z^\smallertext{Q},z^\smallertext{S})=-\frac{\gamma_i\sigma^2}{n}\mathbf{1}_{\{i=j\}} - \frac{\gamma_{\smallertext{\rm P}}\sigma^2}{n^2}.
	\end{gather*}
	A bit of notation: $\partial_{z^\smalltext{Q}}f$ is an $n\times n$ matrix, $\partial^2_{z^\smalltext{S}z^{\smalltext{S}}}f$ is an $n\times n$ matrix, $\partial^2_{z^\smalltext{S}z^{\smalltext{Q}}}f$ is an $n^2\times n$ matrix, $\partial^2_{z^\smalltext{Q}z^{\smalltext{Q}}}f$ is an $n^2\times n^2$ matrix, and the Hessian matrix of $f$, $D^2f$, is an $(n^2+n)\times (n^2+n)$ matrix with
	\begin{gather*}
	\big(\partial_{z^{\smalltext{Q}}}f\big)^{i,j}\coloneqq \partial_{z^{\smalltext{Q}\smalltext{,}\smalltext{i}\smalltext{,}\smalltext{j}}}f,\; (i,j)\in\{1,\dots,n\}^2,\;
	\big(\partial^2_{z^\smalltext{S}z^{\smalltext{S}}}f\big)^{i,j}\coloneqq \partial^2_{z^{\smalltext{S}\smalltext{,}\smalltext{i}}z^{\smalltext{S}\smalltext{,}\smalltext{j}}} f,\; (i,j)\in\{1,\dots,n\}^2,\\
	\partial^2_{z^\smalltext{S}z^{\smalltext{Q}}}f\coloneqq\begin{pmatrix} \frG_1f\\
	\vdots\\
	\frG_nf
	\end{pmatrix},\;
	\partial^2_{z^\smalltext{Q}z^{\smalltext{Q}}}f\coloneqq\begin{pmatrix}H_{1,1}f & \cdots &H_{1,n}f \\
	\vdots & \ddots & \vdots\\
H_{n,1}f & \cdots &H_{n,n}f
	\end{pmatrix},\\
	(\frG_kf)^{i,j}\coloneqq \partial^2_{z^{\smalltext{Q}\smalltext{,}\smalltext{i}\smalltext{,}\smalltext{k}}z^{\smalltext{S}\smalltext{,}\smalltext{j}}}f,\; (H_{i,j}f)^{k,\ell}\coloneqq \partial^2_{z^{\smalltext{Q}\smalltext{,}\smalltext{k}\smalltext{,}\smalltext{i}}z^{\smalltext{Q}\smalltext{,}\smalltext{\ell}\smalltext{,}\smalltext{j}}}f, \; (i,j,k,\ell)\in\{1,\dots,n\}^4,\\
	D^2f\coloneqq \begin{pmatrix}
	\partial^2_{z^\smalltext{Q}z^{\smalltext{Q}}}f &\partial^2_{z^\smalltext{S}z^{\smalltext{Q}}}f\\
	\big(\partial^2_{z^\smalltext{S}z^{\smalltext{Q}}}f\big)^\smallertext{\top} & \partial^2_{z^\smalltext{S}z^{\smalltext{S}}}f
	\end{pmatrix},
	\end{gather*}
	where derivation of matrices should be understood coordinate-wise. Because the first Hessian block is diagonal in the column index $j$, the compact formula uses $N^2\otimes \Gamma$ (and not $(\nu\nu^\top)\otimes\Gamma$) under the column-wise vectorisation convention. With these notations, the compact matrix formulas announced in the statement are immediate.

\medskip
Next, let $q_{i,j}\coloneqq \nu_j z^{\smallertext{Q},i,j}$ and $q_i=(q_{i,1},\dots,q_{i,n})^\top$, for $(i,j)\in\{1,\dots,n\}^2$.
From the definition of $f$
\[
\frac{\partial f}{\partial z^{\smallertext{Q},i,j}}(z^{\smallertext{Q}},z^{\smallertext{S}})
=-\frac{1}{n}\gamma_i\nu_j^2 z^{\smallertext{Q},i,j}
-\frac{\gamma_i\sigma}{n\sqrt n}\rho_j\nu_j z^{\smallertext{S},i}
+\frac{\gamma_{\smallertext{\rm P}}}{n^2}\nu_j\Bigg({ \nu_j - \nu_j \sum_{\ell=1}^n z^{\smallertext{Q},\ell, j} - \frac{\rho_j\sigma}{\sqrt n}\sum_{m=1}^n z^{\smallertext{S},m} } \Bigg)
-\frac{ z^{\smallertext{Q},i,i} - 1}{n c_i}\mathbf 1_{\{j=i\}}.
\]
{
Setting this to $0$ and writing $K_{i,n}\coloneqq \nu_i-\sum_{\ell=1}^n q_{\ell,i}-\rho_i\frac{\sigma}{\sqrt{n}}\sum_{m=1}^n z^{\smallertext{S},m}$, we obtain
\[
-\frac{\gamma_i}{n}q_{i,j}-\frac{\gamma_i\sigma}{n^{3/2}}\rho_j z^{\smallertext{S},i}+\frac{\gamma_{\smallertext{\rm P}}}{n^2}K_{j,n}-\frac{1}{n c_i}\frac{q_{i,i}}{\nu_i^2} \mathbf 1_{\{j=i\}} + \frac{1}{n c_i\nu_i}\mathbf 1_{\{j=i\}}=0.
\]
Thus, for $j\neq i$
\[
q_{i,j}=\frac{\gamma_{\smallertext{\rm P}}}{n\gamma_i}K_{j,n} - \frac\sigma{\sqrt{n}}\rho_j z^{\smallertext{S},i}=\alpha_{i,n} K_{j,n} - \frac{\sigma}{\sqrt{n}}\rho_j z^{\smallertext{S},i},
\]
and for $j=i$
\[
\bigg(\gamma_i+\frac{1}{c_i\nu_i^2}\bigg)q_{i,i}=\frac{\gamma_{\smallertext{\rm P}}}{n}K_{i,n} - \gamma_i\frac\sigma{\sqrt{n}}\rho_i z^{\smallertext{S},i} + \frac{1}{c_i\nu_i}.
\]
that is to say with our notations 
\[
A_i q_{i,i}=\frac{\gamma_{\smallertext{\rm P}}}{n}K_{i,n} - \gamma_i\frac\sigma{\sqrt{n}}\rho_i z^{\smallertext{S},i} + \frac{1}{c_i\nu_i},
\]
This proves the formulas for the optimal $z^\smallertext{Q}$ once $K_n$ and the optimal $z^{\smallertext{S}}$ are known.
}

\medskip
\emph{Step $2$: solve for $K$ in terms of $z^{\smallertext{S}}$.} Fix $j\in\{1,\dots,n\}$. Summing $q_{i,j}$ over $i\in\{1,\dots,n\}$ gives
\[
\sum_{i=1}^n q_{i,j}
=\Bigg(\sum_{i\in\{1,\dots,n\}\setminus\{j\}}\alpha_{i,n}+\frac{\gamma_{\smallertext{\rm P}}}{nA_j}\Bigg)K_{j,n} - \frac\sigma{\sqrt{n}} \rho_j\sum_{i\in\{1,\dots,n\}\setminus\{j\}}z^{\smallertext{S},i}
-\frac{\gamma_j\sigma\rho_j}{\sqrt{n}A_j}z^{\smallertext{S},j}
+\frac{1}{A_j c_j\nu_j}.
\]
But we also have by definition
\[
\sum_{i=1}^n q_{i,j}=\nu_j-K_{j,n}-\rho_j\frac{\sigma}{\sqrt{n}}\sum_{m=1}^nz^{\smallertext{S},m},
\]
so that rearranging yields
\[
\kappa_{j,n} K_{j,n}=\nu_j-\frac{1}{A_jc_j\nu_j}-\rho_j\frac{\sigma}{\sqrt{n}}\bigg(1-\frac{\gamma_j}{A_j}\bigg)z^{\smallertext{S},j}.
\]
which is exactly the stated affine form $K_{j,n}=d_{j,n}-m_{j,n} z^{\smallertext{S},j}$.

\medskip
\emph{Step $3$: {\rm FOC} in $z^{\smallertext{S}}$.} Differentiating $f$ w.r.t. $z^{\smallertext{S},k}$ gives
\[
0=\frac{\partial f}{\partial z^{\smallertext{S},k}}(z^{\smallertext{Q}},z^{\smallertext{S}})
=-\frac{\gamma_k\sigma^2}{n}z^{\smallertext{S},k}
-\frac{\gamma_k\sigma}{n\sqrt n}\sum_{j=1}^n \rho_j\nu_j z^{\smallertext{Q},k,j}
+\frac{\gamma_{\smallertext{\rm P}}\sigma}{n^2\sqrt n}\sum_{i=1}^n \rho_i K_{i,n}
-\frac{\gamma_{\smallertext{\rm P}}\sigma^2}{n^3}\sum_{i=1}^n (1-\rho_i^2)\sum_{m=1}^n z^{\smallertext{S},m}.
\]
Using $q_{k,j}=\nu_j z^{\smallertext{Q},k,j}$, this is
\begin{equation}\label{eq:lllll}
0=-\gamma_k\frac{\sigma^2}n z^{\smallertext{S},k}-\frac{\gamma_k\sigma}{n^{3/2}}\rho^\top q_k+\frac{\gamma_{\smallertext{\rm P}}\sigma}{n^{5/2}}\sum_{i=1}^n \rho_i K_{i,n}-\frac{\gamma_{\smallertext{\rm P}}\sigma^2}{n^3}\sum_{i=1}^n(1-\rho_i^2)\mathrm{1}_n^\top z^{\smallertext{S}}.
\end{equation}
From Step~$1$
\[
\rho^\top q_k=\sum_{j\in\{1,\dots,n\}\setminus\{ k\}}\rho_j\bigg(\alpha_{k,n} K_{j,n}-\frac\sigma{\sqrt{n}}\rho_j z^{\smallertext{S},k}\bigg)+\rho_kA_k^{\smallertext{-}1}\bigg(\frac{\gamma_{\smallertext{\rm P}}}{n}K_{k,n}-\gamma_k\frac\sigma{\sqrt{n}}\rho_k z^{\smallertext{S},k}+\frac{1}{c_k\nu_k}\bigg).
\]
Collecting coefficients we obtain
\[
\rho^\top q_k=\frac{\gamma_{\smallertext{\rm P}}}{n\gamma_k}\sum_{j=1}^n\rho_jK_{j,n} -\frac{\gamma_{\smallertext{\rm P}}\rho_k}{n\gamma_k}\bigg(1-\frac{\gamma_k}{A_k}\bigg) K_{k,n} - \frac{\sigma}{\sqrt{n}}\Bigg(\sum_{j=1}^n\rho_j^2-\bigg(1-\frac{\gamma_k}{A_k}\bigg)\rho^2_k\Bigg)z^{\smallertext{S},k} + \frac{\rho_k}{A_k c_k\nu_k}.
\]
Using this in \Cref{eq:lllll}, we deduce
\begin{equation}\label{eq:lllll2}
0=-\gamma_k\frac{\sigma^2}n z^{\smallertext{S},k}+\frac{\gamma_{\smallertext{\rm P}}\sigma\rho_k}{n^{5/2}}\bigg(1-\frac{\gamma_{k}}{A_k}\bigg) K_{k,n}+\frac{\gamma_k\sigma^2}{n^{2}}\Bigg(\sum_{j=1}^n\rho_j^2-\bigg(1-\frac{\gamma_k}{A_k}\bigg)\rho_k^2\Bigg)z^{\smallertext{S},k} -\frac{\gamma_k\sigma}{n^{3/2}} \frac{\rho_k}{A_k c_k\nu_k}-\frac{\gamma_{\smallertext{\rm P}}\sigma^2}{n^3}\sum_{i=1}^n(1-\rho_i^2)\mathrm{1}_n^\top z^{\smallertext{S}}.
\end{equation}
We then once more use Step $1$ to write $K_{k,n}=d_{k,n}-m_{k,n}z^{\smallertext{S},k}$, and finally deduce
\[
\mu_{k,n}z^{\smallertext{S},k}+\lambda_n{1}_n^\top z^{\smallertext{S}}=\ell_{k,n}, \; \text{\rm or in matrix form}\; \big(D_n+\lambda_n {1}_n{1}_n^\top\big)z^{\smallertext{S}}=\ell_n,
\]
which is again the desired equation, that we will solve explicitly in the next step.

\medskip
\emph{Step $4$: closed form solution via {\rm Sherman–Morrison–Woodbury}.} We have an immediate rank one factorisation, hence since $D_n$ is diagonal and positive definite (the sign is obvious since for any $i\in\{1,\dots,n\}$, $A_i>\gamma_i$ and $\rho_i^2\leq 1$), the Sherman–Morrison–Woodbury formula gives
\[
\big(D_n+\lambda_n {1}_n{1}_n^\top\big)^{-1}=S_n-y_ns_ns_n^\top,
\]
which gives the desired formula. Then, it is immediate to deduce the optimal $z^{\smallertext{Q},\star}$.

\medskip
\emph{Step $5$: verification.} Finally, substitute the candidate $(z^{\smallertext{Q},\star}, z^{\smallertext{S},\star})$ into \eqref{eq:FOC}. A direct check shows that it solves the first-order conditions, and therefore coincides with the unique maximiser.
\end{proof}

\begin{proof}[Proof of Proposition \ref{prop:homogeneous-closed-form}]
\emph{Step 1: row-wise {\rm FOC} specialised to the homogeneous case.}
From \Cref{prop:FOC-block} (with $c_i=c$, $\gamma_i=\gamma$, $\nu_i=\nu$, $\rho_i=\rho$), for any $(i,j)\in\{1,\dots,n\}^2$ one has
\[
q_{i,j,n}^\star\coloneqq \nu z_n^{\smallertext Q,i,j,\star}
=
\begin{cases}
\alpha_n K_{j,n}(z_n^{\smallertext S,\star})-\beta_n z_{s,n}^\star,
\; j\neq i,\\[0.4em]
A^{-1}\bigg(\dfrac{\gamma_\smallertext{\rm P}}{n}K_{i,n}(z_n^{\smallertext S,\star})-\gamma\beta_n z_{s,n}^\star+\dfrac{1}{c\nu}\bigg),
\; j=i.
\end{cases}
\]
By symmetry, all column residuals coincide; write their common value as $K_n^\star$. Then there are two numbers $q_{o,n}$ and $q_{d,n}$ such that
\begin{gather*}
q_{i,j,n}^\star=q_{o,n}\coloneqq \alpha_n K_n^\star-\beta_n z_{s,n}^\star,
\; i\neq j,\\
q_{i,i,n}^\star=q_{d,n}\coloneqq A^{-1}\bigg(\frac{\gamma_\smallertext{\rm P}}{n}K_n^\star-\gamma\beta_n z_{s,n}^\star+\frac{1}{c\nu}\bigg).
\end{gather*}

\emph{Step 2: express $K_n^\star$ in terms of $z_{s,n}^\star$.}
By definition,
\[
K_n^\star
=\nu-\sum_{\ell=1}^n q_{\ell,i,n}^\star-\rho\frac{\sigma}{\sqrt n}\sum_{m=1}^n z_n^{\smallertext S,m,\star}
=\nu-\big((n-1)q_{o,n}+q_{d,n}\big)-n\beta_n z_{s,n}^\star.
\]
Substituting $q_{o,n}$ and $q_{d,n}$ gives
\[
\bigg(1+(n-1)\alpha_n+\frac{\gamma_\smallertext{\rm P}}{nA}\bigg)K_n^\star
=\frac{\gamma\nu}{A}+\beta_n\bigg(\frac{\gamma}{A}-1\bigg)z_{s,n}^\star,
\]
that is,
\[
K_n^\star=\frac{1}{\kappa_n}\bigg(\frac{\gamma\nu}{A}-\beta_n\delta z_{s,n}^\star\bigg),
\]
which is \eqref{eq:K-star}.

\medskip
\emph{Step 3: the $S$-block {\rm FOC} and the formula for $z_{s,n}^\star$.}
The $z^{\smallertext S}$-{\rm FOC} in \Cref{prop:FOC-block} reduces here to
\[
0
=-\frac{\gamma\sigma^2}{n}z_{s,n}^\star
-\frac{\gamma\sigma}{n\sqrt n}\rho\nu\sum_{j=1}^n z_n^{\smallertext Q,i,j,\star}
+\frac{\gamma_\smallertext{\rm P}\sigma}{n\sqrt n}\rho K_n^\star
-\frac{\gamma_\smallertext{\rm P}\sigma^2}{n}(1-\rho^2)z_{s,n}^\star.
\]
Using
\[
\sum_{j=1}^n z_n^{\smallertext Q,i,j,\star}=\frac{(n-1)q_{o,n}+q_{d,n}}{\nu},
\]
we obtain
\begin{equation}
\label{eq:eq-zs-homogeneous-indexed}
\sigma\big(\gamma+\gamma_\smallertext{\rm P}(1-\rho^2)\big)z_{s,n}^\star
+\frac{\gamma\rho}{\sqrt n}\big((n-1)q_{o,n}+q_{d,n}\big)
=\frac{\gamma_\smallertext{\rm P}\rho}{\sqrt n}K_n^\star.
\end{equation}
Now
\[
(n-1)q_{o,n}+q_{d,n}
=(\kappa_n-1)K_n^\star-\beta_n\bigg((n-1)+\frac{\gamma}{A}\bigg)z_{s,n}^\star+\nu\delta.
\]
Insert this identity and \eqref{eq:K-star} into \eqref{eq:eq-zs-homogeneous-indexed}. After collecting terms one gets
\[
\Delta_n(\gamma_\smallertext{\rm P})z_{s,n}^\star
=-\frac{\rho\gamma}{A c \sigma \nu \sqrt n}\bigg(1-\frac{\gamma_\smallertext{\rm P}}{n\widetilde\kappa_n}\bigg),
\]
where the denominator simplifies as
\[
\gamma+\gamma_\smallertext{\rm P}(1-\rho^2)
-\frac{\gamma\rho^2}{n}\bigg((n-1)+\frac{\gamma}{A}\bigg)
+\frac{\gamma_\smallertext{\rm P}\rho^2}{n^2}\frac{(A-\gamma)^2}{A\widetilde\kappa_n}
=(\gamma+\gamma_\smallertext{\rm P})(1-\rho^2)
+\frac{\gamma\rho^2\delta}{n}
+\frac{\gamma_\smallertext{\rm P}\rho^2\delta^2}{n^2\kappa_n}.
\]
This proves \eqref{eq:zs-star}. Finally, \eqref{eq:zozd-star} follows by inserting $z_{s,n}^\star$ and $K_n^\star$ into the row-wise formulas.
\end{proof}

\begin{lemma}[Full row rank of $O_n$]\label{lem:row-rank}
The $(n+1)\times(n^2+n)$ matrix of $O_n$ has full row rank $n+1$.
\end{lemma}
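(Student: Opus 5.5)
The plan is to show surjectivity of $O_n:\R^{n^2+n}\to\R^{n+1}$ directly: for every target $b=(b_1,\dots,b_n,b_{n+1})^\top\in\R^{n+1}$ we exhibit an explicit preimage $x=(z^{\smallertext Q},z^{\smallertext S})$. Surjectivity is equivalent to rank $n+1$, and hence to full row rank of the $(n+1)\times(n^2+n)$ matrix. An equivalent dual route, which we keep as a cross-check, is to verify $\ker(O_n^\top)=\{0\}$. That is, if $\iota\in\R^{n+1}$ satisfies $\iota^\top O_nx=0$ for all $x$, then $\iota=0$.

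The construction proceeds in two steps, exploiting the triangular structure of $O_n$: the last coordinate depends only on $z^{\smallertext S}$, while the first $n$ coordinates see $z^{\smallertext Q}$ only through column sums.

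First, we fix the $S$-part to match the last coordinate. We take $z^{\smallertext S}\coloneqq b_{n+1}e_1$, so that $\sum_{k}z^{\smallertext S,k}=b_{n+1}=(O_nx)_{n+1}$.

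Second, we match each of the first $n$ coordinates using only the first row of $z^{\smallertext Q}$. For $i\in\{1,\dots,n\}$ we set
\[
z^{\smallertext Q,1,i}\coloneqq \frac{1}{\nu_i}\bigg(b_i-\frac{\rho_i\sigma}{\sqrt n}b_{n+1}\bigg),
\]
and all other entries of $z^{\smallertext Q}$ equal to $0$. This is well defined since $\nu_i>0$. The column sum is then $\sum_j z^{\smallertext Q,j,i}=z^{\smallertext Q,1,i}$, so $(O_nx)_i=b_i$ for every $i$.

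Hence $O_n$ is onto and its rank is $n+1$.

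For the dual check, suppose $\iota^\top O_n=0$. The coefficient of $z^{\smallertext Q,j,i}$ in $\iota^\top O_nx$ is $\iota_i\nu_i$, which forces $\iota_i=0$ for $i\le n$. The coefficient of $z^{\smallertext S,k}$ is then $\iota_{n+1}$, so $\iota_{n+1}=0$ as well.

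There is no real obstacle here. The only point needing care is to read off correctly from the definition of $O_n$ that the $z^{\smallertext Q}$-dependence enters through column sums (indices $z^{\smallertext Q,j,i}$ summed over $j$) with the positive weights $\nu_i$. We also note that $\nu_i>0$ is the only parameter assumption used; $\rho_i$ and $\sigma$ play no role in the rank.
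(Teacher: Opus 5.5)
Your proof is correct. Your dual cross-check is exactly the paper's argument: the column of $O_n$ belonging to $z^{Q,i,j}$ has the single entry $\nu_j>0$ in row $j$, which forces $u_1=\dots=u_n=0$, and then any column belonging to $z^{S,k}$ forces $u_{n+1}=0$.

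Your main argument is the primal form of the same fact: you construct an explicit preimage rather than showing the left kernel is trivial. It takes a few more lines, but it has a small bonus. Taking $b=o_n$ gives an explicit point of $\{x:O_nx=o_n\}$, namely $z^S=0$ and the first row of $z^Q$ equal to ones. This establishes the non-emptiness of the constraint set, which the paper uses without comment in \Cref{lem:KKT-constrained}.
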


\begin{proof}
Let $u=(u_1,\dots,u_n,u_{n+1})^\top\in\R^{n+1}$ satisfy $u^\top O_n=0$. For $(i,j)\in\{1,\dots,n\}^2$, denote by $\mathbf e_{(i,j)}\in\R^{n^2+n}$ the canonical basis vector corresponding to the variable $z^{\smallertext Q,i,j}$, and for $k\in\{1,\dots,n\}$ denote by $\mathbf e_{S,k}\in\R^{n^2+n}$ the one corresponding to the variable $z^{\smallertext S,k}$.

\medskip
Fix $(i,j)\in\{1,\dots,n\}^2$. The column of $O_n$ associated with $z^{\smallertext Q,i,j}$ has a single non-zero entry, equal to $\nu_j$ in row $j$. Hence
\[
u^\top O_n\mathbf e_{(i,j)}=u_j\nu_j=0.
\]
Since $\nu_j>0$, it follows that $u_j=0$ for every $j\in\{1,\dots,n\}$. Now fix $k\in\{1,\dots,n\}$. The column of $O_n$ associated with $z^{\smallertext S,k}$ has first $n$ entries
\[
\bigg(\frac{\rho_1\sigma}{\sqrt n},\dots,\frac{\rho_n\sigma}{\sqrt n}\bigg)^\top,
\]
and last entry $1$. Therefore
\[
u^\top O_n\mathbf e_{S,k}
=
\frac{\sigma}{\sqrt n}\sum_{j=1}^n \rho_j u_j + u_{n+1}
=
u_{n+1}
=
0.
\]
Thus $u=0$, and the rows of $O_n$ are linearly independent. Hence $O_n$ has full row rank $n+1$.
\end{proof}

\begin{proof}[Proof of Proposition \ref{thm:penalty-limit}]
Evaluating $f_\gamma$ at $x_n^\star$ (for which $O_nx_n^\star=o_n$) gives
\[
f_\gamma(x_{\gamma,n}) \ge f_\gamma(x_n^\star)=g(x_n^\star).
\]
Hence
\[
\frac{\gamma_{\smallertext{\rm P}}}{2}\|r_{\gamma,n}\|_{\smallertext{W}_\smalltext{n}}^2
= g(x_{\gamma,n})-f_\gamma(x_{\gamma,n})
\le \sup_{x\in\R^{\smalltext{n}^\tinytext{2}\smalltext{+}\smalltext{n}}} g(x)-g(x_n^\star)
\eqqcolon C_{0,n}<+\infty,
\]
because $g$ is a strictly concave quadratic and therefore bounded above. Thus
\begin{equation}\label{eq:residual-O1overgamma-fixed}
\|r_{\gamma,n}\|_{\smallertext{W}_\smalltext{n}}^2
\le \frac{2C_{0,n}}{\gamma_{\smallertext{\rm P}}},
\end{equation}
and this already implies
\[
\Phi_n(x_{\gamma,n})=\|r_{\gamma,n}\|_{\smallertext{W}_\smalltext{n}}^2=\Oc(1/\gamma_{\smallertext{\rm P}}).
\]

Subtract \eqref{eq:KKT-constrained} from \eqref{eq:stat-penalty} to get
\[
H_n(x_{\gamma,n}-x_n^\star)
=
O_n^\top\big(\iota_n^\star+\gamma_{\smallertext{\rm P}}W_n r_{\gamma,n}\big).
\]
Since $O_nx_n^\star=o_n$, we have
\[
r_{\gamma,n}=O_n(x_{\gamma,n}-x_n^\star).
\]
Multiplying the stationarity difference by $O_nH_n^{-1}$ yields
\[
r_{\gamma,n}
=
O_nH_n^{-1}O_n^\top\big(\iota_n^\star+\gamma_{\smallertext{\rm P}}W_n r_{\gamma,n}\big).
\]
By \Cref{lem:row-rank}, $O_n$ has full row rank. Since $H_n\prec0$, the matrix $O_nH_n^{-1}O_n^\top$ is symmetric negative definite. Define
\[
\Sc_n\coloneqq -W_n^{1/2}O_nH_n^{-1}O_n^\top W_n^{1/2}\succ0,
\;
 y_{\gamma,n}\coloneqq W_n^{1/2}r_{\gamma,n}.
\]
Then
\begin{equation}\label{eq:system-r-fixed}
\big(\mathrm{I}_{n+1}+\gamma_{\smallertext{\rm P}}\Sc_n\big)y_{\gamma,n}
=
W_n^{1/2}O_nH_n^{-1}O_n^\top\iota_n^\star.
\end{equation}
Hence
\[
\|y_{\gamma,n}\|
\le
\frac{1}{1+\gamma_{\smallertext{\rm P}}\lambda_{\min}(\Sc_n)}
\big\|W_n^{1/2}O_nH_n^{-1}O_n^\top\iota_n^\star\big\|
=
\Oc(1/\gamma_{\smallertext{\rm P}}).
\]
Since $\|r_{\gamma,n}\|_{\smallertext{W}_\smalltext{n}}=\|y_{\gamma,n}\|$, we obtain the sharper estimate
\begin{equation}\label{eq:residual-sharp-fixed}
\|O_nx_{\gamma,n}-o_n\|_{\smallertext{W}_\smalltext{n}}=\Oc(1/\gamma_{\smallertext{\rm P}}).
\end{equation}
Because $W_n\succ0$ and $n$ is fixed throughout the limit, the weighted norm $\|\cdot\|_{\smallertext{W}_\smalltext{n}}$ is equivalent to the Euclidean norm on $\R^{n+1}$. Therefore
\[
\|r_{\gamma,n}\|=\Oc(1/\gamma_{\smallertext{\rm P}}).
\]

By definition
\[
\widehat\iota_{\gamma,n}=-\gamma_{\smallertext{\rm P}}W_n r_{\gamma,n},
\]
so \eqref{eq:stat-penalty} rewrites as
\[
\partial g(x_{\gamma,n})+O_n^\top\widehat\iota_{\gamma,n}=0.
\]
Subtracting the KKT relation $\partial g(x_n^\star)+O_n^\top\iota_n^\star=0$ gives
\[
H_n(x_{\gamma,n}-x_n^\star)
=
O_n^\top(\iota_n^\star-\widehat\iota_{\gamma,n}).
\]
Multiplying by $O_nH_n^{-1}$ yields
\[
r_{\gamma,n}
=
O_nH_n^{-1}O_n^\top(\iota_n^\star-\widehat\iota_{\gamma,n}).
\]
Since $O_nH_n^{-1}O_n^\top$ is invertible,
\begin{equation}\label{eq:multiplier-rate-fixed}
\|\widehat\iota_{\gamma,n}-\iota_n^\star\|
\le
\big\|(O_nH_n^{-1}O_n^\top)^{-1}\big\|\|r_{\gamma,n}\|
=
\Oc(1/\gamma_{\smallertext{\rm P}}).
\end{equation}

Let $e_{\gamma,n}\coloneqq x_{\gamma,n}-x_n^\star$. Decompose
\[
e_{\gamma,n}=e_{\gamma,n}^{\parallel}+e_{\gamma,n}^{\perp},
\]
with
\[
O_ne_{\gamma,n}^{\parallel}=r_{\gamma,n},
\;
O_ne_{\gamma,n}^{\perp}=0,
\;
e_{\gamma,n}^{\parallel}\in\mathrm{range}(O_n^\top),
\;
e_{\gamma,n}^{\perp}\in\ker(O_n).
\]
Choose, for instance
\[
e_{\gamma,n}^{\parallel}=O_n^\top(O_nO_n^\top)^{-1}r_{\gamma,n},
\]
which is well defined by \Cref{lem:row-rank}. Then
\[
\|e_{\gamma,n}^{\parallel}\|
\le
\|O_n^\top(O_nO_n^\top)^{-1}\|\|r_{\gamma,n}\|
=
\Oc(1/\gamma_{\smallertext{\rm P}}).
\]
Let $P_{\ker}$ denote the orthogonal projector onto $\ker(O_n)$. Projecting the stationarity difference
\[
H_ne_{\gamma,n}=O_n^\top(\iota_n^\star-\widehat\iota_{\gamma,n})
\]
onto $\ker(O_n)$ yields $P_{\ker}H_ne_{\gamma,n}=0,$ hence
\[
P_{\ker}H_ne_{\gamma,n}^{\perp}=-P_{\ker}H_ne_{\gamma,n}^{\parallel}.
\]
Since the restriction of $H_n$ to $\ker(O_n)$ is negative definite, it is invertible there. Therefore
\[
\|e_{\gamma,n}^{\perp}\|
\le
\big\|\big(H_n|_{\ker(O_n)}\big)^{-1}\big\|\|H_n\|\|e_{\gamma,n}^{\parallel}\|
=
\Oc(1/\gamma_{\smallertext{\rm P}}).
\]
Together with the bound on $e_{\gamma,n}^{\parallel}$, this proves
\[
\|x_{\gamma,n}-x_n^\star\|=\Oc(1/\gamma_{\smallertext{\rm P}}).
\]
Finally, since $g$ is quadratic with Hessian $H_n$, a Taylor expansion around $x_n^\star$ gives
\[
g(x_{\gamma,n})-g(x_n^\star)
=
\partial g(x_n^\star)\cdot e_{\gamma,n}
+\frac12 e_{\gamma,n}^\top H_n e_{\gamma,n}.
\]
Using $\partial g(x_n^\star)=-O_n^\top\iota_n^\star$ from \eqref{eq:KKT-constrained} and $\|e_{\gamma,n}\|=\Oc(1/\gamma_{\smallertext{\rm P}})$, we obtain
\[
\big|g(x_{\gamma,n})-g(x_n^\star)\big|
\le
\|O_n^\top\iota_n^\star\|\|e_{\gamma,n}\|
+\frac12\|H_n\|\|e_{\gamma,n}\|^2
=
\Oc(1/\gamma_{\smallertext{\rm P}}),
\]
which proves $(iv)$.
\end{proof}

\begin{proof}[Proof of Proposition \ref{prop:explicit-column}]
\emph{Step 1: FOC in $z^{\smallertext Q}$.}
Let $\mu_n=(\mu_{1,n},\dots,\mu_{n,n})^\top$ be the Lagrange multipliers for the column constraints and let $\vartheta_n$ be the multiplier for $\sum_{k=1}^n z^{\smallertext S,k}=0$. For fixed row $i\in\{1,\dots,n\}$, the first-order condition yields, coordinatewise in $j\in\{1,\dots,n\}$,
\[
\gamma_i\nu_j^2 z^{\smallertext Q,i,j}+\frac{1}{c_i}\mathbf{1}_{\{j=i\}} z^{\smallertext Q,i,i}
+\frac{\gamma_i\sigma}{\sqrt n}z^{\smallertext S,i}\rho_j\nu_j
= \frac{1}{c_i}\mathbf{1}_{\{j=i\}}+n\mu_{j,n}.
\]
In vector form,
\[
\mathcal H_{i,n} q_i+\frac{\gamma_i\sigma}{\sqrt n}z^{\smallertext S,i}(\rho\odot\nu)
=\frac{1}{c_i}e_i+n\mu_n,
\]
where $q_i\coloneqq(z^{\smallertext Q,i,1},\dots,z^{\smallertext Q,i,n})^\top$. Thus
\begin{equation}
\label{eq:row-solution}
q_i=\mathcal P_{i,n}\bigg(\frac{1}{c_i}e_i-\frac{\gamma_i\sigma}{\sqrt n}z^{\smallertext S,i}(\rho\odot\nu)+n\mu_n\bigg).
\end{equation}

\medskip
\emph{Step 2: enforce the column constraints.}
For a fixed column $j\in\{1,\dots,n\}$, summing \eqref{eq:row-solution} over $i$ gives
\[
\sum_{i=1}^n z^{\smallertext Q,i,j}
=\sum_{i=1}^n p_{i,j}\bigg(n\mu_{j,n}+\frac{1}{c_i}\mathbf{1}_{\{i=j\}}-\frac{\gamma_i\sigma}{\sqrt n}z^{\smallertext S,i}\rho_j\nu_j\bigg)=1.
\]
Using $\sum_{i\neq j}z^{\smallertext S,i}=-z^{\smallertext S,j}$, one obtains
\[
n\mu_{j,n}\Theta_{j,n}+\zeta_j+\frac{\sigma}{\sqrt n}a_j z^{\smallertext S,j}=1,
\]
which proves \eqref{eq:mu-column}.

\medskip
\emph{Step 3: FOC in $z^{\smallertext S}$.}
For each $k\in\{1,\dots,n\}$,
\[
\gamma_k\sigma^2 z^{\smallertext S,k}+\frac{\gamma_k\sigma}{\sqrt n}\sum_{j=1}^n \rho_j\nu_j z^{\smallertext Q,k,j}=n\vartheta_n.
\]
Insert \eqref{eq:row-solution} with $i=k$
\[
\sum_{j=1}^n \rho_j\nu_j z^{\smallertext Q,k,j}
=\frac{\rho_k\nu_k}{c_k}p_{k,k}
-\frac{\gamma_k\sigma}{\sqrt n}z^{\smallertext S,k}\upsilon_{k,n}
+n\sum_{j=1}^n M_{k,j}\mu_{j,n}.
\]
Hence
\[
\gamma_k\sigma^2\varphi_{k,n}z^{\smallertext S,k}
+\gamma_k\sigma\sqrt n\sum_{j=1}^n M_{k,j}\mu_{j,n}
+\frac{\gamma_k\sigma}{\sqrt n}\frac{\rho_k\nu_k}{c_k}p_{k,k}
=n\vartheta_n.
\]
Now substitute \eqref{eq:mu-column}:
\[
\sqrt n\sum_{j=1}^n M_{k,j}\mu_{j,n}
=\frac{1}{\sqrt n}\sum_{j=1}^n\frac{1-\zeta_j}{\Theta_{j,n}}M_{k,j}
-\frac{\sigma}{n}\sum_{j=1}^n\frac{a_jM_{k,j}}{\Theta_{j,n}}z^{\smallertext S,j}.
\]
Therefore
\[
z^{\smallertext S,k}
-\sum_{j=1}^n \frac{1}{\varphi_{k,n}}w_{j,n}a_jM_{k,j}z^{\smallertext S,j}
=\frac{n}{\gamma_k\sigma^2\varphi_{k,n}}\vartheta_n
-\frac{1}{\varphi_{k,n}}\Bigg(
\frac{1}{\sigma\sqrt n}\sum_{j=1}^n\frac{1-\zeta_j}{\Theta_{j,n}}M_{k,j}
+\frac{1}{\sigma\sqrt n}\frac{\rho_k\nu_k}{c_k}p_{k,k}
\Bigg).
\]
This is exactly $(\mathrm{I}_n-L_n)z^{\smallertext S}=V_n\vartheta_n+U_n.$

\medskip
\emph{Step 4: resolvent positivity via a weighted Perron--Frobenius argument.}
Since $a_j=\rho_j\zeta_j/\nu_j$, we have
\[
(L_n)_{k,j}=\frac{\rho_j^2\zeta_j p_{k,j}}{n\varphi_{k,n}\Theta_{j,n}}\ge 0.
\]
Moreover, for each $j$,
\[
(L_n^\top\varphi_n)_j
=\sum_{k=1}^n \varphi_{k,n}(L_n)_{k,j}
=\frac{\rho_j^2\zeta_j}{n\Theta_{j,n}}\sum_{k=1}^n p_{k,j}
=\frac{\rho_j^2\zeta_j}{n}.
\]
On the other hand,
\[
\varphi_{j,n}
=1-\frac{\gamma_j}{n}\upsilon_{j,n}
=1-\frac1n\sum_{m=1}^n\rho_m^2+\frac{\rho_j^2\zeta_j}{n},\; \text{\rm because}\; \frac{\gamma_j\nu_j^2}{\gamma_j\nu_j^2+1/c_j}=1-\zeta_j.
\]
Hence
\[
\varphi_{j,n}-(L_n^\top\varphi_n)_j
=1-\frac1n\sum_{m=1}^n\rho_m^2>0,
\]
since $|\rho_m|<1$ for every $m$. Therefore $L_n^\top\varphi_n<\varphi_n$. Since $L_n\ge0$ and $\varphi_n>0$, the Collatz--Wielandt characterisation of the Perron root for nonnegative matrices (see, \emph{e.g.}, \citeauthor*{horn2013matrix} \cite[Chapter~8]{horn2013matrix}) yields
\[
\spr(L_n)=\spr(L_n^\top)<1.
\]
Consequently
\[
(\mathrm{I}_n-L_n)^{-1}=\sum_{m=0}^\infty L_n^m\ge 0,
\]
so $(\mathrm{I}_n-L_n)$ is a nonsingular $M$-matrix.

\medskip
\emph{Step 5: determine $\vartheta_n^\star$ and conclude.}
The affine constraint $\mathrm{1}_n^\top z^{\smallertext S}=0$ gives
\[
0=\mathrm{1}_n^\top z^{\smallertext S}
=\mathrm{1}_n^\top(\mathrm{I}_n-L_n)^{-1}(V_n\vartheta_n+U_n),
\]
so
\[
\vartheta_n^\star
=-\frac{\mathrm{1}_n^\top(\mathrm{I}_n-L_n)^{-1}U_n}{\mathrm{1}_n^\top(\mathrm{I}_n-L_n)^{-1}V_n}
=-\frac{\mathrm{1}_n^\top v_n}{\mathrm{1}_n^\top u_n}.
\]
This yields $\bar z_n^{\smallertext S}=u_n\vartheta_n^\star+v_n$, and \eqref{eq:zQ-column} follows from \eqref{eq:row-solution} with $\mu_n=\bar\mu_n$ and $z^{\smallertext S}=\bar z_n^{\smallertext S}$. Uniqueness holds by strict concavity of $g$ on the affine space $\frF_n$.
\end{proof}

\section{Numerical parameters}\label{sec:numerical-params}

We report the parameter sets used to generate the figures. Homogeneous runs use the calibration in \Cref{tab:params-homo}. The three heterogeneous penalty-limit figures---\Cref{fig:hetero-penalty-limit,fig:hetero-zS-cross}---use the six-agent calibration in \Cref{tab:params-heteroA}. The diagonal-flip figure uses a separate four-agent calibration, reported in \Cref{tab:diag-flip-params}; it is therefore \emph{not} generated from the heterogeneous baseline in \Cref{tab:params-heteroA}.

\begin{table}[ht!]
\centering
\caption{Homogeneous calibration used for \Cref{fig:homo-all}.\label{tab:params-homo}}
\begin{tabular}{lcc}
\toprule
\small Parameter &\small  Symbol &\small  Value \\
\midrule
\small Number of agents &\small  $n$ & \small $6$ \\
\small Agent risk aversion &\small  $\gamma$ &\small  $1.0$ \\
\small Effort cost scale & \small $c$ & \small $1.2$ \\
\small Signal scale & \small $\nu$ &\small  $1.0$ \\
\small Correlation with $S$ & \small $\rho$ &\small  $0.6$ \\
\small Asset volatility &\small  $\sigma$ &\small  $1.0$ \\
\small Principal risk aversion & \small $\gamma_\smallertext{\rm P}$ & \small varied on $[0,10]$ \\
\bottomrule
\end{tabular}

\vspace{0.5em}
\footnotesize
Grid used in the figures: $\gamma_\smallertext{\rm P}\in\{0,0.25,0.5,\dots,10\}$.
\end{table}

\begin{table}[ht!]
\centering
\caption{\footnotesize Six-agent heterogeneous calibration used for \Cref{fig:hetero-penalty-limit,fig:hetero-zS-cross}.\label{tab:params-heteroA}}
\begin{tabular}{lcccccc}
\toprule
\small $i$ &\small  $1$ &\small  $2$ &\small  $3$ &\small  $4$ &\small  $5$ &\small  $6$ \\
\midrule
\small $c_i$      & \small $1.2$ &\small  $1.0$ &\small  $1.5$ &\small  $1.3$ &\small  $2.0$ &\small  $2.5$ \\
\small $\gamma_i$ &\small  $1.0$ &\small  $1.2$ &\small  $0.9$ &\small  $1.1$ &\small  $1.3$ &\small  $1.0$ \\
\small $\nu_i$    &\small  $1.0$ &\small  $0.9$ &\small  $1.1$ &\small  $1.0$ & \small $1.2$ &\small  $0.8$ \\
\small $\rho_i$   & \small $0.75$ & \small $0.55$ & \small $0.45$ & \small $0.35$ & \small $0.15$ &\small  $0.25$ \\
\midrule
\small $\sigma$   &   \multicolumn{6}{c}{\small $1.0$} \\
\small $n$        &  \multicolumn{6}{c}{\small $6$} \\
\small $\gamma_\smallertext{\rm P}$ &  \multicolumn{6}{c}{\small varied on $[0,40]$} \\
\bottomrule
\end{tabular}

\vspace{0.5em}
\footnotesize
Grid used in the figures: $\gamma_{\rm P}\in\{0,1,2,\dots,40\}$.
\end{table}

\begin{table}[ht!]
\centering
\caption{\footnotesize Dedicated four-agent calibration used for the diagonal-flip experiment.\label{tab:diag-flip-params}}
\begin{tabular}{lcccc}
\toprule
 & \small Agent 1 & \small Agent 2 & \small Agent 3 & \small Agent 4 \\
\midrule
\small $c_i$      &\small  $0.001679$ & \small $0.021845$ & \small $25.315806$ & \small $0.009048$ \\
\small $\gamma_i$ & \small $2.328781$ & \small $1.944496$ & \small $0.538463$ &\small  $1.723001$ \\
\small $\nu_i$    &\small  $1.404647$ & \small $1.945208$ & \small $0.480164$ &\small  $1.553918$ \\
\small $\rho_i$   & \small $-0.746539$ & \small $-0.894259$ & \small $-0.850954$ & \small $-0.605495$ \\
\midrule
\small $\sigma$   & \multicolumn{4}{c}{\small $1.0$} \\
\small $n$        & \multicolumn{4}{c}{\small $4$} \\
\bottomrule
\end{tabular}
\end{table}

\end{appendix}

{\small
\bibliography{bibliographyDylan}}

\end{document}